\def\confversion{0}
\def\ifconf{\ifnum\confversion=1}
\def\ifnotconf{\ifnum\confversion=0}

\documentclass[11 pt]{article}

\def\showauthornotes{0}
\def\showkeys{0}
\def\showdraftbox{0}

\usepackage{xspace,xcolor,enumerate}
\usepackage{amsmath,amssymb}
\usepackage{color,graphicx}

\ifnum\showkeys=1
\usepackage[color]{showkeys}
\fi

\definecolor{darkred}{rgb}{0.5,0,0}
\definecolor{darkgreen}{rgb}{0,0.5,0}
\definecolor{darkblue}{rgb}{0,0,0.5}

\ifnotconf
\usepackage[pdfstartview=FitH,pdfpagemode=None,colorlinks,linkcolor=darkred,filecolor=blue,citecolor=darkred,urlcolor=darkred,pagebackref]{hyperref}
\fi

\usepackage[T1]{fontenc}
\usepackage{mathtools,dsfont,bbm}
\usepackage{mathpazo}

\setlength{\topmargin}{-1 in} 
\setlength{\oddsidemargin}{0.25 in}
\setlength{\evensidemargin}{0.25 in} 

\setlength{\textwidth}{6 in}
\setlength{\textheight}{9 in} 

\setlength{\headsep}{0.75 in}
\setlength{\parindent}{0 in} 
\setlength{\parskip}{0.05 in}

\ifnum\showauthornotes=1
\newcommand{\Authornote}[2]{{\sf\small\color{red}{[#1: #2]}}}
\newcommand{\Authorcomment}[2]{{\sf \small\color{gray}{[#1: #2]}}}
\newcommand{\Authorfnote}[2]{\footnote{\color{red}{#1: #2}}}
\else
\newcommand{\Authornote}[2]{}
\newcommand{\Authorcomment}[2]{}
\newcommand{\Authorfnote}[2]{}
\fi

\ifnum\showdraftbox=1
\newcommand{\draftbox}{\begin{center}
  \fbox{%
    \begin{minipage}{2in}%
      \begin{center}%
        \begin{Large}%
          \textsc{Working Draft}%
        \end{Large}\\
        Please do not distribute%
      \end{center}%
    \end{minipage}%
  }%
\end{center}
\vspace{0.2cm}}
\else
\newcommand{\draftbox}{}
\fi


\newtheorem{theorem}{Theorem}[section]

\newtheorem{definition}[theorem]{Definition}
\newtheorem{lemma}[theorem]{Lemma}

\newtheorem{corollary}[theorem]{Corollary}
\newtheorem{claim}[theorem]{Claim}


\def\FullBox{\hbox{\vrule width 6pt height 6pt depth 0pt}}

\def\qed{\ifmmode\qquad\FullBox\else{\unskip\nobreak\hfil
\penalty50\hskip1em\null\nobreak\hfil\FullBox
\parfillskip=0pt\finalhyphendemerits=0\endgraf}\fi}

\def\qedsketch{\ifmmode\Box\else{\unskip\nobreak\hfil
\penalty50\hskip1em\null\nobreak\hfil$\Box$
\parfillskip=0pt\finalhyphendemerits=0\endgraf}\fi}

\ifnotconf
\newenvironment{proof}{\begin{trivlist} \item {\bf Proof:~~}}
   {\qed\end{trivlist}}
\fi

\newenvironment{proofof}[1]{\begin{trivlist} \item {\bf Proof
#1:~~}}
  {\qed\end{trivlist}}


\def\to{\rightarrow}
\def\eps{\varepsilon}
\def\epsilon{\varepsilon}

\def\eps{\epsilon}

\def\phi{\varphi}
\def\cal{\mathcal}

\def\implies{\Rightarrow}

\newcommand{\defeq}{\stackrel{\mathrm{def}}=}

\renewcommand{\bar}{\overline} 


\newcommand{\ie}{i.e.,\xspace}
\newcommand{\eg}{e.g.,\xspace}
\newcommand{\etal}{et al.\xspace}

\newcommand{\mper}{\,.}
\newcommand{\mcom}{\,,}

\newcommand{\R}{{\mathbb R}}

\newcommand{\N}{{\mathbb{N}}}

\newcommand{\Q}{{\mathbb{Q}}}

\newcommand{\B}{\{0,1\}\xspace}

\newcommand{\indicator}[1]{\mathds{1}_{\{#1\}}}


\usepackage{nicefrac}

\let\nfrac=\nicefrac


\newcommand{\abs}[1]{\ensuremath{\left\lvert #1 \right\rvert}}

\newcommand{\norm}[1]{\ensuremath{\left\lVert #1 \right\rVert}}

%


%
\def\bfu{{\bf u}}

%

%


\newcommand{\qary}{[q]}

\newcommand{\vartwo}[2]{x_{(#1, #2)}}
\newcommand{\vartwoempty}{x_{(\emptyset,\emptyset)}}

\newcommand{\Esymb}{\mathbb{E}}
\newcommand{\Psymb}{\mathbb{P}}

\DeclareMathOperator*{\ExpOp}{\Esymb}

\DeclareMathOperator*{\ProbOp}{\Psymb}
\renewcommand{\Pr}{\ProbOp}

\newcommand{\prob}[1]{\Pr\left[{#1}\right]}
\newcommand{\Prob}[2]{\Pr_{{#1}}\left[{#2}\right]}
\newcommand{\ex}[1]{\ExpOp\left[{#1}\right]}
\newcommand{\Ex}[2]{\ExpOp_{{#1}}\left[{#2}\right]}

\def\expop{\ExpOp}


\newfont{\inhead}{eufm10 scaled\magstep1}
\newcommand{\deffont}{\sf}

\newcommand{\calC}{{\cal C}}

\newcommand{\calE}{{\cal E}}

\newcommand{\calH}{{\cal H}}

\newcommand{\calP}{{\cal P}}

\newcommand{\suchthat}{{\;\; : \;\;}}

\DeclareMathOperator\supp{Supp}


\newcommand{\opt}{{\sf OPT}\xspace}
\newcommand{\sdpopt}{{\sf FRAC}\xspace}
\newcommand{\lpopt}{{\sf FRAC}\xspace}



\newcommand{\problemmacro}[1]{\textsf{#1}}

\newcommand{\maxcsp}{\problemmacro{MAX-CSP}\xspace}
\newcommand{\maxkcsp}{\problemmacro{MAX k-CSP}\xspace}
\newcommand{\maxkcspq}{\problemmacro{MAX k-CSP}$_q$\xspace}

\newcommand{\maxthreesat}{\problemmacro{MAX 3-SAT}\xspace}
\newcommand{\maxthreexor}{\problemmacro{MAX 3-XOR}\xspace}

\newcommand{\maxcut}{\problemmacro{MAX-CUT}\xspace}

\newcommand{\vertexcover}{\problemmacro{Minimum Vertex Cover}\xspace}

\newcommand{\inparen}[1]{\left(#1\right)}             
\newcommand{\inbraces}[1]{\left\{#1\right\}}           
\newcommand{\insquare}[1]{\left[#1\right]}             


\newcommand{\Lovasz}{Lov\'asz\xspace}

\newcommand{\Bollobas}{Bollob\'as\xspace}

\newcommand{\mtnote}{\Authornote{MT}}

\usepackage[toc,page]{appendix}
\usepackage[capitalise,nameinlink]{cleveref}
\usepackage{thmtools}
\usepackage{thm-restate}

\Crefname{claim}{Claim}{Claims}



\newcommand{\sat}{{\sf sat}}

\renewcommand{\defeq}{:=}

\newcommand{\bnu}{\bar{\nu}}

\newcommand{\dist}{{\cal D}}
\newcommand{\dzero}{\dist^{(0)}}
\newcommand{\treedist}{\overline{\cal D}}
\newcommand{\component}[1]{\mathcal{C}\inparen{#1}}

\newcommand{\Hyp}{\calH_k\inparen{m,n,n_0,\Gamma}}
\newcommand{\bigoh}{\operatorname{O}}

\newcommand{\dmu}{\ensuremath{d_\mu}}

\newcommand{\Vtx}{\operatorname{V}}
\newcommand{\Edg}{\operatorname{E}}

\newcommand{\cl}{\operatorname{cl}}
\newcommand{\clR}{\cl_R}

\newcommand{\ttg}{\mathtt{g}}
\newcommand{\girth}{\ttg}
\newcommand{\sgn}{\operatorname{sign}}
\newcommand{\littleoh}{\operatorname{o}}

\title{From Weak to Strong LP Gaps for all CSPs}
\author{
Mrinalkanti Ghosh\thanks{%
Toyota Technological Institute at Chicago. 
{\tt mkghosh@ttic.edu}
}
\and
Madhur Tulsiani\thanks{%
    Toyota Technological Institute at Chicago.  {\tt madhurt@ttic.edu.}
  }%
}

\begin{document}
\sloppy
\maketitle
\draftbox
\begin{abstract}
We study the approximability of constraint satisfaction problems (CSPs) by linear programming (LP) 
relaxations. 
We show that for every CSP, the approximation obtained by a basic LP
relaxation, is no weaker than the approximation obtained using relaxations given by 
$\Omega\inparen{\frac{\log n}{\log \log n}}$ levels of the Sherali-Adams hierarchy on instances of
size $n$. 

\smallskip
It was proved by Chan \etal [FOCS 2013] that any polynomial size LP extended formulation is no stronger
than relaxations obtained by a super-constant levels of the Sherali-Adams hierarchy.
Combining this with our result also implies that any polynomial size LP extended
formulation is no stronger than the basic LP.

\smallskip
Using our techniques, we also simplify and strengthen the result by Khot \etal [STOC 2014] on
(strong) approximation resistance for LPs. They provided a necessary and sufficient condition under
which $\Omega(\log \log n)$ levels of the Sherali-Adams hierarchy cannot achieve an approximation
better than a random assignment. We simplify their proof and strengthen the bound to
$\Omega\inparen{\frac{\log n}{\log \log n}}$ levels.
\end{abstract}
%


\section{Introduction}\label{sec:intro}

Given a finite alphabet $\qary = \inbraces{0, \ldots, q-1}$ and a predicate $f: \qary^k \to \B$, an
instance of the problem $\maxkcsp(f)$ consists of (say) $m$ constraints over a set of $n$ variables
$x_1, \ldots, x_n$ taking values in the set $[q]$. 
%
Each constraint $C_i$ is of the form
$f(x_{i_1}+b_{i_1}, \ldots, x_{i_k} + b_{i_k})$ for some $k$-tuple of variables $(x_{i_1}, \ldots
x_{i_k})$ and $b_{i_1}, \ldots, b_{i_q}  \in \qary$, and the addition is taken to be modulo
$q$. We say an assignment $\sigma$ to the variables satisfying the constraint $C_i$ if
$C_i(\sigma(x_{i_1}), \ldots, \sigma(x_{i_k})) = 1$. Given an instance $\Phi$ of the problem, the
goal is to find an assignment $\sigma$ to the variables satisfying as many constraints as possible.
The approximability of the $\maxkcsp(f)$ problem has been extensively studied for various predicates $f$
(see \eg \cite{Hastad07:survey} for a survey), and special cases include several interesting and
natural problems such as \maxthreesat, \maxthreexor and \maxcut.

A topic of much recent interest has been the efficacy of Linear Programming (LP) and Semidefinite
Programming (SDP) relaxations. For a given instance $\Phi$ of $\maxkcsp(f)$, let $\opt(\Phi)$ denote
the \emph{fraction} of constraints satisfied by an optimal assignment, and let $\lpopt(\Phi)$ denote
the value of the convex (LP/SDP) relaxation for the problem. Then, the performance guarantee of this
algorithm is given by the {\deffont integrality gap} which equals the supremum of
$\frac{\lpopt(\Phi)}{\opt(\Phi)}$, over all instances $\Phi$. 

The study of unconditional lower bounds for general families of LP relaxations was initiated by
Arora, \Bollobas and \Lovasz \cite{AroraBL02} (see also \cite{AroraBLT06}). They studied the
\Lovasz-Schrijver \cite{LoS91} LP hierarchy and proved lower bounds on the integrality gap for
\vertexcover (their technique also yields similar bounds for \maxcut).
De la Vega and Kenyon-Mathieu
\cite{delaVegaK07} and Charikar, Makarychev and Makarychev \cite{CharikarMM09} proved a lower bound
of $2-o(1)$ for the integrality gap of the LP relaxations for \maxcut given respectively by
$\Omega(\log \log n)$ and $n^{\Omega(1)}$ levels  of the Sherali-Adams LP hierarchy \cite{SA90}. 
Several follow-up works have also shown lower bounds for various other special cases of the \maxkcsp
problem, both for LP and SDP hierarchies \cite{AroraAT05, Schoenebeck08, Tulsiani09, RaghavendraS09,
  BenabbasGMT12, BarakCK15}.

A recent result by Chan \etal \cite{ChanLRS13} shows a connection between strong lower bounds for the
Sherali-Adams hierarchy, and lower bounds on the size of LP extended formulations for the
corresponding problem. 
%
%
In fact, their result proved a connection
not only for a lower bound on the worst case integrality gap, but for the entire
\emph{approximability curve}. We say that $\Phi$ is {\deffont $(c,s)$-integrality gap instance} for a
relaxation of $\maxkcsp(f)$, if we have $\lpopt(\Phi) \geq c$ and $\opt(\Phi) < s$. They showed
that for any fixed $t \in \N$, if there exist $(c,s)$-integrality gap instances of size $n$ for the
relaxation given by $t$ levels of the Sherali-Adams hierarchy, then for all $\eps > 0$ and
sufficiently large $N$, there exists a $(c -\eps,s+\eps)$ integrality gap instance of size (number
of variables) $N$, 
for any linear extended formulation of size at most $N^{t/2}$. They also give a tradeoff (described
later) when $t$ is a function of $n$, which was recently improved by Kothari \etal \cite{KothariMR16}.

We strengthen the above results by showing that for all $c,s \in [0,1]$,  
$(c,s)$-integrality gap instances for a ``basic LP'' can be used to construct $(c-\eps,s+\eps)$ integrality
gap instances for $\Omega_{\eps}\inparen{\frac{\log n}{\log \log n}}$ levels of the Sherali-Adams
hierarchy. The basic LP uses only a subset of the constraints in the relaxation given by $k$ levels
of the Sherali-Adams hierarchy for $\maxkcsp(f)$. In particular, this shows that a lower bound on
the integrality gap for the basic LP, implies a similar lower bound on the integrality gap of any
polynomial size extended formulation. We note that both the above results 
and our result apply to all $f, q$ and all $c,s \in [0,1]$.

\paragraph{Comparison with (implications of) Raghavendra's UGC hardness result.} 
A remarkable result by Raghavendra \cite{Raghavendra08} shows that a $(c,s)$-integrality gap
instance for a ``basic SDP'' relaxation of $\maxkcsp(f)$ implies hardness of distinguishing
instances $\Phi$ with $\opt(\Phi) < s$ from instances with $\opt(\Phi) \geq c$, assuming the Unique
Games Conjecture (UGC) of Khot \cite{Khot02:unique}. The basic SDP considered by
Raghavendra involves moments for all pairs of variables, and all subsets of variables included in a
constraint. 
The basic LP we consider is weaker than this SDP and does not contain the positive
semidefiniteness constraint.

Combining Raghavendra's result with known
constructions of integrality gaps for Unique Games by Raghavendra and Steurer \cite{RaghavendraS09},
and by Khot and Saket \cite{KhotS09}, one can obtain a result qualitatively similar to ours, for the
mixed hierarchy. In particular, a $(c,s)$ integrality gap for the basic SDP implies a
$(c-\eps,s+\eps)$ integrality gap for $\Omega((\log \log n)^{1/4})$ levels of the mixed hierarchy. 

Note however, that the above result is incomparable to our result, since it starts with stronger
hypothesis (a basic SDP gap) and yields a gap for the mixed hierarchy as opposed to the
Sherali-Adams hierarchy. While the above can also be used to derive lower bounds for linear extended
formulations, one needs to start with an SDP gap instance to derive an LP lower bound. The basic SDP
is known to be provably stronger than the basic LP for several problems including various
2-CSPs. Also, for the worst case $f$ for $q=2$, the integrality gap of the basic SDP is $O(2^k/k)$
\cite{CMM07:csp}, while that of the basic LP is $2^{k-1}$.

A recent result by Khot and Saket \cite{KhotS15} shows a connection between the integrality
gaps for the basic LP and those for the basic SDP. 
They prove that, assuming the UGC, a $(c,s)$ integrality gap instance for the basic LP implies an
NP-hardness of distinguishing instances $\Phi$ with $\opt(\Phi) \geq \Omega\inparen{\frac{c}{k^3
    \cdot \log(q)}}$ from instances with $\opt(\Phi) \leq 4s$.
Their result also shows that a $(c,s)$ integrality gap instance for the basic LP can be used to
produce  a $\inparen{\Omega\inparen{\frac{c}{k^3 \cdot \log(q)}}, 4s}$ integrality gap instance for
the basic SDP,  and hence for $\Omega((\log \log n)^{1/4})$ levels of the mixed hierarchy.

\paragraph{Other related work.}
The power of the basic LP for solving valued CSPs \emph{to optimality} has been studied in several
previous works. These works consider the problem of minimizing the penalty for unsatisfied
constraints, where the penalties take values in $\Q \cup \inbraces{\infty}$. Also, they study the
problem not only in terms of single predicate $f$, but rather in terms of the constraint
language generated by a given set of (valued) predicates. 

It was shown by Thapper and \v{Z}ivn\'{y} \cite{ThapperZ13} that when the penalties are
finite-valued, if the problem of finiding the optimum solution cannot be solved by the basic LP,
then it is NP-hard. Kolmogorov, Thapper and \v{Z}ivn\'{y} \cite{KolmogorovTZ15} give a
characterization of CSPs where the problem of minimizing the penalty for unsatisfied constraints can
be solved \emph{exactly} by the basic LP. Also, a recent result by Thapper and \v{Z}ivn\'{y}
\cite{ThapperZ16} shows the valued CSP problem for a constraint language can be solved to optimality
by a bounded number of levels of the Sherali-Adams hierarchy if and only if it can be solved by a
relaxation obtained by augmenting the basic LP with contraints implied by three levels of the
Sherali-Adams hierarchy. However, the above works only consider the case when the LP gives an exact
solution, and do not focus on approximation.

The techniques from \cite{CharikarMM09} used in our result, were also extended by Lee \cite{Lee15}
to prove a hardness for the Graph Pricing problem. Kenkre \etal \cite{KenkrePPS15} also applied these to
show the optimality of a simple LP-based algorithm for Digraph Ordering.

\subsubsection*{Our results}
Our main result is the following.
%
%
\begin{restatable}{reptheorem}{maintheorem}
\label{thm:main}
Let $f: \qary^k \to \B$ be any predicate. Let $\Phi_0$ be a $(c,s)$ integrality gap instance for basic
LP relaxation of \maxkcsp($f$). Then for every $\eps > 0$, there exists $c_{\eps} > 0$ 
such that for infinitely many $N \in \N$, there exist $(c-\eps, s+\eps)$ integrality gap
instances of size $N$ for the LP relaxation given by $c_{\eps} \cdot \frac{\log N}{\log \log N}$
levels of the Sherali-Adams hierarchy.
\end{restatable}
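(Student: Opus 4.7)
The plan is to amplify $\Phi_0$ into a large, high-girth instance $\Phi$ on $N$ variables and then lift the basic LP solution of $\Phi_0$ to a Sherali-Adams solution of $\Phi$. Concretely, I would ``blow up'' $\Phi_0$ by replacing each of its $n_0$ variables $x_i$ with a block $B_i$ of $N/n_0$ fresh copies, and replacing each constraint $C = f(x_{i_1}+b_1,\ldots,x_{i_k}+b_k)$ of $\Phi_0$ by many new constraints $f(y_1+b_1,\ldots,y_k+b_k)$ with $y_\ell \in B_{i_\ell}$. The new scopes are chosen in a balanced way (either via an explicit construction or by random sampling followed by removal of short cycles) so that the resulting constraint hypergraph on $V(\Phi)=\bigsqcup_i B_i$ has girth $g = \Omega(\log N/\log\log N)$ while every variable is incident to roughly the same number of constraints.

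For completeness, I would produce Sherali-Adams local distributions of order $r = \lfloor g/2 \rfloor$ by exploiting the fact that for any $S \subseteq V(\Phi)$ with $|S|\le r$, the subhypergraph consisting of $S$ together with every constraint of $\Phi$ whose scope lies inside $S$ is a forest. On each tree component I build a distribution by choosing an arbitrary root, sampling it from the single-variable marginal of $\mu_C$ for any incident constraint $C$, and then traversing tree edges, at each new variable sampling it conditional on the already-sampled variable in the same scope using the joint distribution $\mu_{C'}$ from the basic LP solution of $\Phi_0$. Taking the product over components gives $\mu_S$ on $\qary^S$, and the two things to verify are (a) well-definedness independent of the choice of root, which follows from the chain rule together with the fact that the basic LP's single-variable marginals agree across constraints sharing a variable, and (b) consistency under restriction: marginalizing $\mu_{S'}$ to $S \subset S'$ reproduces $\mu_S$, which again follows from the tree structure. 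Every constraint $C' \in \Phi$, being a ``copy'' of some $C \in \Phi_0$, is then satisfied under $\mu$ restricted to its scope with exactly the probability that $C$ is satisfied under $\mu_C$, so the Sherali-Adams value is at least $c$.

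For soundness I argue that a random choice of constraint scopes in the blowup makes $\opt(\Phi) \le s+\eps$ with high probability. Any assignment $\sigma : V(\Phi) \to \qary$ induces an empirical type $\nu_i \in \Delta(\qary)$ on each block $B_i$, and the expected fraction of constraints satisfied by $\sigma$ over the random scope choices equals $\tfrac{1}{|\Phi_0|}\sum_{C\in\Phi_0}\Pr_{\sigma_\ell \sim \nu_{i_\ell}}[C\text{ satisfied}]$. This quantity is multilinear in $(\nu_i)$ over the product of simplices, and hence is maximized at a vertex, i.e., by some deterministic assignment to $\Phi_0$; by the hypothesis $\opt(\Phi_0) < s$ this maximum is strictly less than $s$. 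A Chernoff bound gives per-$\sigma$ concentration, and a union bound over profiles (polynomially many after $\eps$-discretization) combined with the entropy-based count of the $\sigma$ sharing a profile closes the argument provided $M = \Omega_\eps(N \log q)$ constraints are sampled.

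The technical heart of the argument is the consistency check for the local distributions in the completeness step: the basic LP supplies only the constraint-local marginals $\mu_C$, not a globally consistent joint distribution, and the high-girth property is precisely what allows the tree-walk to glue them together coherently, since cycles in the constraint hypergraph would create conflicting conditional distributions that the basic LP cannot reconcile. A second, quantitative, hurdle is engineering the blowup so that high girth coexists with the degree-regularity needed for the soundness concentration bound; balancing these forces the girth down from $\Theta(\log N)$ to $\Theta(\log N/\log\log N)$, which is what determines the Sherali-Adams level appearing in the conclusion.
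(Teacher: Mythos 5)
Your construction and soundness argument match the paper's (the same block blow-up, the same ``empirical type'' reduction to an assignment for $\Phi_0$, Chernoff plus union bound). The gap is in the completeness step, and it is exactly the obstacle this paper is built to overcome. Your claim (b) --- that marginalizing $\mu_{S'}$ to $S\subset S'$ reproduces $\mu_S$ ``again from the tree structure'' --- is false for the distributions you define. Take $u,v\in S$ lying in \emph{different} tree components of the subhypergraph induced by constraints with scope inside $S$, but in the \emph{same} component of the subhypergraph induced on a larger set $S'$ (a connecting path exists whose intermediate variables lie in $S'\setminus S$). Your $\mu_S$ is a product over components, so $u$ and $v$ are independent under $\mu_S$; but under $\mu_{S'}$ they are correlated, since the tree-walk on the component of $S'$ containing them propagates conditionals along the connecting path. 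Hence $\mu_{S'}|_S\neq\mu_S$. Girth does not save you: girth guarantees the induced structure is a forest, but it does not prevent components from merging as the set grows, and it is precisely this merging that breaks consistency. Prior work (de la Vega--Kenyon-Mathieu, Khot et al.) patched this with noised propagation and an explicit correction from approximate to exact consistency, which costs a factor exponential in the number of levels and caps the result at $O(\log\log n)$ levels.

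The paper's fix, which is absent from your proposal, is to first sample a partition of $S$ into clusters of diameter $O(1/\mu)=o(\log n)$ \emph{in the whole hypergraph} $H$ (not just in the subhypergraph induced on $S$), using a \emph{consistent partitioning scheme}: a family of distributions $\{\calP_S\}$ over partitions with $\calP_{S|T}=\calP_T$, built from separating decompositions of a locally $\ell_2$-embeddable metric $\rho_\mu$ on $H$. Within each cluster one adds back all missing short paths (the cluster's closure is a tree by the girth bound) and only then propagates; consistency of the distributions is inherited from consistency of the partitions, and the constraints cut by the partition account for the $\eps$ loss in the LP value. Note also that your closing diagnosis of where $\log N/\log\log N$ comes from is off: it is not a girth-versus-regularity trade-off (the pruned hypergraph has girth $\Omega(\log N)$), but the $\sqrt{t}$ dimension dependence in the separation probability of the $\ell_2$ decomposition, which forces $\mu\cdot t$ to be small while $\mu\gtrsim(\log t+\log\log n)/\log n$. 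As written, your completeness argument does not establish a feasible level-$t$ Sherali--Adams solution for any $t>k$, so the proof is incomplete at its central step.
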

%
Combining the above with the connection between Sherali-Adams gaps and extended formulations by Chan
\etal \cite{ChanLRS13} yields the following corollary. 
The improved tradeoff by Kothari \etal \cite{KothariMR16} gives a better exponent for $\log N$ than $3/2$.
\begin{restatable}{repcorollary}{sizecorollary}
\label{cor:size-lower-bound}
Let $f: \qary \to \B$ be any predicate. Let $\Phi_0$ be a $(c,s)$ integrality gap instance for basic
LP relaxation of \maxkcsp($f$). Then for every $\eps > 0$, there exists $c_{\eps} > 0$ 
such that for infinitely $N \in \N$,  there exist $(c-\eps, s+\eps)$ integrality gap instances of
size $N$,  for every linear extended formulation of size at most $\exp\inparen{c_{\eps} \cdot
  \frac{(\log N)^{3/.2}}{(\log \log N)^{1/2}}}$.
\end{restatable}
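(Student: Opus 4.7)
The plan is to derive the corollary as a black-box composition of Theorem~\ref{thm:main} with the Sherali--Adams to LP extended formulation reduction of Chan \etal~\cite{ChanLRS13}, in the growing-$t$ regime as quantitatively sharpened by Kothari \etal~\cite{KothariMR16}. Both ingredients are available off the shelf, so no new technique is required; the corollary is really a parameter-tracking exercise on top of the main theorem.

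First, I would apply Theorem~\ref{thm:main} to $\Phi_0$ with accuracy parameter $\eps/2$ in place of $\eps$. This produces, for some constant $c'_\eps > 0$ and infinitely many $n$, a $(c - \eps/2,\, s + \eps/2)$-integrality gap instance on $n$ variables for the LP given by $t(n) = c'_\eps \cdot \log n / \log \log n$ rounds of the Sherali--Adams hierarchy. Then I would feed this family into the growing-$t$ version of the reduction of~\cite{ChanLRS13,KothariMR16} with accuracy $\eps/2$: given a $(c',s')$-gap of size $n$ for $t = t(n)$ levels of Sherali--Adams, this reduction outputs, for all sufficiently large $N$ (with $n$ chosen appropriately as a function of $N$), a $(c' - \eps/2,\, s' + \eps/2)$-gap of size $N$ that fools every linear extended formulation of the size dictated by the tradeoff. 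The composed instance then has value $(c - \eps,\, s + \eps)$ and size $N$, as desired.

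For the quantitative extended-formulation bound, plugging $t(n) = \Theta(\log n / \log \log n)$ into the tradeoff of~\cite{KothariMR16} yields the stated $\exp\!\inparen{c_\eps \cdot (\log N)^{3/2}/(\log \log N)^{1/2}}$ (with a better exponent on $\log N$ than $3/2$, as flagged in the paragraph above). The only work is parameter bookkeeping: the two halvings of $\eps$ compose to $\eps$; the ``infinitely many $N$'' quantifier is preserved because Theorem~\ref{thm:main} already supplies a Sherali--Adams gap for infinitely many $n$ while the reduction works for all sufficiently large $N$ relative to such an $n$; and the required number of Sherali--Adams levels $t(n)$ exceeds the arity $k$ for all large $n$, so the reduction's hypotheses are met. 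There is no genuine obstacle here; all of the new mathematical content lives inside Theorem~\ref{thm:main} itself, and the corollary is a one-line composition modulo these parameter checks.
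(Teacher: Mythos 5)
Your proposal is correct and is essentially the paper's own proof: the paper derives the corollary in one line by composing \cref{thm:main} with the growing-$r(n)$ form of the Chan \etal reduction (size bound $n^{(r(n))^2}$ on instances of size $N\le n^{10\,r(n)}$) and doing exactly the parameter arithmetic you describe. The only slip is attribution: the stated exponent $3/2$ already follows from the Chan \etal tradeoff itself, with Kothari \etal cited only as giving a further improvement, not as the source of the $3/2$ bound.
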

As an application of our methods, we also simplify and strengthen the approximation resistance
results for LPs proved by Khot \etal \cite{KhotTW14}. They studied predicates $f: \B^k \to \B$ and
provided a necessary and sufficient condition for the predicate to be {\deffont strongly
  approximation resistant} for the Sherali-Adams LP hierarchy. We say a predicate is strongly
approximation resistant if for all $\eps > 0$, it is hard to distinguish instances $\Phi$ for which
$\abs{\opt(\Phi) - \Ex{x \in \B^k}{f(x)}} \leq \eps$ from instances with $\opt(\Phi) \geq
1-\eps$. In the context of the Sherali-Adams hierarchy, they showed that when this condition is
satisfied, there exist instances $\Phi$ satisfying $\abs{\opt(\Phi) - \Ex{x \in
    \B^k}{f(x)}} \leq \eps$ and $\lpopt(\Phi) \geq 1-\eps$, where $\lpopt(\Phi)$ is the value of the
relaxation given by $O_{\eps}(\log \log n)$ levels of the Sherali-Adams hierarchy. We strengthen
their result (and provide a simpler proof) to prove the following.
\begin{theorem}\label{thm:ktw}
Let $f: \B^k \to \B$ be any predicate satisfying the condition for strong approximation resistance
for LPs, given by \cite{KhotTW14}. Then for every $\eps > 0$, there exists $c_{\eps} > 0$ 
such that infinitely many $N \in \N$, there exists an instance $\Phi$ of $\maxkcsp(f)$ of size $N$, satisfying
\[
\abs{\opt(\Phi) - \Ex{x \in \B^k}{f(x)}} ~\leq~ \eps
\qquad
\lpopt(\Phi) ~\geq~ 1-\eps \mcom
\]
where $\lpopt(\Phi)$ is the value of the relaxation given by $c_{\eps} \cdot \frac{\log N}{\log \log
  N}$ levels of the Sherali-Adams hierarchy.
\end{theorem}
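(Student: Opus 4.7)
The plan is to reduce Theorem~\ref{thm:ktw} to an application of Theorem~\ref{thm:main}. Specifically, it suffices to exhibit, for every $\eps > 0$, a basic LP integrality gap instance $\Phi_0$ of $\maxkcsp(f)$ satisfying $\lpopt(\Phi_0) \geq 1 - \eps/2$ and $\abs{\opt(\Phi_0) - \Ex{x \in \B^k}{f(x)}} \leq \eps/2$. Given such a $\Phi_0$, invoking Theorem~\ref{thm:main} with $c = 1 - \eps/2$ and $s = \Ex{x \in \B^k}{f(x)} + \eps/2$ produces an instance $\Phi$ of size $N$ whose Sherali-Adams value at level $c_{\eps} \cdot \log N / \log \log N$ is at least $1-\eps$ and whose true optimum lies within $\eps$ of the random-assignment value, which is precisely the statement to be proved.

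To construct $\Phi_0$, I would unpack the KTW strong LP approximation resistance condition: informally, it asserts the existence of a distribution $\mu$ on $\B^k$ supported essentially on $f^{-1}(1)$ whose single-coordinate marginals are uniform on $\B$ and whose pairwise marginals agree with those of the uniform distribution on $\B^k$. For a sparse random instance $\Phi_0$ drawn from a suitable model (e.g., $m = \Theta(n)$ constraints $f(x_{i_1}+b_{i_1},\ldots,x_{i_k}+b_{i_k})$ with variable tuples and shifts chosen at random, analogous to the random instances used in \cite{CharikarMM09, KhotTW14}), the basic LP solution that assigns the uniform distribution to every variable and a suitably shifted copy of $\mu$ to every constraint is feasible: the $1$- and $2$-wise marginals everywhere match those of the uniform distribution on $\B$, so all consistency constraints of the basic LP hold, giving $\lpopt(\Phi_0) \geq 1-\eps/2$. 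The complementary bound $\opt(\Phi_0) \leq \Ex{x \in \B^k}{f(x)} + \eps/2$ follows from a standard probabilistic argument: a random sparse instance is locally treelike, and on such instances no global assignment can do noticeably better than a random one.

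The main obstacle, and the step that deserves the most care, is matching the KTW condition to the precise definition of the basic LP used by Theorem~\ref{thm:main}. Concretely, one must verify that the pairwise and single-variable marginal constraints of the basic LP are exactly the structure that the KTW distribution $\mu$ certifies — so that $\mu$ yields an admissible local-distribution solution without any additional modification — and that the slack in the KTW condition translates cleanly into the $\eps/2$ slack in the basic LP value. Once this correspondence is established, the proof reduces to (i) instantiating $\Phi_0$ from the random model, (ii) reading off the basic LP gap from $\mu$ and the random-instance soundness bound, and (iii) feeding $\Phi_0$ into Theorem~\ref{thm:main}, absorbing the two $\eps/2$ losses into the final $\eps$. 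The simplification over \cite{KhotTW14} comes precisely from decoupling the weak (basic LP) gap — where the combinatorial condition enters — from the amplification to many Sherali-Adams levels, which is handled entirely by Theorem~\ref{thm:main}.
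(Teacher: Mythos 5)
Your high-level architecture is exactly the paper's: exhibit a basic LP gap instance $\Phi_0$ with $\lpopt(\Phi_0)\ge 1-\eps/2$ and $\abs{\opt(\Phi_0)-\Ex{x}{f(x)}}\le\eps/2$, then feed it to \cref{thm:main}. However, there is a genuine gap in the middle step: you have misstated the hypothesis. The KTW condition for strong approximation resistance for LPs is \emph{not} the existence of a single distribution $\mu$ supported on $f^{-1}(1)$ with uniform (or pairwise-uniform) marginals --- that is a much stronger condition under which resistance was already known. The actual condition is the existence of a \emph{vanishing measure} $\Lambda$ on the polytope $\calC(f)$ of first-moment vectors of distributions supported on $f^{-1}(1)$: a measure over many different supporting distributions, each with arbitrary (generally non-uniform and non-equal) biases, subject only to a global signed-measure cancellation condition. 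Consequently your proposed LP solution --- ``uniform distribution on every variable, a shifted copy of $\mu$ on every constraint'' --- is not available in general: there may be no supporting distribution with uniform marginals at all. This is precisely why the construction in \cite{KhotTW14} (and in the paper) buckets variables into groups $X_0,\ldots,X_{n_0}$ according to the \emph{magnitude of the bias} $\abs{\zeta_j}$ of the coordinate that generated them, and why one needs the bias-rounding step (\cref{claim:bias-rounding}): each constraint's local distribution $\nu(C)$ must be perturbed by $O(k\eps/\delta)$ in $\ell_1$ so that its marginals exactly equal the common bucket biases, which is what makes the singleton consistency constraints of the basic LP hold. Without this, your solution is simply infeasible for the basic LP.

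A second, related gap is in your soundness argument. The bound $\abs{\opt(\Phi_0)-\rho(f)}\le\eps/2$ is \emph{not} a generic consequence of sparsity or local tree-likeness: the KTW instance is deliberately biased (negations and variable buckets are chosen according to $\zeta\sim\Lambda$, not uniformly), and for an arbitrary measure on $\calC(f)$ an integral assignment could exploit this bias to beat (or be forced below) the random threshold. It is exactly the vanishing-measure condition --- the identical vanishing of the signed measures $\ExpOp_{\abs{S}=t}\ExpOp_{\pi}\Ex{b}{(\prod_i(-1)^{b_i})\hat f(S)\Lambda_{S,\pi,b}}$ --- that forces every product-form assignment to achieve value $\rho(f)\pm o(1)$ in expectation (Lemma 4.4 of \cite{KhotTW14}), after which concentration and a union bound finish the job as in \cref{lem:basic-lp-soundness}. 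So both halves of the gap for $\Phi_0$ genuinely use the vanishing-measure structure; once those are in place, your step (iii) of invoking \cref{thm:main} and absorbing the $\eps/2$ losses is exactly what the paper does.
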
 
As before, the above theorem also yields a corollary for extended formulations.

\subsubsection*{Proof overview and techniques}
\paragraph{The gap instance.} 
The construction of our gap instances is inspired by the construction by Khot \etal
\cite{KhotTW14}. They gave a generic construction to prove integrality gaps for any approximation
resistant predicate (starting from certificates of hardness in form of certain ``vanishing
measures''), and we use similar ideas to give a construction which can start from a basic LP
integrality gap instance as a certificate, to produce a gap instance for a large number of
levels. This construction is discussed in \cref{sec:sagaps}.

Given an integrality gap instance $\Phi_0$ on $n_0$ variables, we treat this as a
``template'' (as in Raghavendra \cite{Raghavendra08}) and generate a random instance using
this. Concretely, we generate a new instance $\Phi$ on $n_0$ sets of $n$ variables each. To
generate a contraint, we sample a random constraint  $C_0 \in \Phi_0$, and pick a variable randomly
from each of the sets corresponding to variables in $C_0$. 
Thus, the instances generated are
$n_0$-partite random hypergraphs, with each edge being generated according to a specified ``type''
(indices of sets to chose vertices from). 
Previous instances of gap constructions for LP and SDP
hierarchies were (hyper)graphs generated according to the model ${\cal G}_{n,p}$. However,
properties of random ${\cal G}_{n,p}$ hypergraphs easily carry over to our instances, and we collect
these properties in \cref{sec:hypergraphs}.

The above construction ensures that if the instance $\Phi_0$ does not have an assignment satisfying
more than an $s$ fraction of the constraints, then $\opt(\Phi) \leq s+\eps$ with high probability.
Also, it is well-known that providing a good LP solution to the relaxation given by $t$ levels of the
Sherali-Adams hierarchy is equivalent to providing distributions $\dist_S$ on $\qary^S$ for all sets
of variables  $S$ with $\abs{S} \leq t$, such that the distributions are consistent restricted to
subsets \ie for all $S$ with $\abs{S} \leq t$ and all $T \subseteq S$, we have $\dist_{S|T} =
\dist_{T}$. 
Thus, in our case, we need to produce such consistent local distributions such that the expected
probability that a random constraint $C \in \Phi$ is satisfied by the local distribution on the set of
variables involved in $C$ (which we denote as $S_C$) is at least $c - \eps$.

\paragraph{Local distributions from local structure.}
Most works on integrality gaps for CSPs utilize the local structure of random
hypergraphs to produce such distributions. Since the girth of a sparse random hypergraph is
$\Omega(\log n)$, any induced subgraph on $o(\log n)$ vertices is simply a forest. In case the
induced (hyper)graph $G_S$ on a set $S$ is a \emph{tree}, there is an easy distribution to consider: simply
choose an arbitrary root and propagate down the tree by sampling each child conditioned on its
parent. It is also easy to see that for $T \subseteq S$, if the induced (hyper)graph  $G_T$ is a
\emph{subtree} of $G_S$, then the distributions $\dist_S$ and $\dist_T$ produced as above
are consistent.

The extension of this idea to forests requires some care. One can consider extending the
distribution to forests by propagating independently on each tree in the forest.
However, if for $T \subseteq S$ $G_T$ is a forest while $G_S$ is a tree, then a pair of vertices
disconnected in $G_T$ will have no correlation in $\dist_T$ but may be correlated in $\dist_S$.
This was handled, for example, in \cite{KhotTW14} by adding noise to the propagation and using a
large ball $B(S)$ around $S$ to define $\dist_S$. Then, if two vertices of $T$ are disconnected in
$B(T)$ but connected in $B(S)$, then they must be at a large distance from each other. Thus, because
of the noise, the correlation between them (which is zero in $\dist_T$) will be very small in
$\dist_S$. However, correcting approximate consistency to exact consistency incurs a cost which is
exponential in the number of levels (\ie the sizes of the sets), which is what limits the results in
\cite{KhotTW14, delaVegaK07} to $O(\log \log n)$ levels. This also makes the proof a bit more
involved since it requires a careful control of the errors in consistency.

\paragraph{Consistent partitioning schemes.}
We resolve the above consistency issue by first partitioning the given set $S$ into a set of
clusters, each of which have diameter $\Delta_H = o(\log n)$ in the underlying hypergraph $H$.
Since each cluster has bounded diameter, it becomes a tree when we add all the missing paths
between any two vertices in the cluster. We then propagate independently on each
cluster (augmented with the missing paths). This preserves the correlation between any two vertices
in the same cluster, even if the path between them was not originally present in $G_S$.

Of course, the above plan requires that the partition obtained for $T \subseteq S$, is consistent
with the restriction to $T$ of partition obtained for the set $S$. In fact, we construct
distributions over partitions $\inbraces{\calP_S}_{\abs{S} \leq t}$, which satisfy the consistency
property $\calP_{S | T} = \calP_T$. These distributions over partitions, which we call {\deffont
  consistent partitioning schemes}, are constructed in \cref{sec:decompositions}.

In addition to being consistent, we require that the partitioning scheme cuts only a small number of
edges in expectation, since these contribute to a loss in the LP objective. We remark that such
low-diameter decompositions (known as \emph{separating} and \emph{padded} decompositions) have been
used extensively in the theory metric embeddings (see \eg \cite{KrauthgamerLMN05} and the references
therein). The only additional requirement in our application is consistency.

We obtain the decompositions by proving the (easy) hypergraph extensions the results of Charikar,
Makarychev and Makarychev \cite{CharikarMM07}, who exhibit a metric which is similar to the shortest
path metric on graphs at small distances, and has the property that its restriction to any subset of
size at most $n^{\eps'}$ (for an appropriate $\eps' < 1$) 
is $\ell_2$ embeddable. This is proved in \cref{sec:hypergraphs}.
We then use these in \cref{sec:decompositions} to construct the consistent partitioning schemes as
described above, by applying  a result of Charikar \etal \cite{CharikarCGGP98} giving separating
decompositions for finite subsets of $\ell_2$.

We remark that it is the consistency requirement of the partitioning procedure that limits our results
to $O\inparen{\frac{\log n}{\log \log n}}$ levels. The separation probability in the decomposition
procedure grows with the dimension of the $\ell_2$ embedding, while (to the best of our knowledge)
dimension reduction procedures seem to break consistency.


\section{Preliminaries}\label{sec:prelims}
We use $[n]$ to denote the set $\inbraces{1, \ldots, n}$. The only exception is $\qary$, where we
overload this notation to denote the set $\inbraces{0, \ldots, q-1}$, which corresponds to the the
alphabet for the Constraint Satisfaction Problem under consideration. We use $\dist_S$ and $\calP_S$
to denote probability distributions over (assignments to or partitions of) a set $S$. For $T
\subseteq S$, the notation $\dist_{S|T}$ is used to denote the restriction (marginal) of the
distribution $\dist_{S}$ to the set $T$ (and similarly for $\calP_{S|T}$).

\subsection{Constraint Satisfaction Problems}\label[section]{sec:csps}
\begin{definition}
Let $\qary$ denote the set $\inbraces{0, \ldots, q-1}$. 
For a predicate $f : \qary^k \rightarrow \{0,1\}$, an instance
$\Phi$ of \maxkcspq$(f)$ consists of a set of variables $\{x_1,\ldots,x_n\}$ and a set of
constraints $C_1, \ldots, C_m$ where each constraint $C_i$ is over a $k$-tuple of variables
$\{x_{i_1}, \ldots, x_{i_k}\}$ and is of the form
\[
C_i ~\equiv~ f(x_{i_1} + b_{i_1}, \ldots, x_{i_k} + b_{i_k})
\]
for some $b_{i_1}, \ldots, b_{i_k} \in \qary$, where the addition is modulo $q$.
For an assignment $\sigma: \{x_1,\ldots,x_n\} \mapsto \qary$, let $\sat(\sigma)$ denote the fraction of
constraints satisfied by $\sigma$. 
The maximum fraction of constraints that can be simultaneously satisfied is denoted by $\opt(\Phi)$,
i.e.
\[ 
\opt(\Phi) = \max_{\sigma: \{x_1,\ldots,x_n\} \mapsto \qary}  \sat(\sigma). 
\]
\end{definition}

For a constraint $C$ of the above form, we use $x_C$ to denote the tuple of variables $(x_{i_1},
\ldots, x_{i_k})$ and $b_C$ to denote the tuple $(b_{i_1}, \ldots, b_{i_k})$. We then write
the constraint as $f(x_C + b_C)$. We also denote by $S_C$ the set of indices
$\{i_1,\ldots, i_k\}$ of the variables participating in the constraint $C$.

\subsection{The LP Relaxations for Constraint Satisfaction Problems}\label[section]{sec:relaxations}
Below we present various LP relaxations for the \maxkcspq$(f)$ problem that are relevant in this paper.

We start with the level-$t$ Sherali-Adams relaxation.
The intuition behind it is the following. Note that an integer
solution to the problem can be given by an assignment $\sigma: [n] \to \qary$.  Using this, we can
define $\{0,1\}$-valued variables $\vartwo{S}{\alpha}$ for each $S \subseteq [n], 1 \leq |S| \leq t$
and $\alpha \in \qary^S$,
with the intended solution $\vartwo{S}{\alpha} = 1$ if $\sigma(S) = \alpha$ and 0 otherwise.
We also introduce a variable $\vartwoempty$, which equals 1.
We relax the integer program and allow variables to take real values in $[0,1]$.
Now the variables $\{ \vartwo{S}{\alpha} \}_{\alpha \in \qary^{S}}$ give a probability distribution $\dist_S$
over assignments to $S$. We can enforce consistency between these {\emph local} distributions
by requiring that for $T \subseteq S$, the distribution over assignments to $S$, when marginalized
to $T$, is precisely the distribution over assignments to $T$ \ie $\dist_{S | T} = \dist_T$.
 The relaxation is shown in Figure \ref{fig:SA-lp}.

\begin{figure}[ht]
\hrule
\vline
\begin{minipage}[t]{0.99\linewidth}
\vspace{-5 pt}
{\small
\begin{align*}
\mbox{maximize} &~~
\Ex{C \in \Phi}{\sum_{\alpha \in \qary^k} f(\alpha \cdot b_C) \cdot \vartwo{S_C}{\alpha} }&  \\
\mbox{subject to} \\ 
%
%
%
%
\sum_{\alpha \in \qary^{S} \atop \alpha|_T = \beta}\vartwo{S}{\alpha} &~=~
\vartwo{T}{\beta}  &\forall T \subseteq S \subseteq [n], |S|\leq t, ~\forall \beta \in \qary^T \\
 \vartwo{S}{\alpha} &~\geq~ 0
  & \forall S \subseteq [n], |S| \leq t,  ~\forall \alpha \in \qary^{S} \\
   \vartwoempty & ~=~  1    \\
\end{align*}}
\vspace{-14 pt}
\end{minipage}
\hfill\vline
\hrule
\caption{Level-$t$ Sherali-Adams LP for \maxkcspq($f$)}
\label{fig:SA-lp}
\end{figure}

The basic LP relaxation is a reduced form of the above relaxation where only
those variables $\vartwo{S}{\alpha}$ are included for which $S  = S_C$ is the set of CSP variables
for some constraint $C$. The consistency constraints are included
only for singleton subsets of the sets $S_C$.
Note that the all the constraints for the
basic LP are implied by the relaxation obtained by level $k$ of the Sherali-Adams hierarchy.
\begin{figure}[ht]
\hrule
\vline
\begin{minipage}[t]{0.99\linewidth}
\vspace{-5 pt}
{\small
\begin{align*}
\mbox{maximize} &~~
\Ex{C \in \Phi}{\sum_{\alpha \in \qary^k} f(\alpha + b_C) \cdot \vartwo{S_C}{\alpha} }&  \\
\mbox{subject to}\\ 
\sum_{j \in \qary}\vartwo{i}{b} &~=~ 1 &\forall i \in [n] \\
\sum_{\alpha \in \qary^{S_C} \atop \alpha(i) = b}\vartwo{S_C}{\alpha} &~=~
\vartwo{i}{b} &\forall C \in \Phi, i \in S_C, b \in \qary\\
 \vartwo{S_C}{\alpha} &~\geq~ 0
  & \forall C \in \Phi, ~\forall \alpha \in \qary^{S_C}
\end{align*}}
\vspace{-14 pt}
\end{minipage}
\hfill\vline
\hrule
\caption{Basic LP relaxation for \maxkcspq($f$)}
\label[figure]{fig:basic-lp}
\end{figure}

For an LP/SDP relaxation of \maxkcspq, and for a given instance $\Phi$ of the problem, we denote by
$\sdpopt(\Phi)$ the LP/SDP (fractional) optimum.
%
%
A relaxation is said to have a $(c, s)$-integrality gap if there exists a
CSP instance $\Phi$ such that $\sdpopt(\Phi) \geq c$ and $\opt(\Phi) < s$.

\subsection{Hypergraphs}
An instance $\Phi$ of \maxkcsp defines a natural associated hypergraph $H = (V,E)$ with $V$ being
the set of variables in $\Phi$ and $E$ containing one $k$-hyperedge for every constraint $C \in \Phi$.
We remind the reader of the familiar notions of degree, paths, and cycles for the case of 
($k$-uniform) hypergraphs:
\begin{definition}
  Let $H=(V,E)$ be a hypergraph. 
\begin{itemize}
\item For a vertex $v\in V$, the {\deffont degree} of the vertex $v$ is defined to be the number of
  distinct hyperedges containing it.
\item   A {\deffont simple path} $P$ is a finite alternate sequence of distinct vertices and
  distinct edges starting and ending at vertices, \ie, $P=v_1,e_1,v_2,\ldots ,v_{l},e_{l},v_{l+1}$,
  where $v_i \in V ~\forall i \in [l+1]$ and $e_i\in E~ \forall i\in [l]$.  Furthermore, $e_i$
  contains $v_i,v_{i+1}$ for each $i$. Here $l$ is called the {\deffont length} of the path $P$. 
  All paths discussed in this paper will be simple paths.
\item A sequence $\calC = (v_1,e_1,v_2,\ldots, v_l,e_l,v_1)$ is called a cycle of length
  $l$ if the initial segment $v_1,e_1,\ldots, v_l$ is a (simple) path, $e_{l+1}\ne e_i$ for all $i\in [l]$,
  and $v_1\in e_l$. For a path $P$ (or cycle $\calC$), we use $\Vtx(P)$ (or $\Vtx(\calC)$) to 
  denote the set of
  vertices all the vertices that occurs in the edges, \ie the set $\{v\suchthat (\exists i\in[h])
  (v\in e_i)\}$, where $e_1,\ldots,e_h$ are the edges included in $P$ (or $\calC$).
\item For a given hypergraph $H$, the length of the smallest cycle in $H$ is called the
  {\deffont girth} of $H$.
\end{itemize}
\end{definition}
To observe the difference the notions of cycle in graphs and hypergraphs, it is instructive to
consider the following example: let $u,v$ be two distinct vertices in a $k$-uniform hypergraph for
$k \geq 3$, and let $e_1,e_2$ be two distinct hyperedges both containing $u$ and $v$. Then
$u,e_1,v,e_2,u$ is a cycle of length $2$, which cannot occur in a graph.
 
We shall also need the following notion of the \emph{closure} of a set $S \subseteq V$ in a given
hypergraph $H$, defined by \cite{CharikarMM09} for the case of graphs. A stronger notion of closure
was also considered by \cite{BarakCK15}.
\begin{definition}\label[definition]{def:closure}
  For a given hypergraph $H$ and $R \in \N$, and a set $S\subseteq \Vtx(H)$, we denote by $\clR(S)$ 
  the {\deffont $R$-closure} of $S$ obtained by adding all the vertices in all the paths of length at most $R$
  connecting two vertices of $S$, \ie
  \[\clR(S)=S\cup \bigcup_{\substack{P: P\text{ is a path in H} \\ P\text{ connects }u,v\in S \\
  \abs{P}\le R}}\Vtx(P)\mper\]
  For ease of notation, we use $\cl(S)$ to denote $\cl_1(S)$.
\end{definition}


\section{Properties of random hypergraphs}\label{sec:hypergraphs}
\newcommand{\1}{\ensuremath{\mathbb{1}}}
We collect here various properties of the hypergraphs corresponding to our integrality gap
instances. 
The gap instances we generate contain several disjoint collections of variables. Each
constraint in the instance has a specified ``type'', which specifies which of the collections each
of the participating $k$ variables much be sampled from. The constraint is generated by randomly 
sampling each of the $k$ variables, from the collections specified by its type. This is captured by
the generative model described below. 

In the model below and in the construction of the gap instance, the parameter $n_0$ should be
thought of as constant, while the parameters $n$ and $m$ should be though of a growing to infinity. 
We will choose $m = \gamma \cdot n$ for $\gamma = O_{k,q}(1)$.
\begin{definition}
Let $n_0, k \in \N$ with $k \geq 2$. Let $m, n > 0$ and let $\Gamma$ be a distribution on
$\insquare{n_0}^k$. We define a distribution $\Hyp$ on $k$-uniform $n_0$-partite hypergraphs with $m$
edges and $N=n_0 \cdot n$ vertices, divided in $n_0$ sets $X_1, \ldots, X_{n_0}$ of size $n$ each. A
random hypergraph $H \sim \Hyp$ is generated by sampling $m$ random hyperedges independently as
follows:
\begin{itemize}
\item Sample a random type $(i_1, \ldots, i_k) \in [n_0]^k$ from the distribution $\Gamma$.
\item For all $j \in [k]$, sample $v_{i_j}$ independently and uniformly in $X_{i_j}$.
\item Add the edge $e_i = \inbraces{v_{i_1}, \ldots, v_{i_k}}$ to $H$.
\end{itemize}
\end{definition}
Note that as specified above, the model may generate a multi-hypergraph. However, the number of such
repeated edges is likely to be small, and we will bound these, and in fact the number of cycles of
size $o(\log n)$ in \cref{lem:cycle-count}.

We will study the following metrics (similar to the ones defined in \cite{CharikarMM07}) in this section:
\begin{definition}
  \label{def:metric-dmu}
  Given a hypergraph $H$ with vertex set $V$, we define two metrics $d^H_\mu(\cdot\xspace,\cdot), \rho^H_\mu
  (\cdot\xspace,\cdot)$ on $V$ as
\[
    d^H_\mu(u,v) ~\coloneqq~ 1-(1-\mu)^{2\cdot d_H(u,v)}
     \qquad \text{ and } \qquad \rho^H_\mu(u,v) ~\coloneqq~
     \sqrt{\frac{2\cdot d^H_\mu(u,v)+ \mu}{1+\mu}}\mcom
\]
  for $u\ne v$, where $d_H(\cdot\xspace,\cdot)$ denotes the shortest path distance in $H$.
\end{definition}

The goal of this section is to prove the following result about the local $\ell_2$-embeddability of
the metric $\rho_{\mu}$. The proof of the theorem heavily uses results proved in \cite{AroraBLT06}
and \cite{CharikarMM09}.
\begin{theorem}
  \label{thm:locally-l2}
  Let $H'\sim \Hyp$ with $m=\gamma\cdot n$ edges and let $\eps>0$. Then for large
  enough $n$, with high probability (at least $1-\eps$, over the choice of $H'$),
  there exists $\delta>0$, constant $c = c(k,\gamma,n_0,\eps)$, $\theta = \theta(k,\gamma,n_0,\eps)$ 
  and a subhypergraph $H\subset H'$ with $V(H) = V(H')$ satisfying the following:
\begin{itemize}
\item H has girth $\ttg\ge \delta\cdot \log n$.
\item $\abs{\Edg(H') \setminus\Edg(H)}\le \eps\cdot m$.
\item For all $t \leq n^{\theta}$, for $\mu \geq c \cdot \frac{\log t+\log\log n}{\log n}$, 
  for all $S\subseteq \Vtx(H')$ with
  $\abs{S} \leq t$, the metric $\rho^H_{\mu}$ restricted to $S$ is isometrically embeddable
  into the unit sphere in $\ell_2$,
\end{itemize}
%
%
\end{theorem}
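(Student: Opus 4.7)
The plan is to follow the template of Charikar--Makarychev--Makarychev \cite{CharikarMM09} and Arora--Bollob\'as--Lov\'asz--Tovey \cite{AroraBLT06}, adapting their random-graph arguments to the $n_0$-partite hypergraph model $\Hyp$. The proof splits into three steps: excising short cycles to secure the girth guarantee, bounding the density of induced sub-hypergraphs on small vertex sets, and constructing the $\ell_2$ embedding on closures of such sets.

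For the first step, I would bound the expected number of cycles of length at most $\delta\log n$ in $H'\sim\Hyp$. A cycle of length $\ell$ uses $\ell$ hyperedges on at most $\ell(k-1)$ vertices; a union bound over the choices of hyperedge types and vertex assignments in the $n_0$-partite model gives an expected count of at most $\sum_{\ell\le \delta\log n}(C\gamma)^{\ell}$ for some constant $C=C(k,n_0)$, which is $o(\eps m)$ once $\delta$ is chosen small in terms of $(k,\gamma,n_0,\eps)$. By Markov's inequality, with probability at least $1-\eps/4$ I can destroy every such cycle by removing at most $\eps m$ hyperedges; the resulting subhypergraph $H$ satisfies $|\Edg(H')\setminus\Edg(H)|\le \eps m$ and has girth at least $\delta\log n$.

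Next I would show that, with probability at least $1-\eps/4$, every $S\subseteq \Vtx(H')$ with $|S|\le t\le n^{\theta}$ has $R$-closure $\clR(S)$ (in $H$) of size $\mathrm{poly}(t)$ and of bounded cyclic excess: the number of hyperedges in $\clR(S)$ exceeds $|\clR(S)|$ minus the number of connected components by at most an absolute constant $C_0$. This is the direct hypergraph analogue of the sparsity estimates in \cite{AroraBLT06,CharikarMM09}: the probability that a set of $s$ vertices spans $s+j$ hyperedges in $H'$ is at most $\binom{N}{s}\binom{m}{s+j}(s/n)^{k(s+j)}$, and summing over $s\le \mathrm{poly}(t)$ and $j\ge C_0$ is negligible provided $\theta=\theta(k,\gamma,n_0,\eps)$ is small. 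Choosing $R = \Theta(\log n)$ and using $\mu \ge c(\log t + \log\log n)/\log n$ makes $(1-\mu)^{2R}$ smaller than any inverse polynomial in $t$, so any path in $H$ between two vertices of $S$ either lies inside $\clR(S)$ or is long enough that it affects the $d_\mu$-metric only negligibly.

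Finally, I would fix any such $S$, let $F\subseteq H$ be a spanning hyperforest of $\clR(S)$, and define the standard tree embedding on $F$: assign a fresh orthonormal direction $\chi_e$ to each hyperedge $e\in F$, fix a root in each component, and set
\[
\phi(v) \;=\; a\cdot \chi_0 \;+\; b\sum_{e\in P_v}(1-\mu)^{\mathrm{depth}(e)}\,\chi_e,
\]
where $P_v$ is the unique $F$-path from the root to $v$ and $a,b$ are scalars tuned so that $\|\phi(v)\|=1$ and $\langle\phi(u),\phi(v)\rangle = \tfrac{\mu/2+(1-\mu)^{2d_F(u,v)}}{1+\mu}$, which is precisely $1-\tfrac12\rho^H_\mu(u,v)^2$ whenever $d_H(u,v)=d_F(u,v)$. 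The $O(1)$ excess hyperedges in $\clR(S)\setminus F$ are absorbed by adjoining a matching number of orthogonal correction coordinates whose coefficients are tuned to repair the inner products; the $\mu$-slack guarantees that the only distance-changing alternate paths in $H$ are too long to matter. The main obstacle is this last step: verifying that the tree embedding together with the constantly many correction dimensions yields an \emph{exact} isometry (not merely a low-distortion embedding), under the precise lower bound on $\mu$. By contrast, Steps 1 and 2 are routine hypergraph translations of known random-graph arguments.
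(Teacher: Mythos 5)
Your Steps 1 and 2 track the paper's own route: the cycle count and girth reduction are \cref{lem:cycle-count} and \cref{cor:large-girth}, and the local density bound is \cref{lem:locally-sparse} (one calibration note: for $k$-uniform hypergraphs with $k\ge 3$ the right notion of ``tree-like'' is $\abs{E(S)}\le \abs{S}/(k-1-\eta)$, not ``edges exceed vertices minus components by $O(1)$''---a hyperforest on $s$ vertices has only about $s/(k-1)$ edges, so your stated excess condition is too weak to force $\clR(S)$ to be close to a forest). The paper also inserts a degree-truncation step (\cref{lem:low-degree}) that you omit; it is needed because the closure size is controlled via $D^{\ell+1}\cdot t$.

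The genuine gap is Step 3, and you have correctly located it yourself: producing an \emph{exact} isometric embedding of $\rho^H_\mu$ into the sphere is the entire content of the theorem, and your sketch does not deliver it. Two concrete problems. First, the tree embedding you write down does not realize the metric even on a tree: with $\phi(v)=a\chi_0+b\sum_{e\in P_v}(1-\mu)^{\mathrm{depth}(e)}\chi_e$ one gets $\ip{\phi(u),\phi(v)}=a^2+b^2\sum_{e\in P_u\cap P_v}(1-\mu)^{2\,\mathrm{depth}(e)}$, which is a function of the depth of the lowest common ancestor of $u$ and $v$, not of $d_F(u,v)$; two deep leaves meeting at the root and two shallow leaves meeting at the root get the same inner product despite very different tree distances. (The correct tree construction realizes $\ip{\psi(u),\psi(v)}=(1-\mu)^{d(u,v)}$ via a broadcast/percolation process along edges, not via fresh coordinates weighted by depth.) Second, even with the right tree embedding, ``adjoin a constant number of correction coordinates to repair the inner products'' is not a proof: a symmetric matrix that is PSD after deleting $O(1)$ rows need not be close to PSD, and metrics of the form $\sqrt{a+b\lambda^{d(u,v)}}$ on graphs with cycles are generally not $\ell_2$-embeddable at all. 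This exactness problem is precisely what \cref{thm:l2-graph-embedding} of Charikar--Makarychev--Makarychev solves, and the paper's proof consists of reducing to it: pass to the incidence graph $G_H$ (so that $\rho^{G_H}_\mu$ restricted to $V(H)$ equals $\rho^H_\mu$), verify that $G_H$ inherits bounded degree, girth $\Omega(\log n)$ and local sparsity (\cref{lem:sparsity-variable-constraint}), conclude $\ell$-path decomposability of all small subgraphs via \cref{lem:path-decomposable}, and check the quantitative condition $(1-\mu)^{\ell/9}\le \mu/(2(t+1))$, which is where the lower bound $\mu\ge c\cdot\frac{\log t+\log\log n}{\log n}$ comes from. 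Without importing that theorem (or reproving it), your Step 3 does not close.
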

%
\mtnote{Changed the metric to $\rho^H_{\mu}$ to clarify that we use the distances in $H$ and not the
original $H'$.}
%


To prove the above theorem, we will use the local structure of random (hyper)graphs.
We first prove that with high probability random hypergraphs (sampled from $\Hyp$) can be
modified by removing a few edges to a hypergraph whose girth is $\Omega(\log n)$ and the
degree of the resulting hypergraph is bounded.
The following lemma shows a possible
trade-off between the degree of the hypergraph vs the number of edges required to be
removed.
\begin{lemma}\label[lemma]{lem:low-degree}
  Let $H'\sim \Hyp$ be a random hypergraph with $m=\gamma\cdot n$ edges. Then for any
  $\eps>0$, with probability $1 - \eps$ the exists a sub-hypergraph
  $H$ with $V(H) = V(H')$ such that $\forall u \in V(H)$, $\deg_H(u) \leq
  100\cdot\log\inparen{\frac{n_0}{\eps}}\cdot k \cdot \gamma$ and $\abs{\Edg(H')\setminus
  \Edg(H)}\le \eps\cdot m$.
\end{lemma}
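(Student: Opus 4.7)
The plan is a standard degree-trimming argument: identify the set of vertices whose degree in $H'$ is abnormally large, and delete every edge incident to one of them. Concretely, set $D := 100 \log(n_0/\eps) \cdot k \gamma$ and let $E_{\text{bad}}$ be the set of edges of $H'$ that touch some vertex of degree greater than $D$. The claim is that $\abs{E_{\text{bad}}} \leq \eps m$ with probability at least $1-\eps$; once this is established, taking $H$ to be $H'$ with the edges in $E_{\text{bad}}$ removed finishes the lemma, since every surviving vertex of $H$ either already had degree at most $D$ in $H'$ (so still does in $H$) or has had all of its edges deleted (degree $0$).

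First I would analyse a single vertex. Because $H' \sim \Hyp$ is generated by sampling $m$ edges independently, the degree $D_v := \abs{\inbraces{i \suchthat v \in e_i}}$ is a sum of $m$ independent Bernoulli indicators $X_i := \mathbb{1}[v \in e_i]$. If $v$ lies in the part $X_j$, then conditional on a type $(i_1,\dots,i_k)\sim\Gamma$, the edge $e_i$ contains $v$ iff some coordinate $i_\ell$ equals $j$ \emph{and} the uniformly sampled element of $X_j$ is $v$. A union bound over the at most $k$ coordinates gives $\Pr[X_i=1] \leq k/n$ \emph{independent of $\Gamma$}, so $\ex{D_v} \leq k\gamma =: \mu$ and the standard Chernoff tail $\Pr[D_v \geq t] \leq \inparen{e\mu/t}^{t}$ applies.

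Next I would bound the expected ``excess'' contribution of a single vertex,
\[
\ex{D_v \cdot \mathbb{1}[D_v > D]} \;=\; \sum_{t > D} t \cdot \Pr[D_v = t] \;\leq\; \sum_{t > D} t \cdot \inparen{e\mu/t}^{t} \mper
\]
For our choice $D = 100 \log(n_0/\eps)\cdot\mu$, this sum is dominated by its first term once $t > D \gg e\mu$ and is bounded by roughly $D \cdot \inparen{e/(100\log(n_0/\eps))}^{D}$. Since every edge in $E_{\text{bad}}$ is counted by the degree of at least one heavy vertex, $\abs{E_{\text{bad}}} \leq \sum_v D_v \cdot \mathbb{1}[D_v > D]$, and linearity of expectation over the $n_0 n$ vertices gives
\[
\ex{\abs{E_{\text{bad}}}} \;\leq\; n_0\, n \cdot D \cdot \inparen{\tfrac{e}{100 \log(n_0/\eps)}}^{D} \;\leq\; \eps^{2}\cdot m \mcom
\]
where the last inequality is a routine check that the double-exponential decay of $\inparen{e/(100\log(n_0/\eps))}^{D}$ swamps the polynomial-in-$(n_0,n,D)$ prefactor; this is what pins down the constant $100$ in the statement. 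Markov's inequality then gives $\Pr[\abs{E_{\text{bad}}} \geq \eps m] \leq \eps$, completing the argument.

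I do not anticipate a genuine obstacle; the only thing requiring care is the uniform bound $\Pr[v\in e] \leq k/n$, where I want to be sure that the dependence on the type distribution $\Gamma$ drops out (it does, since $v$ lives in a single part $X_j$ and each edge has at most $k$ slots). The constants are otherwise a routine calibration.
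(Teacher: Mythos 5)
Your proposal is correct and follows essentially the same route as the paper: delete all edges incident to vertices of degree exceeding $D = 100\log(n_0/\eps)\cdot k\gamma$, bound $\ExpOp\bigl[\sum_v \deg(v)\cdot\indicator{\deg(v)>D}\bigr]$ by a Chernoff tail plus linearity of expectation to get $\eps^2 m$, and finish with Markov. The only cosmetic difference is that the paper bounds the conditional expectation via $\ExpOp[\deg(u)\mid \deg(u)\ge D]\le D+k\gamma$ (using independence of the edges) rather than summing the tail $\sum_{t>D} t\,(e\mu/t)^t$ directly; both are routine.
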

\begin{proof}
  By linearity of expectation, the expected degree of any vertex $v$ in $H'$ is at most
  $k\cdot\gamma$,. Let $D = 100\cdot\log\inparen{\frac{n_0}{\eps}}\cdot k\cdot \gamma$. Let $S$ be the
  set of all vertices $u$ such that $\deg_{H'}(u) > D$. Let $E_S$ be the set of all hyperedges with
  one vertex in $S$. We shall take $E(H) = E(H') \setminus E_S$. 
Note that for any $u \in V(H')$, $\prob{u \in S} = \prob{\deg_{H'}(u) \geq D} \leq \exp(-D/4)$ by a
Chernoff-Hoeffding bound. We use this to bound the expected number of edges deleted.
\begin{align*}
\ex{E_S} 
~\leq~ \sum_{u \in V(H')} \ex{\deg(u) \cdot \indicator{u \in S}} 
&~=~ \sum_{u \in V(H')} \ex{\deg(u) ~\mid~ u \in S} \cdot \prob{u \in S} \\
&~\leq~ \sum_{u \in V(H')} \ex{\deg(u) ~\mid~ u \in S} \cdot \exp\inparen{-D/4} \\
&~\leq~ \sum_{u \in V(H')} \inparen{D+k\gamma} \cdot \exp\inparen{-D/4} \\
&~\leq~ (n \cdot n_0) \cdot 2D \cdot \exp\inparen{-D/4} \mper
\end{align*}
The penultimate inequality uses the independence of the hyper-edges in the generation process, which gives
$\ex{\deg_{H'}(u) \mid \deg_{H'}(u) \geq D} ~\leq~ D + \ex{\deg_{H'}(u)}$. From our choice of the
parameter $D$, we get that $\ex{E_S} \leq \eps^2 \cdot \gamma \cdot n = \eps^2 \cdot m$. Thus, the
number of edges deleted is at most $\eps \cdot m$ with probability at least $1-\eps$.
\end{proof}
The following lemma shows that the expected number of small cycles in random hypergraph is small.
\begin{lemma} \label[lemma]{lem:cycle-count}
Let $H \sim \Hyp$ be a random hypergraph and for $l \geq 2$, let $Z_{l}(H)$ denote the number of
cycles of length at most $l$ in $H$. For $m, n$ and $k$ such that $k^2 \cdot (m/n) > 1$, we have
  \[
    \Ex{H \sim \Hyp}{Z_{l}(H)} \le \inparen{k^2 \cdot \frac{m}{n}}^{2l} \mper
  \]
\end{lemma}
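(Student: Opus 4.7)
The plan is a standard first-moment argument. I will bound $\ex{Z_l(H)}$ by summing, over cycle lengths $j \in \{2, \ldots, l\}$, the expected number of cycles of length exactly $j$, computed by enumerating ordered cycle skeletons and multiplying by the probability each skeleton is realized by the random edges. An ordered cycle skeleton of length $j$ is a pair $\bigl((v_1, \ldots, v_j), (s_1, \ldots, s_j)\bigr)$ where the $v_i \in V(H)$ are distinct, the $s_i \in [m]$ are distinct, and the $s_i$-th random hyperedge is required to contain both $v_i$ and $v_{i+1}$ (indices taken mod $j$). There are at most $N^j = (n_0 n)^j$ ordered vertex tuples and at most $m^j$ ordered edge-index tuples, and each simple cycle of length $j$ corresponds to exactly $2j$ such ordered skeletons (a choice of starting vertex and of direction of traversal).

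Because the $m$ random edges are sampled independently under $\Hyp$, for a fixed skeleton the realization probability factors over the $j$ edges. The crucial pairwise estimate is that for any two distinct $u \in X_a$, $v \in X_b$,
\[
\Pr_{e \sim \Hyp}\bigl[\{u,v\} \subseteq e\bigr] ~\le~ \frac{k(k-1)}{n^2} ~\le~ \frac{k^2}{n^2}\mper
\]
To see this, union-bound over the $k(k-1)$ ordered pairs of distinct positions $(p,q) \in [k]^2$ at which $u$ and $v$ could be placed, use the trivial bound $\Pr_\Gamma[i_p = a, i_q = b] \le 1$, and the fact that conditioned on the type tuple each position is independently uniform in its part of size $n$. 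Hence each skeleton is realized with probability at most $(k^2/n^2)^j$, giving
\[
\ex{\#\{\text{cycles of length exactly } j\}} ~\le~ \frac{(n_0 n)^j \cdot m^j \cdot (k^2/n^2)^j}{2j} ~=~ \frac{1}{2j}\cdot \Bigl(\frac{n_0 k^2 m}{n}\Bigr)^j\mper
\]

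I will sum this for $j = 2, \ldots, l$; under the hypothesis $k^2 m/n > 1$ the resulting geometric series is dominated by its last term up to a polynomial-in-$l$ factor, and the constant $n_0$ and the $\mathrm{poly}(l)$ factor are absorbed into the full extra factor of $(k^2 m/n)^l \ge 1$ of slack available in the stated bound $(k^2 m/n)^{2l}$. The only subtle point is the pairwise estimate when $u, v$ lie in the same part of the $n_0$-partite structure ($a = b$): one must observe that the two hyperedge positions holding $u$ and $v$ are still distinct since the $k$ positions in a hyperedge are labeled separately, so the union bound correctly ranges over $k(k-1)$ ordered pairs rather than $k^2$. Everything else is routine enumeration and geometric-series bookkeeping.
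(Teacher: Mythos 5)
Your overall strategy (first-moment count over cycle skeletons, pairwise containment probability $\le k^2/n^2$ per edge, geometric summation) is the same as the paper's, and your pairwise estimate and the factorization over the $j$ independent edges are correct. But there is a genuine quantitative gap in the last step. Because you enumerate vertex tuples over all $N = n_0 n$ vertices, your bound for cycles of length exactly $j$ carries a factor $\inparen{n_0 \cdot k^2 m/n}^j$, and you claim the $n_0^j \le n_0^l$ is ``absorbed into the extra factor of $(k^2 m/n)^l$ of slack.'' That absorption requires (roughly) $n_0 \le k^2 m/n$, which the hypothesis $k^2 m/n > 1$ does not give: $n_0$ is an independent parameter (the number of variables of the template instance) and can far exceed $k^2 m/n$. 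Concretely, with $k^2 m/n = 2$, $n_0 = 100$, $l = 10$, your dominant term is on the order of $200^{10}/20$ while the claimed bound is $2^{20}$, so the chain of inequalities you propose does not close.

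The paper sidesteps exactly this issue by first contracting each fiber $[n_0]\times\{j\}$ to a single vertex $j \in [n]$, obtaining a random multi-hypergraph $H'$ on only $n$ vertices in which each edge is $k$ uniform samples from $[n]$; a cycle of length at most $l$ in $H$ yields one in $H'$, and the count over $\binom{n}{h} \le n^h$ tuples then produces $(k^2 m/n)^h$ with no $n_0$ factor. To repair your argument without contracting, you would have to retain the type probabilities rather than bounding $\Pr_\Gamma[i_p = a,\, i_q = b] \le 1$: summing $\Pr[\{u,v\}\subseteq e]$ over all $v$ in a given part returns a factor $n \cdot \Pr[i_q = b]/n^2$, and carrying these weights through the enumeration of vertex tuples makes the $n_0$-fold sum over parts telescope back to $1$ per position --- which is, in effect, the contraction argument. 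As written, though, the proof does not establish the stated bound.
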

\begin{proof}
  Let the vertices of $H$ correspond to the set $[n_0]\times [n]$.
  Suppose we contract the set of $[n_0]\times\{j\}$ vertices into a single
  vertex $j \in [n]$ to get a random multi-hypergraph $H'$ on vertex set $[n]$.
  An equivalent way to view the sampling to $H'$ is: for each $i\in [m]$, the
  $i$-th hyperedge $e_i$ of $H'$ is sampled by independently sampling $k$ vertices (with
  replacement) uniformly at random from $[n]$. Note
  that the sampling of $H'$ is independent of $\Gamma$ in the definition of
  $\Hyp$.
  Clearly, a cycle of length at most $l$ in $H$ produces a cycle of length at most $l$ in
  $H'$. Hence, suffices to bound the expected number of cycles in $H'$

  Given any pair $(u',v')$ of vertices of $H'$, for $u' \ne v'$, the probability of the
  pair $(u',v')$ belonging together in some edge of $H'$ is at most $\frac{mk^2}{n^2}$.
  Consider a given $h$-tuple  of vertices $\bfu=(u_{i_1},\cdots u_{i_h})$. Note that we
  require that edges participating in a cycle be distinct. So, the probability that $\bfu$
  is part of a cycle in $H'$, \ie there exists distinct edges $e_j\in H'$ for $j\in [h]$
  such that $u_{i_j},u_{i_{j+1}}\in e_j$ for $j\in [h-1]$, and $u_{i_1},u_{i_h}\in e_{h}$
  is at most $\inparen{\frac{mk^2}{n^2}}^{h}$. As a result, expected number of
  cycles of length $h$ in $H'$ is bounded above by:
  \[
    {n \choose h} \inparen{\frac{mk^2}{n^2}}^{h} \le n^h \inparen{\frac{mk^2}{n^2}}^{h} =
    \inparen{k^2\cdot\frac{m}{n}}^{h}
  \]
  From the geometric form of the bound, it follows that expected number of cycles of length
  at most $l$ in $H'$ is at most $\frac{\inparen{k^2\cdot\frac{m}{n}}^{l+1}}{
    \inparen{k^2\cdot\frac{m}{n}}-1}<\inparen{k^2\cdot\frac{m}{n}}^{2l}\mper$
\end{proof}
Using the above lemma, it is easy to show that one can remove all small cycles in a random
hypergraph by deleting only a small number of edges.
\begin{corollary}\label[corollary]{cor:large-girth}
  Let $H \sim \Hyp$ be a random hypergraph with $m = \gamma \cdot n$ for $\gamma>1$ and $k\ge 2$.
  Then, there exists $\delta=\delta(\gamma) > 0$ such that with probability $1-n^{-1/6}$, all cycles
  of length at most $\delta \cdot \log n$ in $H$ can be removed by deleting at most $n^{2/3}$ edges.
\end{corollary}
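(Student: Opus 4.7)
The plan is a straightforward application of \cref{lem:cycle-count} combined with Markov's inequality. First, I would set $l = \delta \cdot \log n$ and use $m/n = \gamma$ in \cref{lem:cycle-count} to bound
\[
\Ex{H \sim \Hyp}{Z_l(H)} ~\le~ \inparen{k^2 \gamma}^{2\delta \log n} ~=~ n^{2\delta \log(k^2 \gamma)} \mper
\]
Choosing $\delta = \delta(\gamma) \defeq 1/\inparen{4 \log(k^2 \gamma)}$ (recall $k \geq 2$ and $\gamma > 1$ ensure $k^2\gamma > 1$), this expectation is at most $n^{1/2}$. (If $k^2\gamma\le 1$ the hypothesis of \cref{lem:cycle-count} fails, but this case does not occur for $\gamma>1$, $k\ge 2$.)

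Next, I would apply Markov's inequality to conclude
\[
\Pr\insquare{Z_{\delta \log n}(H) ~\geq~ n^{2/3}} ~\leq~ \frac{n^{1/2}}{n^{2/3}} ~=~ n^{-1/6} \mper
\]
Thus, with probability at least $1 - n^{-1/6}$, the hypergraph $H$ contains at most $n^{2/3}$ cycles of length at most $\delta \cdot \log n$.

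Finally, to remove all such short cycles, I would simply delete one edge from each such cycle (say, an arbitrarily chosen edge among the edges forming the cycle). Since each deletion destroys at least one cycle on the enumerated list, and there are at most $n^{2/3}$ cycles to destroy, at most $n^{2/3}$ edges suffice. Note that deleting an edge cannot create new cycles in the remaining hypergraph, so the resulting subhypergraph has girth strictly greater than $\delta \cdot \log n$. This yields the claimed conclusion.

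There is no real obstacle here beyond bookkeeping; the only minor subtlety is that \cref{lem:cycle-count} counts cycles of length \emph{at most} $l$ geometrically (the bound already absorbs the sum over shorter cycles), so no additional union bound over lengths is needed. The choice of $\delta$ is the only non-trivial parameter, and is dictated purely by requiring $2\delta \log(k^2 \gamma) \leq 1/2$ so that Markov's inequality yields the stated failure probability.
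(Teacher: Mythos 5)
Your proposal is correct and matches the paper's own proof: both bound $\ExpOp[Z_{\delta\log n}]$ via \cref{lem:cycle-count}, choose $\delta$ so that the expectation is at most $\sqrt{n}$, and apply Markov's inequality to get the $n^{-1/6}$ failure probability, after which one edge per short cycle is deleted. Your version merely makes the choice $\delta = 1/(4\log(k^2\gamma))$ explicit where the paper leaves it implicit.
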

\begin{proof}
  As above, let $Z_l$ denote the number of cycles of length at most $l$. With the choice of $m,n,$ and $k$,
  we have $k^2\cdot\frac{m}{n}\ge 2$. By \cref{lem:cycle-count}, 
  $\ex{Z_l} \leq \inparen{k^2 \cdot \frac{m}{n}}^{2l}$. Since $m = \gamma \cdot n$, there exists a
  $\ttg = \delta \cdot \log n$ such that  $\ex{Z_l} \leq \sqrt{n}$. By Markov's inequality,
  $\prob{Z_l \geq n^{2/3}} \leq n^{-1/6}$. Thus, with probability $1-n^{-1/6}$, one can remove
  all cycles of length at most $\delta \cdot \log n$ by deleting at most $n^{2/3}$ edges.
\end{proof}
One can also extend the analysis in \cite{AroraBLT06} to show that the hypergraphs are locally
sparse \ie the number of edges contained in a small set of vertices is small. For a hypergraph $H$
and a set $S \subseteq \Vtx(H)$, we use $E(S)$ to denote the edges contained in the set $S$. 
\begin{definition}
We say that $S \subseteq V(H)$ is $\eta$-sparse if $\abs{E(S)} \leq \frac{\abs{S}}{k-1-\eta}$. We
call an $k$-uniform hypergraph $H$ on $N$ vertices to be {\deffont $(\tau,\eta)$-sparse} if all
subsets $S\subset \Vtx(H), \abs{S}\le \tau \cdot \abs{\Vtx(H)}$, $S$ is $\eta$-sparse. 
We call $H$ to be $\eta$-sparse if it is $(1,\eta)$-sparse, \ie all subsets of vertices of $H$ are
sparse.
\end{definition}
We note here that while this notion of sparsity is a generalization of that considered in
\cite{AroraBLT06}, it is also identical to the notions of expansion considered in works in proof
complexity (see \eg \cite{BS01}) and later in works on integrality gaps
\cite{AroraAT05, BenabbasGMT12,BarakCK15}.
We prove that random hypergraphs generated with our model are locally sparse:
%
%
%
\begin{restatable}{replemma}{locallysparse}
\label{lem:locally-sparse}
  Let $\eta < 1/4$ and $m=\gamma\cdot n$ for $\gamma > 1$. Then
  for $\tau \le \frac{1}{n_0} \cdot \inparen{\frac{1}{e \cdot k^{3k} \cdot
        \gamma}}^{1/\eta}$ the following holds:
  \[
    \Prob{H \sim \Hyp}{H~\text{is not}~(\tau,\eta)\text{-sparse}}
    ~\leq~ 3 \cdot \inparen{\frac{k^{3k} \cdot \gamma}{n^{\eta/4}}}^{1/k} \mper
  \]
\end{restatable}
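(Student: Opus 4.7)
My plan is to extend the standard union-bound argument for local sparsity in sparse random graphs (as in \cite{AroraBLT06}) to the $n_0$-partite hypergraph model $\Hyp$. The approach has three steps: a per-set tail bound, a union bound over vertex-subsets of fixed size, and a geometric sum over sizes.

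First, I fix an integer $s$ with $k \le s \le \tau N$ and a vertex subset $S \subseteq \Vtx(H)$ with $|S| = s$. Writing $s_i := |S \cap X_i|$, the probability that a single random hyperedge of $\Hyp$ lies entirely inside $S$ is $p_S \;=\; \sum_{(i_1,\ldots,i_k) \in [n_0]^k} \Gamma(i_1,\ldots,i_k) \prod_{j=1}^{k} (s_{i_j}/n) \;\le\; (s/n)^k$, using only $s_{i_j} \le s$. Since the $m = \gamma n$ hyperedges are drawn independently, $|E(S)| \sim \mathrm{Bin}(m, p_S)$, so the standard tail bound gives $\Pr[|E(S)| \ge t] \le \binom{m}{t}\,p_S^{\,t} \le (e\,m\,p_S/t)^{t}$.

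Next, $S$ fails to be $\eta$-sparse iff $|E(S)| \ge t$ with $t := \lceil s/(k-1-\eta) \rceil$. Combining this with $\binom{N}{s} \le (eN/s)^s$ (where $N = n_0 n$), substituting $m = \gamma n$, and writing $\beta := 1/(k-1-\eta)$, a short calculation using the key algebraic identity $(k-1)\beta - 1 = \eta\beta$ collapses the exponents of $n$ and $s$ into a single factor $(s/n)^{\eta\beta}$. I expect the union bound over subsets of size $s$ to take the form $\bigl[\, e\, n_0 \cdot (e\gamma/\beta)^{\beta} \cdot (s/n)^{\eta\beta}\,\bigr]^{s}$.

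Finally, the hypothesis $\tau \le \tfrac{1}{n_0}\bigl(1/(ek^{3k}\gamma)\bigr)^{1/\eta}$ is calibrated so that the bracketed quantity is bounded by a constant strictly less than $1$ throughout $k \le s \le \tau N$, which makes $\sum_{s=k}^{\lfloor\tau N\rfloor}$ a geometric series dominated by its $s = k$ term. Plugging $s = k$ into the bracket and absorbing constants into the loose factor $k^{3k}\gamma$ yields the claimed bound; the outer $1/k$ exponent in the statement matches the outer $s$-exponent at $s = k$. The main obstacle is just careful exponent bookkeeping — in particular verifying the cancellation $(k-1)\beta - 1 = \eta\beta$ (without which the bound would not decay in $n$), and tuning the constants to match the specific form $3\cdot(k^{3k}\gamma/n^{\eta/4})^{1/k}$ — but no new ideas beyond the single-edge estimate $p_S \le (s/n)^k$ and the standard binomial tail are required.
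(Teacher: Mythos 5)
Your setup (per-edge probability $p_S \le (s/n)^k$, binomial tail, union bound over subsets, geometric sum over sizes) is the standard template, but the union bound is taken over the wrong vertex set, and with the stated hypothesis on $\tau$ the argument does not close. Carrying out your calculation, the size-$s$ term is
\[
\binom{N}{s}\inparen{\frac{e m p_S}{t}}^{t} ~\leq~ \insquare{\, e\, n_0 \cdot \inparen{e\gamma(k-1-\eta)}^{\beta}\cdot \inparen{\frac{s}{n}}^{\eta\beta}\,}^{s},
\]
with $\beta = 1/(k-1-\eta)$: the factor $(eN/s)^s = (e n_0)^s (n/s)^s$ leaves behind an $(e n_0)^s$ that the cancellation $(k-1)\beta - 1 = \eta\beta$ does nothing about. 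The hypothesis only controls $(\tau n_0)^{\eta\beta} \le (e k^{3k}\gamma)^{-\beta}$, which is independent of $n_0$, so for $s$ near $\tau N$ the bracketed quantity is roughly $e\,n_0\cdot\bigl((k-1-\eta)/k^{3k}\bigr)^{\beta}$ and \emph{exceeds} $1$ once $n_0$ is moderately large (e.g.\ $k=3$, $n_0=1000$ gives a bracket around $20$); the terms then blow up exponentially in $s$ and the sum is not geometric, let alone dominated by $s=k$. The number $n_0$ of template variables is an arbitrary constant, so this is a genuine failure, not a constant-chasing issue.

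The missing idea is the contraction step the paper uses: collapse each group $\{(i,j) : i \in [n_0]\}$ to a single vertex $j \in [n]$, obtaining a random multi-hypergraph $H'$ on only $n$ vertices in which each $k$-multiset is an edge with probability at most $k!\,m/n^k$. A dense subset $S$ of $H$ projects to a subset $S'$ of $H'$ with $|E(S')| \ge |E(S)|$ and $|S'| \le |S|$, hence at least as dense and of size at most $\tau n_0 \cdot n$; so it suffices to prove $(\tau n_0, \eta)$-sparsity of $H'$, where the union bound costs only $\binom{n}{h}$ and the $n_0^s$ factor disappears (this is exactly why the hypothesis carries the $1/n_0$ prefactor). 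Two smaller discrepancies with the paper's proof: the sum must start at $s=1$, not $s=k$ (after contraction an edge can occupy very few distinct vertices), and the final $n^{-\eta/4}$-type bound is obtained not by isolating the $s=k$ term but by splitting the sum at $h = n^{\eta/2k}$, bounding each small-$h$ term by its value raised to the power $1/k$ (legitimate since each base is at most $1$ and each exponent $h\beta$ is at least $1/k$) and showing the large-$h$ tail is exponentially small.
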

 We note that we will require the sparsity $\eta$ to be $O_{k,\gamma}(1/(\log n))$. This gives
 sparsity only for sublinear size sets, as compared to sets of size $\Omega(n)$ in previous works
 where $\eta$ is a constant. 
For the proof of the lemma, we follow an approach similar to that of \cref{lem:cycle-count}: we collapse
the vertices of $H$ of the form $[n_0]\times\{j\}$ to vertex $j\in [n]$ to construct $H'$, and thus reducing
the problem to random multi-hypergraph form a random multipartite hypergraph.
The rest proof of the lemma is along the lines of several known proofs \cite{AroraAT05,
  BenabbasGMT12} and we defer the details to \cref{sec:appendix}.

%
%
%
%
Charikar \etal \cite{CharikarMM07} prove an analogue of \cref{thm:locally-l2} for metrics defined on
locally-sparse graphs. In fact, they  use a consequence of sparsity, which they call $\ell$-path
decomposability. To this end, we define the \emph{incidence graph}\footnote{This is the same notion
  as the constraint-variable graph considered in various works on lower bounds for CSPs.} 
associated with a hypergraph, on which we will apply their result.
\begin{definition}
Let $H = (V(H),E(H))$ be a $k$-uniform hypergraph. We define its {\deffont incidence graph} as the
bipartite graph $G_H$ defined on vertex sets $V(H)$ and $E(H)$, and edge set $\calE$ defined as
\[
\calE ~\defeq~ \inbraces{(v,e) ~\mid~ v \in V(H), ~e \in E(H), ~v \in e} \mper
\]
\end{definition}
Note that for any $u,v \in V(H)$, we have $d_{G_H}(u,v) = 2 \cdot d_H(u,v)$.
We prove that for a locally sparse hypergraph $H$, its incidence graph $G_H$ is also locally sparse.
\begin{lemma}
  \label[lemma]{lem:sparsity-variable-constraint}
  Let $H$ be a $k$-uniform $(\tau,\eta)$-sparse hypergraph on $N$ vertices with $m=\gamma\cdot n$
  hyperedges. Then the incidence graph $G_H$ is $(\tau',\eta')$ sparse for $\tau' =
  \nfrac{\tau}{k\cdot(1+\gamma)}$ and $\eta' = \nfrac{\eta}{(1+\eta)}$.
\end{lemma}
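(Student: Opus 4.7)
The plan is to take an arbitrary $T \subseteq V(G_H) = V(H) \cup E(H)$ with $|T| \leq \tau'(N+m)$, split it as $T = T_V \sqcup T_E$ with $T_V \subseteq V(H)$ and $T_E \subseteq E(H)$, and bound the number of edges of $G_H$ inside $T$ by $|T|/(1-\eta') = (1+\eta)|T|$ using the $(\tau,\eta)$-sparsity of $H$.

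The crucial observation is that applying sparsity to $T_V$ directly is too weak: it only controls the hyperedges of $T_E$ fully contained in $T_V$, whereas an $e \in T_E$ contributes $|e \cap T_V|$ incidences to $G_H$ even when it sticks out of $T_V$. Instead, I would apply sparsity to the set $S \defeq \bigcup_{e \in T_E} e \subseteq V(H)$ of $H$-vertices touched by $T_E$. A short computation using $\tau' = \tau/(k(1+\gamma))$ together with $m/N = \gamma/n_0 \leq \gamma$ (valid since $n_0 \geq 1$) gives $|S| \leq k|T_E| \leq k|T| \leq \tau N$, so $(\tau,\eta)$-sparsity applies and yields $|T_E| \leq |E_H(S)| \leq |S|/(k-1-\eta)$, equivalently $|S| \geq (k-1-\eta)|T_E|$.

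The heart of the proof is the double-counting identity
\[
|E_{G_H}(T)| \;=\; \sum_{e \in T_E} |e \cap T_V| \;=\; k|T_E| \;-\; \sum_{v \in S \setminus T_V} d_v,
\]
where $d_v \defeq |\{e \in T_E : v \in e\}|$. Since every $v \in S$ lies in at least one edge of $T_E$, each $d_v \geq 1$, so $\sum_{v \in S\setminus T_V} d_v \geq |S \setminus T_V| \geq |S| - |T_V|$. Substituting the lower bound on $|S|$ then gives
\[
|E_{G_H}(T)| \;\leq\; k|T_E| - |S| + |T_V| \;\leq\; k|T_E| - (k-1-\eta)|T_E| + |T_V| \;=\; (1+\eta)|T_E| + |T_V| \;\leq\; (1+\eta)|T|,
\]
which is the desired bound.

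The main obstacle, and the thing one has to spot, is choosing the right vertex set to feed into sparsity: the naive estimate $|E_{G_H}(T)| \leq k|T_E|$ is hopeless whenever $|T_E| \gg |T_V|$, and applying sparsity to $T_V$ alone fails because it ignores the bulk of incidences. What saves the argument is that such imbalanced configurations force $S$ to be large relative to $T_V$, so many hyperedge endpoints necessarily sit outside $T_V$ and contribute nothing to $|E_{G_H}(T)|$; the counting identity quantifies this trade-off exactly and converts the factor $k$ into the factor $(1+\eta)$.
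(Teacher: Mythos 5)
Your proof is correct and follows essentially the same route as the paper's: both arguments take the vertex set $S=\bigcup_{e\in T_E} e$ spanned by the hyperedges selected in $T$, verify $|S|\le k|T_E|\le \tau N$ so that the $(\tau,\eta)$-sparsity of $H$ applies, and then combine the resulting bound $|S|\ge (k-1-\eta)|T_E|$ with the incidence count $\sum_{e\in T_E}|e\cap T_V| \le k|T_E|-|S|+|T_V|$ to get $(1+\eta)(|T_V|+|T_E|)$. The only cosmetic difference is that the paper phrases this for a densest subgraph with isolated vertices removed, whereas you handle an arbitrary $T$ directly via the double-counting identity, which is marginally cleaner.
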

\begin{proof}
  Let $\tau'=\nfrac{\tau}{k\cdot(1+\gamma)}$ and let $G_H$ be the incidence graph with
  $N+m = (1+\gamma) \cdot N$ vertices. Let $G'$ be is the densest subgraph of $G_H$, among all subgraphs
  of size at most $\tau' \cdot (N+m)$. 
Let the vertex set of $G'$ be $V' \cup E'$ where $V' \subseteq V(H)$ and $E' \subseteq E(H)$, and
let the edge-set be $\calE'$. There cannot be any isolated vertices in $G'$ since removing those will only
increase the density.

Let $S \subseteq V(H)$ be the set of all vertices contained in all edges in $E'$ \ie 
$S ~\defeq~ \inbraces{v \in V(H) ~\mid~ \exists e \in E'~\text{s.t.}~ v \in e}$. Note that $V'
\subseteq S$, since there are no isolated vertices,  
and $E' \subseteq E(S)$, where $E(S)$ denotes the set of hyperedges contained in $S$.

By our choice of parameters, $\abs{S} \leq k \cdot \abs{E'} \leq k \cdot \tau' \cdot (N+m) \leq \tau
\cdot N$.
Thus, using the sparsity of $H$, we have
\[
\abs{E'} ~\leq~ \abs{E(S)} ~\leq~ \frac{\abs{S}}{k-1-\eta} \mper
\]
Also, since each hyperedge of $E'$ can include at most $k$ vertices in $S$, and since each edge in
$\calE'$ is incident on a vertex in $V'$, we have
\[
\abs{S} - \abs{V'} ~\leq~ k \cdot \abs{E'} - \abs{\calE'} \mper
\]
Combining the two inequalities gives
\[
(k-1-\eta) \cdot \abs{E'} ~\leq~ \abs{V'} + k \cdot \abs{E'} - \abs{\calE'}
\quad \implies \quad
\abs{\calE'} ~\leq~ (1+\eta) \cdot \abs{E'} + \abs{V'}  \mper
\]
Hence, we get that $\abs{\calE'} \leq \frac{\abs{V'} + \abs{E'}}{(1-\eta')}$ for $\eta' = \frac{\eta}{(1+\eta)}$.
\end{proof}
Charikar \etal \cite{CharikarMM07} defined the following structural property of a graph.
\begin{definition}[\cite{CharikarMM07}]\label[definition]{def:l-path-decmpsble} 
  A graph $G$ is {\sf $\ell$-path decomposable} if every $2$-connected subgraph $G'$ of $G$,
  such that $G'$ is not an edge, contains a path of length $\ell$ such that every vertex of the
  path has degree at most $2$ in $G'$.
\end{definition}
%
%
%
The above property was also implicitly used by Arora \etal (\cite{AroraBLT06}), who proved the
following (see Lemma 2.12 in \cite{AroraBLT06}):
\begin{lemma}\label[lemma]{lem:path-decomposable}
  Let $\ell>0$ be an integer and $0< \eta < \frac{1}{3\ell-1} < 1$.
  Let $G$ be a $\eta$-sparse graph with girth $\girth>\ell$.
  Then $G$ is $\ell$-path decomposable.
\end{lemma}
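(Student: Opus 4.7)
The plan is to exhibit, inside any $2$-connected subgraph $G' \subseteq G$ that is not a single edge, an \emph{arc} of length at least $\ell + 2$; then its $\ell + 1$ interior vertices (each of degree exactly $2$ in $G'$, since $2$-connectedness forces minimum degree $2$) furnish the required path of length $\ell$ whose vertices have degree $\le 2$ in $G'$. Here an arc means a maximal path in $G'$ whose interior vertices have degree $2$ in $G'$ and whose endpoints lie in $V_{\geq 3} \defeq \{v \in V(G') : \deg_{G'}(v) \geq 3\}$, where closed arcs (with coincident endpoints) are allowed. If $V_{\geq 3} = \emptyset$ then $G'$ is itself a cycle of length exceeding $\ell$ by girth, and we are done; otherwise I would pigeonhole over arcs.

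The main counting ingredient is an Euler-type identity. Setting $K = |V_{\geq 3}|$ and letting $A$ denote the total number of arcs, double-counting the arc-endpoint slots gives $2A = \sum_{v \in V_{\geq 3}} \deg_{G'}(v)$; combined with the handshake lemma and the fact that every other vertex has degree exactly $2$, this yields the identity $A = |E(G')| - |V(G')| + K$. The inequality $\deg_{G'}(v) \geq 3$ on $V_{\geq 3}$ gives $K \leq 2A/3$, whence $A \leq 3(|E(G')| - |V(G')|)$. Applying the $\eta$-sparsity of $G$ to $V(G')$ bounds $|E(G')| - |V(G')| \leq \eta|V(G')|/(1-\eta)$, so $A \leq 3\eta|V(G')|/(1-\eta)$. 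Since the arc lengths sum to $|E(G')| \geq |V(G')|$, pigeonhole over the $A$ arcs produces some arc of length at least $|E(G')|/A \geq (1-\eta)/(3\eta)$.

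The main obstacle is that this naive pigeonhole from the average arc length is not quite tight enough to meet the stated threshold $\eta < 1/(3\ell - 1)$. To cross that threshold I would invoke the girth a second time, beyond its role in the base case. The key observation is that any two distinct arcs sharing the same pair of endpoints in the topological multigraph of $G'$ together form a cycle and hence have combined length exceeding $\ell$; more generally, in any open ear decomposition $G' = C_0 \cup P_1 \cup \cdots \cup P_t$ starting from a shortest cycle $C_0$ of $G'$, every ear $P_i$ satisfies $|P_i| \geq \ell + 1 - d_{G_{i-1}}(u_i, v_i)$, where $u_i, v_i$ are its attachment points in the previously-constructed subgraph $G_{i-1}$. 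Plugging these lower bounds into the identity $\sum_i |P_i| = |E(G')| - |C_0|$ and rearranging sharpens the estimate on $A$ enough to force an arc of length $\geq \ell + 2$ under the stated sparsity. This girth-tightened pigeonhole is the delicate step; the arc decomposition and the Euler/sparsity bookkeeping are routine.
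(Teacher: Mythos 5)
Your first two paragraphs are correct and are essentially the standard argument (the paper itself does not prove this lemma; it defers to Lemma~2.12 of Arora--Bollob\'as--Lov\'asz--Tourlakis): the arc decomposition, the identity $A=|E(G')|-|V(G')|+K$, the bound $K\le 2A/3$, and the sparsity estimate $A\le 3\eta|V(G')|/(1-\eta)$ are all right, and they do yield an arc of average length greater than $\ell-\tfrac23$, hence a longest arc of length at least $\ell$. You are also right that this falls short: under the paper's definition (\emph{every} vertex of the path, endpoints included, must have degree at most $2$), you need an arc of length at least $\ell+2$.

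The gap is your third paragraph: the ``girth-tightened pigeonhole'' is not an argument, and no argument can close this gap at the stated constants. Consider the theta graph consisting of two hub vertices joined by three internally disjoint paths of length $\ell$ each. It is $2$-connected, not an edge, has girth $2\ell>\ell$, and has $|V|=3\ell-1$, $|E|=3\ell$; its densest subgraph is the whole graph, so it is $\eta$-sparse for every $\eta\ge \tfrac1{3\ell}$, and $[\tfrac1{3\ell},\tfrac1{3\ell-1})$ is a nonempty subrange of the hypothesis $\eta<\tfrac1{3\ell-1}$. Yet its longest path all of whose vertices have degree at most $2$ has length $\ell-2$. (Your ear-decomposition bound $|P_i|\ge \ell+1-d_{G_{i-1}}(u_i,v_i)$ is vacuous here: the single ear has attachment points at distance $\ell$.) So the extra additive $2$ cannot be extracted from the girth; it is slack that must be absorbed into the constants --- either by reading the conclusion as a degree-$\le 2$ path of length $\ell-2$ (equivalently, a path of length $\ell$ whose \emph{internal} vertices have degree $2$, in which case your arc of length $\ge\ell$ already finishes the proof), or by proving $(\ell-2)$-path decomposability and invoking the lemma with $\ell+2$ in place of $\ell$, which is harmless for the paper's application since $\ell=\Theta(\log n)$ is chosen with room to spare. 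You should state and prove one of these corrected versions rather than pursue the sharpening.
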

\mtnote{Check if $\girth > \ell$ is enough or one needs $\girth > 2\ell$.}
Recall that we defined the metrics $\dmu$ and $\rho_\mu$ on $H$ as (for $u\ne v$) :
\[
    d^H_\mu(u,v) ~\coloneqq~ 1-(1-\mu)^{2\cdot d_H(u,v)}
     \qquad \text{ and } \qquad \rho^H_\mu(u,v) ~\coloneqq~
     \sqrt{\frac{2\cdot d^H_\mu(u,v)+ \mu}{1+\mu}}\mcom
\]
For a graph $G$, we define the following two metrics, for $u\ne v$:
\[
   d_\mu^G(u,v) ~\coloneqq~ 1-(-1)^{d_G(u,v)}(1-\mu)^{d_G(u,v)}
  \quad \text{ and }  \quad
\rho_\mu^G(u,v) ~\coloneqq~
  \sqrt{\frac{2\cdot d_\mu^G(u,v)+ \mu}{1+\mu}}\mper
\]
We note that if $H$ is a hypergraph and $G_H$ is its incidence graph, then the metrics 
$d_\mu^{G_H}$ and $\rho_\mu^{G_H}$ restricted to $V(H)$, coincide with the metrics $d_\mu$ and
$\rho_\mu$ defined on $H$.
Charikar \etal proved the following theorem (see Theorem $5.2$) in \cite{CharikarMM09}.
\begin{theorem}[\cite{CharikarMM09}]\label[theorem]{thm:l2-graph-embedding}
  Let $G$ be a graph on $n'$ vertices with maximum degree $D$. Let
  $t< \sqrt{n'}$ and $\ell > 0$ be such that for $t'= D^{\ell+1}\cdot t$, every
  subgraph of $G$ on at most $t'$ vertices is $\ell$-path decomposable. Also, let $\mu$, $t$ and
  $\ell$ satisfy the relation $(1-\mu)^{\ell/9} \leq \frac{\mu}{2(t+1)}$.
Then for every subset $S$ of at most $t$ vertices there exists
  a mapping $\psi_S$ from $S$ to \emph{unit sphere} in $\ell_2$ such that all $u,v \in S$:
  \[
    \norm{\psi_S(u)-\psi_S(v)}_2 ~=~ \rho_\mu^G(u,v) \mper
  \]
\end{theorem}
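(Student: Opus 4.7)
The plan is to produce the embedding $\psi_S$ via Schoenberg's theorem: it suffices to show that the Gram matrix $G$ defined by $G_{uu} = 1$ and $G_{uv} = 1 - \tfrac12 \bigl(\rho_\mu^G(u,v)\bigr)^2$ for $u \ne v$ is positive semidefinite on $S$. A direct calculation shows $G_{uv} = (2\cdot(-1)^{d_G(u,v)}(1-\mu)^{d_G(u,v)} + \mu)/(2(1+\mu))$ for $u \ne v$, so writing $A_{uv} = (-1)^{d_G(u,v)}(1-\mu)^{d_G(u,v)}$ (with $A_{uu}=1$) and $J$ for the all-ones matrix, we have $G_S = (2A_S + \mu(J+I))/(2(1+\mu))$. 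Since $J + I \succeq I$, it suffices to show that $A_S$ has smallest eigenvalue at least $-\mu/2$.

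First I would reduce to a bounded subgraph: replace $G$ by the subgraph $G_S$ obtained by taking $S$ together with all vertices lying on simple paths of length at most $L = O(\ell)$ between pairs of $S$. By the maximum-degree bound this has at most $D^{\ell+1}\cdot t = t'$ vertices, so by hypothesis $G_S$ is $\ell$-path decomposable; moreover $d_{G_S}(u,v) = d_G(u,v)$ for every pair $u,v \in S$ contributing to $A_S$, so the Gram computation is unaffected. Next I would dispatch the tree case: if $G_S$ is a forest then $A_S$ is exactly PSD, via the explicit embedding that assigns an independent Rademacher sign $\chi_e \in \{-1,+1\}$ to each edge $e$, fixes a root $r$, and sets $\psi(u) = (1-\mu)^{d(r,u)/2} \bigl(\prod_{e \in \text{path}(r,u)} \chi_e\bigr) \cdot \mathbf{e}_u$ padded with an orthogonal $\sqrt{\mu/(1+\mu)}$ component per vertex to reach the unit sphere; taking expectations over the signs reproduces the desired inner products.

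For the cyclic case I would induct on the number of $2$-connected blocks of $G_S$: using $\ell$-path decomposability, pick a block $B$ containing a path $P$ of length at least $\ell$ whose interior vertices have degree $2$ in $B$, contract $P$ to a single weighted edge, embed the resulting graph recursively, and then re-insert the interior vertices of $P$ along the geodesic in the current embedding. The effect on the matrix $A$ can be tracked by a signed walk expansion $A_{uv} = \sum_{W : u \to v} (-1)^{|W|}(1-\mu)^{|W|}$ restricted to certain closed-walk corrections: contracting a degree-$2$ path of length $\ell$ introduces an error matrix $E$ whose entries correspond to walks that traverse $P$, and such walks have total signed weight bounded by $(1-\mu)^{\ell/9}$ (the factor $1/9$ emerging from a careful accounting of the repeated traversals forced by $2$-connectivity, as in the analysis of \cite{CharikarMM09}).

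The main obstacle is the error control at the contraction step. A crude entrywise bound gives $\|A_S - A_S^{\text{tree}}\|_{\text{op}} \le (t+1) \cdot (1-\mu)^{\ell/9}$ on $S$, and the hypothesis $(1-\mu)^{\ell/9} \le \mu/(2(t+1))$ is precisely what makes this at most $\mu/2$, which the $\mu(J+I)$ term then absorbs to conclude $G_S \succeq 0$. Carrying out this accounting carefully---in particular verifying that successive block contractions can be combined without the errors amplifying beyond the stated bound, and that the per-block loss really is $(1-\mu)^{\ell/9}$ rather than a weaker $(1-\mu)^{c\ell}$ for smaller $c$---is the technical heart of the argument; once it is complete, a Cholesky-type factorization of $G_S$ produces the desired map $\psi_S$ into the unit sphere of $\ell_2$.
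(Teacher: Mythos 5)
This statement is not proved in the paper at all: it is quoted verbatim from \cite{CharikarMM09} (their Theorem 5.2) and used as a black box, so there is no in-paper argument to compare against. Judged on its own terms, your proposal gets the right frame --- reducing the embedding to positive semidefiniteness of $G_S = \frac{2A_S+\mu(J+I)}{2(1+\mu)}$ with $A_{uv}=(-1)^{d_G(u,v)}(1-\mu)^{d_G(u,v)}$, hence to $\lambda_{\min}(A_S)\ge-\mu/2$, with exactness on trees and an error term on $2$-connected blocks absorbed by the hypothesis $(1-\mu)^{\ell/9}\le\frac{\mu}{2(t+1)}$ --- and this is indeed the spirit of the CMM argument. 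But two of your steps fail as written. First, the tree embedding: with independent uniform signs $\chi_e$, the product $\prod_{e\in\mathrm{path}(r,u)}\chi_e\cdot\prod_{e\in\mathrm{path}(r,v)}\chi_e$ has expectation $0$ for every $u\ne v$ (the symmetric difference of the two root-paths is the nonempty $u$--$v$ path), and attaching distinct basis vectors $\mathbf{e}_u$ makes the off-diagonal inner products vanish identically; so your construction realizes the identity matrix, not $A$. The correct tree statement is that $(\mu-1)^{d(u,v)}$ is the correlation matrix of a sign process propagated down the tree with per-edge correlation $\mu-1$ (flip the parent's sign with probability $1-\mu/2$); that is what makes $A_S\succeq 0$ exactly on forests.

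Second, and more seriously, the entire content of the theorem is the step you defer: showing that contracting the long degree-$2$ paths guaranteed by $\ell$-path decomposability perturbs $A_S$ by at most $(1-\mu)^{\ell/9}$ entrywise, uniformly over the sequence of block contractions. Your stated identity $A_{uv}=\sum_{W:u\to v}(-1)^{|W|}(1-\mu)^{|W|}$ is not true --- $A_{uv}$ depends only on the shortest-path distance, not on a walk sum --- so the proposed bookkeeping device for the error matrix does not exist as described, and you explicitly acknowledge that the per-block accounting and its non-amplification across blocks are unverified. (The preliminary reduction to the local subgraph also needs care: you must take paths of length at most $\ell$, not $O(\ell)$, to stay within $t'=D^{\ell+1}t$ vertices, and you must separately argue that pairs at distance greater than $\ell$ contribute negligibly since their entries are not preserved by the restriction.) As it stands the proposal is a plausible plan whose technical heart is missing, which is precisely why the paper cites \cite{CharikarMM09} rather than reproving the theorem.
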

We use this theorem to prove the main theorem of the section.
\begin{proofof}{of \cref{thm:locally-l2}}
  Let $H'\sim \Hyp$ with $m=\gamma\cdot n$ hyperedges and $N=n_0\cdot n$
  vertices. Given $\eps>0$, from \cref{lem:low-degree} we have that with
  high probability at least $1-\eps/2$, there exists $H_1$ such that 
  the maximum degree of $H_1$ is at most $D=100\cdot\log\inparen{\frac{2n_0}{\eps}}
  \cdot k \cdot \gamma$ with $\abs{\Edg(H')\setminus\Edg(H_1)}\le (\eps/2)\cdot m$.

Using \cref{cor:large-girth} we also have that there exists $\delta > 0$, such that with probability
at least $1-\eps/4$ (for large enough $n$) $H'$ has a sub-hypergraph $H_2$ with $\ttg \geq
\delta\cdot\log n$ and $\abs{\Edg(H') \setminus\Edg(H_2)}\le (\eps/4) \cdot m$.
By \cref{lem:locally-sparse}, there exists $\eta = \Omega_{n_0,k,\gamma,\eps}(1/(\log n))$ such
that $H'$ is $(\tau,\eta)$-sparse with probability at least $1-\eps/4$, for $\tau \geq n^{-1/4}$.
%
  
Hence with probability $1-\eps$, we have that $H = (V(H'), E(H_1) \cap E(H_2))$ satisfies:
  \begin{itemize}
    \item Degree of $H$ is bounded above by $D$.
    \item $H$ is $(\tau,\eta)$-sparse (for $\tau \geq n^{-1/4}$ and $\eta =
      \Omega_{n_0,k,\gamma,\eps}(1/(\log n)$).
    \item Girth of $H$ is at least $\ttg > \delta\cdot\log n$.
    \item $\abs{\Edg(H')\setminus\Edg(H)}\le \eps\cdot m$.
  \end{itemize}
We now show that the metric $\rho^H_{\mu}$ is locally $\ell_2$ embeddable.

Let $G=G_H$ be the incidence graph for the hypergraph $H$. Note that $N \leq \abs{\Vtx(G)} \leq 
  N\cdot (1+\gamma)$ and degree of $G$ is also bounded by $D$. 
  Since a cycle in $G$ is also a cycle in $H$, the girth of $G$ is at also least $\girth \geq \delta
  \cdot \log n$.

  By \cref{lem:sparsity-variable-constraint}, we have $G$ is $(\frac{\tau}{k (1+\gamma)},
  \frac{\eta}{1+\eta})$-sparse. 
By \cref{lem:path-decomposable}, any subgraph of $G$ on at most $\frac{\tau}{k(1+\gamma)}
\cdot (N+m)$ vertices is $\ell$-path decomposable for any $\ell \leq \min\{\girth,
  1/(4\eta)\}$. 
Since $D = 100 \cdot k\gamma \cdot \log(2n_0/\eps)$, there exists $\ell_0 =
\Omega_{k,\gamma,n_0,\eps}(\log n)$ such that $D^{\ell_0+1} \leq n^{1/6}$. We choose $\ell =
\min\inbraces{\girth, 1/(4\eta), \ell_0}$.

Let $\mu_0$ be the smallest $\mu$ such that $\exp\inparen{-\mu\ell/9} \leq \frac{\mu}{2(t+1)}$ (note
that $\frac{1}{\mu} \cdot \exp\inparen{-\mu\ell/9}$ is decreasing in $\mu$). 
Since we must have $\mu \geq 1/\ell$, there exists a $\mu_0$ satisfying
\[
\mu_0 ~\leq~ \frac{9}{\ell} \cdot \inparen{\ln(2(t+1)) + \ln \ell} \mper
\]
From our choice of $\ell$, there exist constants $c = c(k,\gamma,n_0,\eps)$ and $\theta =
\theta(k,\gamma,n_0,\eps) < 1/2$ such that $\mu_0 \leq c \cdot \frac{\log t + \log \log n}{\log n}
< 1$ when $t \leq n^{\theta}$. 
Then, for any $\mu \in [\mu_0,1)$, we have $(1-\mu)^{\ell/9} \leq \exp(-\mu\ell/9) \leq
\frac{\mu}{2(t+1)}$.

We can now apply \cref{thm:l2-graph-embedding} to construct the embedding. 
Given any subset $S$ of $\Vtx(H)$ of size at most $t\le n^{\theta}$, note that $S$ is also a subset
of $\Vtx(G)$. Moreover, we have $t \leq n^{\theta} \leq (N+m)^{1/2}$. Also, we have $t \cdot
D^{\ell+1} \leq n^{1/2} \cdot n^{1/6} = n^{2/3} \leq \frac{\tau}{k(\gamma+1)} \cdot (N+m)$. Thus,
any subgraph of $G$ on $t \cdot D^{\ell+1}$ vertices is $\ell$-path decomposable. 

Thus, when $\mu \geq \mu_0$, by \cref{thm:l2-graph-embedding} 
there exists a mapping $\psi_S$ from $S$ to the unit sphere, such that for all $u,v \in S$, we have
\[
\norm{\psi_S(u)-\psi_S(v)}_2 ~=~ \rho^G_{\mu}(u,v) ~=~ \rho^H_{\mu}(u,v) \mcom
\]
where the last equality uses the fact that for all $u,v \in \Vtx(H)$, $\rho^H_{\mu}(u,v) =
\rho^G_{\mu}(u,v)$ since $d_{G}(u,v) = 2 \cdot d_H(u,v)$.
\end{proofof}

\section{Decompositions of hypergraphs from local geometry}\label{sec:decompositions}


We will construct the Sherali-Adams solution by partitioning the given subset of vertices in to trees,
and then creating a natural distribution over satisfying assignments on trees. We define below the
kind of partitions we need.
\begin{definition}\label[definition]{def:partioning-scheme}
Let $X$ be a finite set. For a set $S$, let $\calP_S$ denote a distribution over partitions of $S$. For $T \subseteq S$, let $\calP_{S|T}$ be the distribution over partitions of $T$ obtained by restricting the partitions in $\calP_S$ to the set $T$.
We say that a collection of distributions $\inbraces{\calP_S}_{\abs{S} \leq t}$ forms a {\deffont consistent partitioning scheme of order} $t$, if 
\[
\forall S \subseteq X, \abs{S} \leq t ~\text{and}~ \forall T \subseteq S 
\qquad \calP_T = \calP_{S|T} \mper
\]
\end{definition}
In addition to being consistent as described above, we also require the distributions to have small probability of cutting the edges for the hypergraphs corresponding to our CSP instances. We define this property below.
\begin{definition}\label[definition]{def:sparse-partitioning}
Let $H = (V,E)$ be a $k$-uniform hypergraph. Let $\inbraces{\calP_S}_{\abs{S} \leq t}$ be a consistent
partitioning scheme of order $t$ for the vertex set $V$, with $t \geq k$. We say the scheme
$\inbraces{\calP_S}_{\abs{S} \leq t}$ is $\eps$-{\deffont sparse} for $H$ if 
\[
\forall e \in E \qquad \Prob{P \sim \calP_e}{P \neq \inbraces{e}} ~\leq~ \eps \mper
\]
\end{definition}
In this section, we will prove that the hypergraphs arising from random CSP instances
admit sparse and consistent partitioning schemes. Recall that for a hypergraph $H$, we define
(\cref{def:metric-dmu})  the
metrics $d^H_{\mu}$ and $\rho^H_{\mu}$ as:
\[
    d^H_\mu(u,v) ~\coloneqq~ 1-(1-\mu)^{2\cdot d_H(u,v)}
     \qquad \text{ and } \qquad \rho^H_\mu(u,v) ~\coloneqq~
     \sqrt{\frac{2\cdot d^H_\mu(u,v)+ \mu}{1+\mu}}\mcom
\]
%
%
\begin{lemma}\label[lemma]{lem:partitioning}
Let $H = (V,E)$ be $k$-uniform hypergraph and let $d_{\mu}$ be the metric as defined
above. Let $H$ be such that for all sets $S \subseteq V$ with $\abs{S} \leq t$, the metric induced
on $\rho_{\mu}$ on $S$ is isometrically embeddable into $\ell_2$. Then, there exists 
$\eps \leq 10k \cdot \sqrt{\mu \cdot t}$ and $\Delta_H=O(1/\mu)$ such that  $H$ admits an
$\eps$-sparse consistent partitioning scheme of order $t$, with each partition consisting of clusters of diameter
at most $\Delta_H$ in $H$.
\end{lemma}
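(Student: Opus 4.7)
The plan is to make consistency automatic by constructing a single distribution $\calP$ over partitions of the full vertex set $V$, and then defining $\calP_S$ as the distribution on partitions of $S$ obtained by restricting each sample from $\calP$. Since restricting first to $S$ and then to $T \subseteq S$ agrees with restricting directly to $T$, the consistency requirement $\calP_{S|T} = \calP_T$ of \cref{def:partioning-scheme} holds by construction; no further coupling between different subsets is required.

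To define $\calP$, I adapt the ball-carving procedure of \cite{CharikarCGGP98} to the metric $\rho_\mu^H$: sample a radius $r$ uniformly in $[R_0/2, R_0]$ for an appropriate constant $R_0$, sample a uniformly random permutation $\pi$ of $V$, and for each $v \in V$ place it in the cluster represented by the first vertex $u$ (in the order $\pi$) satisfying $\rho_\mu^H(u, v) \leq r$. By the triangle inequality each cluster has $\rho_\mu^H$-diameter at most $2R_0$; choosing $R_0$ to be a suitable constant converts this into a $d_H$-diameter bound of $\Delta_H = O(1/\mu)$ via the defining relation between $\rho_\mu^H$ and $d_H$ in \cref{def:metric-dmu}.

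The sparsity bound reduces to estimating, for a single hyperedge $e = \{v_1, \ldots, v_k\}$, the probability that some pair of vertices of $e$ ends up with distinct representatives. For a pair $\{v_i, v_j\} \subseteq e$, we have $d_H(v_i, v_j) = 1$ and hence $\rho_\mu^H(v_i, v_j) = \Theta(\sqrt{\mu})$. The event of separation depends only on the restriction of $\pi$ to $B = \Ball_{\rho_\mu^H}(\{v_i, v_j\}, R_0)$, since vertices outside $B$ capture neither point; moreover, the restriction $\pi|_B$ is a uniformly random permutation of $B$. Assuming $|B| \leq t$, the hypothesis supplies an isometric embedding of $\rho_\mu^H|_B$ into $\ell_2^{|B|-1}$, so the CCGGP98 ball-carving analysis for subsets of $\ell_2^d$ applies and yields a per-pair separation probability of order $\sqrt{t} \cdot \rho_\mu^H(v_i, v_j)/R_0 = O(\sqrt{\mu t})$. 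A union bound over the $k-1$ pairs $\{v_1, v_j\}$ (for $j \geq 2$) then produces an edge-cut probability of $O(k \sqrt{\mu t})$, matching the claimed $\eps \leq 10k\sqrt{\mu t}$.

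The main technical hurdle is guaranteeing $|B| \leq t$: the $d_H$-radius of $B$ is $O(\Delta_H) = O(1/\mu)$, so a naive degree-based bound gives $|B| \leq D^{O(1/\mu)}$, which may exceed $t$ in the parameter regime of interest. I expect to resolve this either by combining the constant degree guarantee of \cref{lem:low-degree} with the local sparsity established in \cref{sec:hypergraphs} (which restricts how fast local neighborhoods grow), or by truncating the analysis to the first $t$ captors appearing in $\pi$ around $e$, so that only a $t$-sized subset participates in the separation event and the CCGGP98 bound can be invoked directly on a subset that \emph{is} $\ell_2$-embeddable by hypothesis.
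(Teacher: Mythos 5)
Your consistency step is fine as far as it goes --- restricting a single global partition of $V$ does make $\calP_{S|T}=\calP_T$ automatic --- but it relocates the entire difficulty into the sparsity bound, and there the argument breaks. The hypothesis of the lemma only supplies $\ell_2$-embeddings of sets of size at most $t$; there is no global embedding of $(V,\rho_\mu^H)$. Consequently you cannot actually run the procedure of \cref{thm:l2-decomposition}: its guarantee $2\sqrt d\cdot\norm{x-y}_2/\Delta$ is a property of that specific algorithm (carving by balls centered at points drawn from the Haar measure on a region of the ambient $\R^d$), not of an arbitrary ball-carving on an $\ell_2$-embeddable metric. The permutation-plus-random-radius procedure you substitute is a different (CKR-type) decomposition whose separation probability for a general metric carries a factor logarithmic in the size of the relevant ball, and local $\ell_2$-embeddability of that ball does not let you import the $\sqrt d$ bound. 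More fatally, the hurdle you flag is not surmountable in the regime where the lemma is applied: there $\mu=\Theta(\log\log n/\log n)$, $t=\Theta(1/\mu)$, the degree $D$ is constant, and the ball of $d_H$-radius $O(1/\mu)$ around an edge can have size $D^{\Omega(1/\mu)}=n^{\Omega(1/\log\log n)}$, vastly larger than $t$; so neither the embeddability hypothesis nor the local-sparsity bounds of \cref{sec:hypergraphs} applies to the set on which your separation event depends. The ``first $t$ captors'' truncation does not help either, since that set is itself a function of $\pi$ and the pair may be captured by a vertex outside it.

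The paper avoids a global construction altogether: for each $S$ with $\abs S\le t$ it embeds $\rho_\mu^H|_S$ onto the unit sphere in $\R^t$, runs the decomposition of \cref{thm:l2-decomposition} on the embedded point set $W_S$ with $\Delta=1/2$ (which yields the $10k\sqrt{\mu t}$ bound and $\Delta_H=O(1/\mu)$ directly), and then \emph{proves} consistency rather than getting it for free: by rigidity of $\ell_2$, any embedding of $T\subseteq S$ is a unitary image of a subset of $W_S$, and \cref{clm:l2-consistency} couples the random ball centers --- drawn from a fixed ambient ball $B(0,R_0+\Delta/2)$ that does not depend on the point set --- to show that the partition of $W_S$ restricted to $T$ has the same law as the partition of $W_T$. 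To salvage your route you would need a single decomposition of all of $(V,\rho_\mu^H)$ with separation probability $O(\sqrt t)\cdot\rho_\mu^H(u,v)/\Delta$, which is exactly what the purely local embeddability hypothesis does not provide.
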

%
%

We use the following result of Charikar et al. \cite{CharikarCGGP98} which shows that
low-dimensional metrics have good \emph{separating decompositions} with bounded diameter \ie
decompositions which have a small probability of separating points at a small distance. 
\begin{theorem}[\cite{CharikarCGGP98}]\label[theorem]{thm:l2-decomposition}
Let $W$ be a finite
  collection of points in $\R^d$ and let $\Delta > 0$ be given. 
  Then there exists a distribution $\calP$ over partitions of $W$
  such that
\begin{itemize}
\item[-] $\forall P \in Supp(\calP)$, each cluster in $P$ has $\ell_2$ diameter at most $\Delta$.
\item[-]  For all $x,y \in W$
\[
\Prob{P \sim \calP}{P ~\text{separates}~ x ~\text{and}~ y} ~\leq~ 2\sqrt{d} \cdot \frac{\norm{x-y}_2}{\Delta} \mper
\]
\end{itemize}
\end{theorem}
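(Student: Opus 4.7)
The plan is to construct the distribution $\calP$ over partitions of $W$ as a randomly shifted axis-aligned grid on $\R^d$ (composed with a random rotation to extract the sharp constant), and then verify the diameter and separation properties separately. Since $W\subset\R^d$ is finite, after translation we may assume $W\subset[0,M)^d$ for some $M>0$. Set the grid side length $s := \Delta/\sqrt d$, so that each axis-aligned $d$-cube of side $s$ has $\ell_2$-diameter exactly $s\sqrt d = \Delta$.

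I would first sample a random shift $\sigma\sim\mathrm{Unif}([0,s)^d)$ and define the partition $P_\sigma$ by: $x,y\in W$ are in the same cluster iff $\lfloor(x_j-\sigma_j)/s\rfloor = \lfloor(y_j-\sigma_j)/s\rfloor$ for every $j\in[d]$. Each cluster lies in a single translate of $[0,s)^d$, so has $\ell_2$-diameter at most $\Delta$, giving the first bullet of the theorem for every $P$ in the support of $\calP$. For the separation bound, fix $x,y\in W$. For any coordinate $j$, conditional on the other coordinates of $\sigma$, the marginal $\sigma_j\bmod s$ is uniform on $[0,s)$, so the probability that $x_j$ and $y_j$ land in different length-$s$ intervals is at most $|x_j-y_j|/s$. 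A union bound over the $d$ coordinates yields
\[
\Pr\bigl[\,P_\sigma\text{ separates }x,y\,\bigr] ~\le~ \frac{1}{s}\sum_{j=1}^{d}|x_j-y_j| ~=~ \frac{\sqrt d}{\Delta}\,\|x-y\|_1.
\]

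To pass from $\|x-y\|_1$ to $\|x-y\|_2$ without losing an additional $\sqrt d$ factor via Cauchy--Schwarz, the plan is to precompose the grid with a uniformly random rotation $R\in SO(d)$ applied to $W$; rotations are isometries, so the diameter condition is preserved. For a fixed $v=x-y$, the vector $Rv$ is then uniform on the sphere of radius $\|v\|_2$, and a standard moment computation gives $\mathbb{E}\|Rv\|_1 = d\,\|v\|_2\cdot\mathbb{E}|U_1| = \Theta(\sqrt d)\,\|v\|_2$, where $U$ is uniform on the unit sphere. Taking the expectation of the per-shift bound against this rotation produces a bound of the desired form $C\sqrt d\,\|x-y\|_2/\Delta$. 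The main obstacle is tightening the constant to the claimed $2\sqrt d$: a naive nested union-bound-plus-Cauchy--Schwarz loses an extra $\sqrt d$ factor (yielding only $O(d)\|x-y\|_2/\Delta$), so one must either (i) exploit the joint randomness of $R$ and $\sigma$ in a sharper averaging argument, or (ii) replace the grid by a ball-growing partition with a random radius drawn from $[\Delta/4,\Delta/2]$ around a uniformly random center, whose diameter is automatic and whose separation probability is bounded by the volume of symmetric differences of Euclidean balls --- this is the route \cite{CharikarCGGP98} take to extract the optimal $2\sqrt d$ factor. All remaining steps (diameter from cube geometry, per-coordinate separation from the uniform shift, and checking that $\calP$ genuinely defines a distribution over partitions) are routine.
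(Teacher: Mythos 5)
There is a genuine gap: the construction you actually carry out (a randomly shifted grid of side $s=\Delta/\sqrt d$, optionally precomposed with a random rotation) provably cannot give the stated bound. For a pair at displacement $v=x-y$ the separation probability under the uniform shift is exactly $1-\prod_{j}(1-\min(|v_j|/s,1))\approx \|v\|_1/s$ for small $\|v\|$ -- this is not a union-bound artifact -- and with $s=\Delta/\sqrt d$ the worst direction gives $\Theta(d)\,\|v\|_2/\Delta$. Adding a uniformly random rotation only replaces $\|v\|_1$ by $\mathbb{E}\|Rv\|_1=\Theta(\sqrt d)\,\|v\|_2$, so the expected separation probability is still $\Theta(d)\,\|v\|_2/\Delta$; there is no ``sharper averaging'' over the joint randomness of $R$ and $\sigma$ that recovers the missing $\sqrt d$, because the per-shift bound is tight. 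So your route (i) is a dead end for the claimed $2\sqrt d\cdot\|x-y\|_2/\Delta$, not merely lossy in the constant. Your route (ii) is the correct construction -- and it is the one this paper itself relies on: the theorem is imported from \cite{CharikarCGGP98}, and the paper restates the procedure (repeatedly pick a uniformly random center $z$ in the $\Delta/2$-neighborhood of $W$ and carve out $B(z,\Delta/2)\cap W'$) only where it needs the algorithm's details, in the consistency claim (\cref{clm:l2-consistency}). But in your write-up route (ii) is only gestured at: the actual content is (a) the observation that, at the first step whose ball touches $\{x,y\}$, the conditional probability of separating them is at most $\mathrm{vol}\inparen{B(x,\Delta/2)\,\triangle\,B(y,\Delta/2)}/\mathrm{vol}\inparen{B(x,\Delta/2)\cup B(y,\Delta/2)}$ (using that both balls lie inside the sampling region since $x,y\in W$), and (b) the high-dimensional estimate that this ratio is at most $O(\sqrt d)\,\|x-y\|_2/\Delta$, e.g.\ by bounding $\mathrm{vol}\inparen{B(0,R)\setminus B(v,R)}\le \|v\|_2\cdot V_{d-1}R^{d-1}$ and using $V_{d-1}/V_d=\Theta(\sqrt d)/R$ at $R=\Delta/2$. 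Neither step appears in your proposal, and citing \cite{CharikarCGGP98} for them is circular in a blind proof of this very statement; as written, the argument establishes only an $O(d)$ separation bound.
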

We also need the observation that the partitions produced by the above theorem are consistent,
assuming the set $S$ considered above lie in a fixed bounded set (using a trivial modification of
the procedure in \cite{CharikarCGGP98}). For the sequel, we use $B(x,\delta)$ to denote the $\ell_2$
ball around $x$ of radius $\delta$ and $B_H(u,r)$ to denote a ball of radius $r$ 
around a vertex $u \in V(H)$. Thus,
\[
B(x,\delta) ~\defeq~ \inbraces{y ~\mid~ \norm{x-y}_2 \leq \delta}
\qquad \text{and} \qquad
B_H(u,r) ~\defeq~ \inbraces{v \in V ~\mid~ d_H(u,v) \leq r} \mper
\]
The balls $B(S,\delta)$ and $B_H(S,r)$ are defined similarly.
\begin{claim}\label[claim]{clm:l2-consistency}
Let $S$ and $T$ be sets such that $T \subseteq S$. Let $W_S = \inbraces{w_u}_{u \in S}$ and
$W_T = \inbraces{w_u'}_{u \in T}$ be $\ell_2$-embeddings of $S$ and $T$  
satisfying  $\phi(W_T) \subseteq W_S \subseteq  B(0,R_0) \subset
\R^d$, for some unitary transformation $\phi$ and $R_0 > 0$. 
Let $\calP_S$ and $\calP_T$ be distributions over partitions of $S$ and $T$ respectively, 
induced by partitions on $W_S$ and $W_T$ as given  by \cref{thm:l2-decomposition}. Then
\[
\calP_{S|T} ~=~ \calP_T \mper
\]
\end{claim}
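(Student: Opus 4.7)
The plan is to exhibit the ``trivial modification'' of the \cite{CharikarCGGP98} procedure explicitly, by moving the randomness from the point set into the ambient space. Concretely, instead of drawing the random permutation/centers from the finite point set, I will generate an infinite i.i.d.\ sequence $(c_1, r_1), (c_2, r_2), \ldots$, where each $c_i$ is uniform in $B(0, R_0)$ and each $r_i$ is drawn from the same radius distribution used in the proof of \cref{thm:l2-decomposition}. Given any finite set $W \subseteq B(0, R_0)$, assign each point $y \in W$ to the cluster labeled by $c_{i(y)}$, where $i(y)$ is the smallest index with $\|y - c_i\|_2 \leq r_i$. This assignment is almost surely well defined (by choosing the radius distribution on a bounded interval and noting that for each $y$ the covering probability per step is bounded below), and it induces a partition of $W$. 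Call the resulting distribution over partitions of $W$ by $\widetilde{\calP}_W$. I will take $\calP_S = \widetilde{\calP}_{W_S}$ and $\calP_T = \widetilde{\calP}_{W_T}$.

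Next I would verify that $\widetilde{\calP}_W$ still satisfies the two conclusions of \cref{thm:l2-decomposition}: the diameter bound is immediate since any cluster is contained in a single $\ell_2$-ball of radius $r_i \leq \Delta$, so has diameter at most $2\Delta$; the separation probability bound $2\sqrt{d}\cdot \|x-y\|/\Delta$ follows from the same union-bound-over-index argument as in \cite{CharikarCGGP98}, where for each index $i$ one bounds the probability that the ball $B(c_i, r_i)$ covers exactly one of $x$ and $y$, and the per-index probabilities telescope due to the geometric decay of the ``still unassigned'' event. The only change is that centers are sampled from $B(0,R_0)$ rather than from the finite point set, which only affects the analysis through constants absorbed into the $2\sqrt{d}$ factor (after possibly rescaling $\Delta$), so the theorem's guarantees survive.

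With this description in hand, consistency becomes almost a tautology. The process $(c_i, r_i)_i$ is defined independently of any point set, and so it can be coupled simultaneously across $W_S$ and $\phi(W_T)$. For every realization of the random sequence, the cluster label $i(y)$ of any point $y \in B(0,R_0)$ depends only on $y$ and on the sequence, not on the ambient finite set; therefore the partition of $\phi(W_T)$ induced by the process equals the restriction to $\phi(W_T)$ of the partition of $W_S$. This gives $\widetilde{\calP}_{W_S \mid \phi(W_T)} = \widetilde{\calP}_{\phi(W_T)}$ as distributions. Finally, to compare $\widetilde{\calP}_{\phi(W_T)}$ with $\widetilde{\calP}_{W_T}$, I would use the rotation invariance of the ambient distribution: since $c_i$ is uniform on $B(0, R_0)$ and $\phi$ is unitary (with $\phi(W_T) \subseteq B(0,R_0)$ because $W_S \subseteq B(0,R_0)$), the pushforward of the process under $\phi^{-1}$ has the same law, and the natural identification $\phi(w_u') \leftrightarrow w_u'$ shows $\widetilde{\calP}_{\phi(W_T)} = \widetilde{\calP}_{W_T}$ as distributions over partitions of $T$. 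Combining the two equalities yields $\calP_{S \mid T} = \calP_T$.

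The step I expect to require the most care is the verification that the modified scheme still obeys the separation bound of \cref{thm:l2-decomposition} with the same $\sqrt{d}$ dependence; the original CCGGP analysis was written with a specific sampling scheme on the finite point set, and one has to check that moving to ambient-ball sampling does not degrade the per-pair separation probability by more than a constant factor that can be absorbed. Everything else (the partition-of-ambient-space viewpoint and the rotation-invariance argument) is essentially bookkeeping once the modified procedure is in place.
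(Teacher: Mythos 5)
Your proposal is correct and is essentially the paper's own argument: the paper likewise modifies the \cite{CharikarCGGP98} procedure to draw the random ball centers from a fixed ambient ball independent of the point set, couples that randomness across $W_S$ and $\phi(W_T)$ so that each point's cluster depends only on the point and the random sequence, and invokes unitary invariance to identify $\calP_{\phi(W_T)}$ with $\calP_T$. The one technical wrinkle you flag (centers uniform in $B(0,R_0)$ may distort the separation bound near the boundary) is sidestepped in the paper by sampling centers from $B(0,R_0+\Delta/2)$, which preserves the guarantees of \cref{thm:l2-decomposition} exactly rather than up to absorbed constants.
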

\begin{proof}
The claim follows simply by considering (a trivial modification of) the algorithm of
\cite{CharikarCGGP98}. For a given set $W$ and a parameter $\Delta$, they produce a partition using
the following procedure:
\begin{itemize}
\item Let $W' = W$.
\item Repeat until $W' = \emptyset$
\begin{itemize}
\item Pick a random point $x$ in $B(W,\Delta/2)$ according to the Haar measure. Let $C_x =
  B(x,\Delta/2) \cap W'$. 
\item If $C_x \neq \emptyset$, set $W' = W' \setminus C_x$. Output $C_x$ as a cluster in the partition.
\end{itemize}
\end{itemize}
\cite{CharikarCGGP98} show that the above procedure produces a distribution over partitions
satisfying the conditions in \cref{thm:l2-decomposition}. We simply modify the procedure to sample a
random point $x$ in $B(0,R_0 + \Delta/2)$ instead of $B(S,\Delta/2)$. 
This does not affect the separation probability of any two points, since the only non-empty clusters
are still produced by the points in $B(S,\Delta/2)$. 
%

Let $P$ be a partition of $S$ produced by the above procedure when applied to the point set $W_S$,
and let $P'$ be a random partition produced when applied to the point set $\phi(W_T)$. 
It is easy to see from the above procedure that the distribution $\calP_T$ 
is invariant under a unitary transformation of $W_T$. 
By coupling the random choice of a
point in $B(0,R_0 + \Delta/2)$ chosen at each step in the procedures applied to $W_S$ and $\phi(W_T)
\subseteq W_S$, we get that $P(T) = P'$ \ie the partition $P$ restricted to
$T$ equals $P'$. 
Thus, we get $\calP_{S|T} = \calP_T$.
\end{proof}
We can use the above to prove \cref{lem:partitioning}.
\begin{proofof}{of \cref{lem:partitioning}}
Given a set $S$, let $W_S$ be an $\ell_2$ embedding of the metric $\rho_{\mu}$ restricted to
$S$. Since, $|S| \leq t$, we can assume $W_S \in \R^t$. We apply partitioning procedure of
Charikar \etal from \cref{thm:l2-decomposition} with $\Delta = 1/2$. From the definition of the metric
$\rho^H_{\mu}$, we get that there exists a $\Delta_H = O(1/\mu)$ such that $\rho^H_{u,v} \leq 1/2 
~\implies~ d_H(u,v) \leq \Delta_H$. Moreover, for $u,v$ contained in an edge $e$, we have that
$\rho_{\mu}(u,v) \leq \sqrt{5\mu}$ and hence the probability that $u$ and $v$ are separated is
at most $10\sqrt{\mu \cdot t}$. Thus, the probability that any vertex in $e$ is separated from $u$ is at
most $10k \cdot \sqrt{\mu \cdot t}$.

Finally, for any $S \subseteq T$, if $W_S$ and $W_T$ denote the corresponding $\ell_2$ embeddings,
by the rigidity of $\ell_2$ we have that for $\phi(W_T) \subseteq W_S$ for some unitary
transformation $\phi$. Thus, by \cref{clm:l2-consistency}, we get that this is a consistent
partitioning scheme of order $t$.
\end{proofof}
%
%


\section{The Sherali-Adams Integrality Gaps construction}\label{sec:sagaps}

\subsection{Integrality Gaps from the Basic LP}
Recall that the basic LP relaxation for \maxkcspq($f$) as given in \cref{fig:basic-lp}.
%
%
In this section, we will prove \cref{thm:main}. We recall the statement below.
%
%
\maintheorem*
Let $\Phi_0$ be a $(c,s)$ integrality gap instance for the basic LP relaxation for
\maxkcspq($f$) with $n_0$ variables and $m_0$ constraints. We use it to construct a new integrality
gap instance $\Phi$. The construction is similar to the gap instances
constructed by Khot et al. \cite{KhotTW14} discussed in the next section. However, we describe this
construction first since it's simpler. The procedure for constructing the instance $\Phi$ is
described in \cref{fig:gap-instance}.
\begin{figure}[htb]
\hrule
\vline
\hspace{10 pt}
\begin{minipage}[t]{0.95\linewidth}
\vspace{10 pt}
{
\underline{\textsf{Given}}: A $(c,s)$ gap instance $\Phi_0$ on $n_0$ variables, for the basic LP.

\smallskip

\underline{\textsf{Output}}: An instance $\Phi$ with $N = n \cdot n_0$ variables and $m$ constraints. 

\medskip

The variables are divided into $n_0$ sets $X_1, \ldots, X_{n_0}$, one for each variable in
$\Phi_0$. We generate $m$ constraints  independently at random as follows:
\begin{enumerate}
\item Sample a random constraint $C_0 \sim \Phi_0$. Let $S_{C_0} = \inbraces{i_1, \ldots, i_k}
  \subseteq [n_0]$  denote the set of variables in this constraint.
\item For each $j \in [k]$, sample a random variable $x_{i_j} \in X_{i_j}$. 
\item Add the constraint $f((x_{i_1}, \ldots, x_{i_k}) + b_{C_0})$ to the instance $\Phi$.
\end{enumerate}
}
\smallskip
\end{minipage}
\hfill\vline
\hrule
\caption{Construction of the gap instance $\Phi$}
\label[figure]{fig:gap-instance}
\end{figure}

\subsubsection*{Soundness}
We first prove that no assignment satisfies more than $s+\eps$ fraction of constraints for the
above instance.
\begin{lemma}\label[lemma]{lem:basic-lp-soundness}
For every $\eps > 0$, there exists $\gamma = \gamma(\eps, n_0, q)$ such that for an instance $\Phi$
generated by choosing at least $\gamma \cdot n$ constraints independently at random as above, we
have with probability $1 - \exp\inparen{-\Omega(n)}$, $\opt(\Phi) < s+\eps$.
\end{lemma}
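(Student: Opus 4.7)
The plan is a standard ``random assignment plus union bound'' argument. Fix any candidate assignment $\sigma : \bigcup_{i \in [n_0]} X_i \to \qary$. For each variable $i \in [n_0]$ of the template instance $\Phi_0$, the assignment $\sigma$ induces a distribution $\pi_i$ on $\qary$, namely $\pi_i(b) = \frac{1}{n} \cdot \abs{\inbraces{x \in X_i : \sigma(x) = b}}$.

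The value $\val_\Phi(\sigma)$ is an average of $m$ i.i.d.\ $\{0,1\}$-valued random variables, one per constraint. The expectation of each such variable is
\[
\Ex{C_0 \sim \Phi_0}{\Ex{x_{i_j} \in X_{i_j}}{f\inparen{\sigma(x_{i_1}) + b_{C_0,1}, \ldots, \sigma(x_{i_k}) + b_{C_0,k}}}}
~=~ \Ex{y \sim \prod_i \pi_i}{\val_{\Phi_0}(y)} \mcom
\]
by swapping the order of the expectations: sampling a variable $x_{i_j}$ uniformly from $X_{i_j}$ and reading off $\sigma(x_{i_j})$ is the same as sampling $y_{i_j} \sim \pi_{i_j}$. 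Since the right-hand side is the value of a (randomized) assignment to $\Phi_0$, it is at most $\opt(\Phi_0) < s$.

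Now I would apply a Chernoff--Hoeffding bound to the $m$ i.i.d.\ indicator variables: for any fixed $\sigma$,
\[
\Pr\insquare{\val_\Phi(\sigma) ~>~ s + \eps} ~\leq~ \exp\inparen{-\Omega(\eps^2 \cdot m)} \mper
\]
The final step is a union bound over all $q^N = q^{n_0 n}$ possible assignments $\sigma$. For this to give a failure probability of $\exp(-\Omega(n))$, it suffices that $\eps^2 \cdot m \gg n_0 \cdot n \cdot \log q$; choosing $\gamma = \gamma(\eps, n_0, q) = C \cdot n_0 \log q / \eps^2$ for a large enough absolute constant $C$ ensures this, since $m \geq \gamma \cdot n$.

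There is no real obstacle here; the only point requiring slight care is the interchange-of-expectations step that rewrites $\Ex{\val_\Phi(\sigma)}$ as an expected value of $\Phi_0$ under a product distribution, so that the bound $\opt(\Phi_0) < s$ can be invoked. Everything else is a routine concentration and union bound, with $\gamma$ chosen to absorb the entropy of the assignment space.
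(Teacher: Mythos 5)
Your proposal is correct and follows essentially the same argument as the paper: the induced distributions $\pi_i$ are exactly the paper's random variables $Z_i$, the interchange of expectations yielding a randomized assignment to $\Phi_0$ (hence value $< s$) is the same key step, and the Chernoff bound plus union bound over $q^{n_0 n}$ assignments with $\gamma = O(n_0 \log q / \eps^2)$ matches the paper's conclusion.
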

\begin{proof}
Fix an assignment $\sigma \in \qary^{N}$. We will first consider $\ex{\sat_{\Phi}\inparen{\sigma}}$ for a randomly
generated $\Phi$ as above.
\begin{align*}
\Ex{\Phi}{\sat_{\Phi}\inparen{\sigma}} 
&~=~ \expop_{C_0 \in {\Phi_0}} \expop_{x_{i_1} \in X_{i_1}} \cdots
                            \Ex{x_{i_k} \in X_{i_k}}{f(\sigma(x_{i_1})+b_{i_1}, \ldots,
                            \sigma(x_{i_k})+b_{i_k})} \\
&~=~ \expop_{C_0 \in {\Phi_0}} \Ex{Z_1,\ldots Z_{n_0}}{f(Z_{C_0} + b_{C_0})} \mcom
\end{align*}
where for each $i \in [n_0]$, $Z_i$ is an independent random variable with the distribution
\[
\prob{Z_i = b} ~\defeq~ \Ex{x \in X_i}{\indicator{\sigma(x) = b}} \mcom
\]
and $Z_{C_0}$ denotes the collection of variables in the constraint $C_0$ \ie $Z_{C_0} =
\inbraces{Z_i}_{i \in S_{C_0}}$.
Thus, the random variables $Z_1, \ldots, Z_{n_0}$ define a random assignment to the variables in
$\Phi_0$, which gives, for any $\sigma$
\[
\Ex{\Phi}{\sat_{\Phi}\inparen{\sigma}} ~=~ \expop_{C_0 \in {\Phi_0}} \Ex{Z_1,\ldots
  Z_{n_0}}{f(Z_{C_0} + b_{C_0})} ~<~ s \mper
\]
Consider a randomly added constraint $C$ to the instance $\Phi$. We have that 
\[
\prob{C(\sigma) = 1} ~=~ \Ex{\Phi}{\sat_{\Phi}(\sigma)} ~<~ s \mcom 
\]
for any fixed $\sigma$ over a random choice of the constraint $C$. Thus, for an instance $\Phi$ with
$m$ independently and randomly generated constraints, we have
\begin{align*} 
\Prob{\Phi}{\sat_{\Phi}(\sigma) ~\geq~ s + \eps}
&~\leq~
\Prob{\Phi}{\sat_{\Phi}(\sigma) 
~\geq~ \Ex{\Phi}{\sat_{\Phi}(\sigma)} + \eps} \\
&~=~ \Prob{\Phi}{\Ex{C \in \Phi}{\indicator{C(\sigma)=1}} ~\geq~ \Ex{\Phi}{\sat_{\Phi}(\sigma)} +
  \eps} \\
&~\leq~ \exp\inparen{-\Omega(\eps^2 \cdot m)} \mper
\end{align*}
Taking a union bound over all assignments, we get
\[
\Prob{\Phi}{\exists \sigma ~~\sat_{\Phi}(\sigma) ~\geq~ s + \eps}
~\leq~ q^{n \cdot n_0} \cdot \exp\inparen{-\eps^2 \cdot m} \mcom
\]
which is at most $\exp\inparen{-\Omega(n)}$ for $m = O(((\log q)/\eps^2)\cdot n \cdot n_0)$.
\end{proof}

\subsubsection*{Completeness}
To prove the completeness, we first observe that the instance $\Phi$ as constructed above is also a
gap instance for the basic LP. We will then ``boost'' this hardness to many levels of the
Sherali-Adams hierarchy.
\begin{lemma}\label[lemma]{lem:basic-lp-solution}
For every $\eps > 0$, there exists $\gamma = \gamma(\eps)$ such that for an instance $\Phi$
generated by choosing at least $\gamma \cdot n$ constraints independently at random as above, with probability $1 - \exp\inparen{-\Omega(n)}$ there exist distributions
$\treedist_{S_C}$ over $\qary^{S_C}$ for each $C \in \Phi$, and distributions $\treedist_{i}$ over $\qary$
for each variable $x_i \in [n \cdot n_0]$, satisfying
\begin{itemize}
\item[-] For all $C \in \Phi$ and all $i \in S_C$, $\treedist_{S_C | \{i\}} = \dist_i$.
\item[-] The distributions satisfy  $\ExpOp_{C \in \Phi} \Ex{\alpha \sim \treedist_{S_C}}{f(\alpha +
    b_C)} ~\geq~ c - \frac{\eps}{10}$.
\end{itemize}
\end{lemma}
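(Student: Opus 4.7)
The plan is to directly transport the basic LP solution of $\Phi_0$ to the random instance $\Phi$, without any additional machinery. Let $\{x^{(0)}_{(S_{C_0},\alpha)}\}$ and $\{x^{(0)}_{(i,b)}\}$ denote the variables of the basic LP for $\Phi_0$ that witness $\sdpopt(\Phi_0) \geq c$. They induce distributions $\dzero_{S_{C_0}}$ on $\qary^{S_{C_0}}$ for each $C_0 \in \Phi_0$ and marginals $\dzero_i$ on $\qary$ for each $i \in [n_0]$, satisfying the singleton consistency $\dzero_{S_{C_0} \mid \{i\}} = \dzero_i$ for every $i \in S_{C_0}$.

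Define the candidate local distributions for $\Phi$ as follows. For each $j \in [n_0]$ and each variable $x \in X_j$ set $\treedist_x := \dzero_j$. For each constraint $C \in \Phi$ generated (as in \cref{fig:gap-instance}) by sampling $C_0 \in \Phi_0$ with $S_{C_0} = \{i_1,\ldots,i_k\}$ and then $x_{i_\ell} \in X_{i_\ell}$ for each $\ell \in [k]$, set $\treedist_{S_C} := \dzero_{S_{C_0}}$, viewed as a distribution on $\qary^{S_C}$ via the bijection $i_\ell \leftrightarrow x_{i_\ell}$. Since the indices $i_1,\ldots,i_k$ of $S_{C_0}$ are distinct, the sampled variables lie in distinct parts and hence form a genuine $k$-subset of $[N]$, so this identification is well-defined. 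The required singleton consistency $\treedist_{S_C \mid \{x_{i_\ell}\}} = \treedist_{x_{i_\ell}}$ is then immediate from the singleton consistency of the basic LP for $\Phi_0$ together with $b_C = b_{C_0}$.

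For the objective, fix a realization of $\Phi$ and let $C_0(C)$ be the template constraint of $\Phi_0$ that $C$ was sampled from. Set
$$Y_C \;:=\; \Ex{\alpha \sim \dzero_{S_{C_0(C)}}}{f(\alpha + b_{C_0(C)})} \;\in\; [0,1],$$
so that $\Ex{C \in \Phi}{\Ex{\alpha \sim \treedist_{S_C}}{f(\alpha + b_C)}} = \tfrac{1}{m}\sum_{C \in \Phi} Y_C$. Because each constraint of $\Phi$ is drawn independently, and each $Y_C$ is a deterministic function of $C_0(C)$ alone, the $Y_C$'s are i.i.d.\ random variables in $[0,1]$ with common mean
$$\Ex{C_0 \in \Phi_0}{\,\Ex{\alpha \sim \dzero_{S_{C_0}}}{f(\alpha + b_{C_0})}\,} \;=\; \sdpopt(\Phi_0) \;\geq\; c.$$
A standard Hoeffding bound then yields $\tfrac{1}{m}\sum_C Y_C \geq c - \eps/10$ except with probability $\exp(-\Omega(\eps^2 m))$, which is $\exp(-\Omega(n))$ once $\gamma = \gamma(\eps)$ is chosen large enough that $m = \gamma n = \Omega(n/\eps^2)$.

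There is essentially no obstacle in this lemma: the basic LP is local enough that a solution on the template instance pulls back verbatim, and concentration is textbook. The genuine difficulty is deferred to the following section, where this coarse per-constraint solution must be upgraded to a consistent family of $t$-variate local distributions for $t = \Omega(\log N / \log\log N)$ using the consistent partitioning schemes of \cref{sec:decompositions} together with the local $\ell_2$-embeddability of $\rho_\mu^H$ from \cref{thm:locally-l2}.
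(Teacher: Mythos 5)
Your proposal is correct and follows essentially the same route as the paper: pull back the basic LP solution of $\Phi_0$ verbatim via the template map ($\treedist_{S_C} := \dzero_{S_{C_0}}$, $\treedist_x := \dzero_j$), observe that singleton consistency is inherited, and apply a Chernoff--Hoeffding bound to the i.i.d.\ per-constraint LP values to get the objective within $\eps/10$ of $c$ with probability $1-\exp(-\Omega(n))$ for $m = \gamma n$ with $\gamma = O(1/\eps^2)$. Your added remark on well-definedness of the relabeling (the sampled variables lie in distinct parts) is a small point the paper leaves implicit.
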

\begin{proof}
For each $C_0 \in \Phi_0$ and each $j \in [n_0]$, let $\dzero_{S_{C_0}}$ and $\dzero_{j}$ denote the
basic LP solution satisfying
\[
\dzero_{S_{C_0} | j} ~=~ \dzero_{j} ~~\forall C_0 \in \Phi_0 ~\forall j \in S_{C_0}
\qquad
\text{and}
\qquad
\ExpOp_{C_0 \in \Phi_0} \Ex{\alpha \sim \dzero_{S_{C_0}}}{f(\alpha + b_{C_0})} ~\geq~ c \mper
\]
Each constraint $C \in \Phi$ is sampled according to some constraint $C_0 \in \Phi_0$, and we take
$\treedist_{S_C} \defeq \dzero_{S_{C_0}}$ for the corresponding contraint $C_0 \in \Phi_0$. Also,
each variable $x_i$ for $i \in [n_0 \cdot n]$, belongs to one of the sets $X_j$ for $j \in
[n_0]$, and we take $\treedist_{i} \defeq \dzero_{j}$ for the corresponding $j \in [n_0]$.

The consistency of the distributions follows immediately from the construction of the instance
$\Phi$. Let $C \in \Phi$ be any constraint and let $C_0$ be the corresponding constraint in
$\Phi_0$. If $S_{C_0} = (j_1, \ldots, j_k)$, then $S_C = (i_1, \ldots, i_k)$ where each $i_r \in
\{j_r\} \times [n]$ for all $r \in [k]$. Thus, for any $r \in [k]$,
\[
\treedist_{S_C | i_r} ~=~ \dzero_{S_{C_0} | j_r} ~=~ \dzero_{j_r} ~=~ \treedist_{i_r} \mper
\]
To bound the objective value, we again consider its expectation over a randomly generated instance
$\Phi$. Let $C$ be a random constraint added to $\Phi$. Then, if we define $\treedist_{S_C}$ as
above for this constraint, we have
\[
\ExpOp_{C} \Ex{\alpha \in \treedist_{S_C}}{f(\alpha + b_C)} ~=~ \ExpOp_{C_0 \in \Phi_0}\Ex{\alpha
  \sim \dzero}{f(\alpha + b_{C_0})} ~\geq~ c \mper
\]
Thus, the expected contribution of each constraint is at least $c$. The probability that the average
of $m$ constraints deviates by at least $\eps/10$ from the expectation, is at most
$\exp\inparen{- \Omega(\eps^2 \cdot m)}$. 
There exists $\gamma = O(1/\eps^2)$ such that for $m \geq
\gamma \cdot n$, the probability is at most $\exp(- \Omega(n))$.
\end{proof}
To construct local distributions for the Sherali-Adams hierarchy, we will consider (a slight
modification) the hypergraph $H$ corresponding to the instance $\Phi$. We first show that
distributions on edges of this hypergraph can be consistently
propagated in a tree, provided they agree on intersecting vertices. 

For a set $U \subseteq \Vtx(H)$ in a hypergraph
$H$, recall that $\cl(U)$ includes all paths of lengths at most 1 between any two vertices in
$U$. Thus, $E(\cl(U)) = \inbraces{e \in E ~\mid~ \abs{e \cap U} \geq 2}$.
Note that \cref{lem:basic-lp-solution} implies that edges forming a tree in $H$ satisfy the
hypothesis of \cref{lem:tree-propagation} below. 
\begin{lemma}\label[lemma]{lem:tree-propagation}
Let $H=(V,E)$ be a $k$-uniform hypergraph. Let $U \subseteq V$ and let the set of edges
$E(\cl(U))$ form a tree.  For each $e \in E(\cl(U))$, 
let $\treedist_e$ be a distribution on $[q]^e$ such that for any $u \in U$ and $e_1, e_2 \in E(\cl(U))$ such
that $e_1 \cap e_2 = \{u\}$, we have $\treedist_{e_1|u} = \treedist_{e_2|u} = \treedist_u$. 
Then, 
\begin{itemize}
\item[-] there exists a distribution $\treedist_U$ on $[q]^U$ such that $\treedist_{U | e \cap U} =
  \treedist_{e | e \cap U}$ for all $e \in E(U)$. 
\item[-] If $U' \subseteq U$ is such that the edges in $E(\cl(U'))$ form a subtree of $E(\cl(U))$, then 
$\treedist_{U | U'} = \treedist_{U'}$.
\end{itemize}
\end{lemma}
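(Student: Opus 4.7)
The plan is to build $\treedist_U$ by rooting the tree of edges $E(\cl(U))$ and propagating distributions from the root outward, one edge at a time. Concretely, let $E_0 = E(\cl(U))$ and view it as a tree whose ``nodes'' are the hyperedges, with two hyperedges adjacent if they share a vertex of $U$; acyclicity of this structure is exactly the assumption that $E_0$ forms a tree in $H$ (any intersection pattern beyond a single shared vertex would produce a hypergraph cycle in the sense of \cref{sec:prelims}). Pick a root hyperedge $e_0 \in E_0$ and order the edges $e_0, e_1, \ldots, e_r$ so that each $e_j$ with $j \geq 1$ intersects $e_0 \cup \cdots \cup e_{j-1}$ in exactly one vertex $u_j \in U$.

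First I would define $\treedist_U$ inductively: initialize with the distribution $\treedist_{e_0}$ on $e_0$, and when adding $e_j$ extend the current distribution on $e_0 \cup \cdots \cup e_{j-1}$ by sampling the fresh vertices $e_j \setminus \{u_j\}$ from $\treedist_{e_j}$ conditioned on the value at $u_j$. The hypothesis that $\treedist_{e | u} = \treedist_u$ whenever $u \in e \cap U$ ensures that the conditional distribution used in the extension step is well-defined and agrees on the shared vertex, so the resulting joint distribution $\treedist_{e_0 \cup \cdots \cup e_j}$ has the correct marginal on every $e_i$, $i \leq j$. Restricting the final distribution to $U \subseteq \bigcup_j e_j$ yields $\treedist_U$; by construction $\treedist_{U \mid e \cap U} = \treedist_{e \mid e \cap U}$ for every $e \in E_0$. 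A short independence-of-root argument (or: an induction showing that the joint law is a product of a root marginal with transition kernels along edges, which is symmetric in the tree) shows that this distribution does not depend on the choice of root $e_0$ or the ordering; this symmetry will be essential for the second part.

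For the subtree statement, let $U' \subseteq U$ with $E(\cl(U'))$ a subtree of $E(\cl(U))$. The plan is to choose the root and ordering above so that the edges of $E(\cl(U'))$ come first in the order, which is possible precisely because the subtree is connected within the larger tree. Running the same inductive construction, the distribution on $\bigcup_{e \in E(\cl(U'))} e$ produced in the first phase is exactly the one that the lemma (applied to $U'$) assigns, and its marginal on $U'$ is $\treedist_{U'}$. The later extension steps only add new variables outside of $\bigcup_{e \in E(\cl(U'))} e$ and therefore do not change the marginal on $U'$, so $\treedist_{U \mid U'} = \treedist_{U'}$, as required.

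The main obstacle I anticipate is purely bookkeeping: one has to check carefully that when $e_j$ is attached to the growing subtree it shares exactly one vertex, and that this vertex necessarily lies in $U$ (so that the hypothesis $\treedist_{e_{j-1} \mid u_j} = \treedist_{e_j \mid u_j}$ actually applies). Both facts follow from the definition $E(\cl(U)) = \{e \in E : |e \cap U| \geq 2\}$ together with the tree/acyclicity assumption: a second shared vertex between $e_j$ and the earlier edges would yield two internally-disjoint paths between two vertices of $U$ inside $E_0$, contradicting acyclicity, and the shared vertex must lie in $U$ because both $e_j$ and the earlier edge touching it have at least two vertices in $U$ and any second vertex in $U$ would again violate the tree structure. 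Once these local checks are in place, the construction and the subtree-restriction property go through mechanically.
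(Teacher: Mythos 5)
Your proposal is correct and follows essentially the same route as the paper: an inductive extension along a tree traversal in which each new hyperedge meets the already-processed edges in a single vertex of $U$, order-independence via the product form of the resulting joint law, and the subtree claim by choosing a traversal that exhausts $E(\cl(U'))$ first. The paper makes the order-independence explicit by writing the closed-form expression $\treedist_{\cl(U)}(\alpha) = \prod_e \treedist_e(\alpha(e)) / \prod_u \treedist_u(\alpha(u))^{\deg(u)-1}$ and then marginalizing to $U$, but this is the same symmetry argument you sketch.
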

\begin{proof}
%
We define the distribution by starting with an arbitrary edge and traversing the tree in an
arbitrary order. Let $e_1, \ldots, e_r$ be a traversal of the edges in $E(\cl(U))$ such that for all $i$,
$\abs{\inparen{\cup_{j<i}e_j} \cap e_i} = 1$. Let $U_0 = \cup_{j<i}e_j$ be the set of vertices for
which we have already sampled an assignment and let $e_i$ be the next edge in the traversal, with $u$
being the unique vertex in $e_i \cap U_0$. We sample an assignment to the vertices in
$e$, conditioned on the value for the vertex $u$. Formally, we extend the distribution $\treedist_{U_0}$
to $U_0 \cup e$ by taking, for any $\alpha \in \qary^{U_0 \cup e}$
\[
\treedist_{U_0 \cup e} (\alpha)
~=~
\treedist_{U_0}(\alpha(U_0)) \cdot
\frac{\treedist_e(\alpha(e))}{\treedist_{e|u}(\alpha(u))} 
~=~
\treedist_{U_0}(\alpha(U_0)) \cdot
\frac{\treedist_e(\alpha(e))}{\treedist_{u}(\alpha(u))} 
\mper
\]
The above process defines a distribution $\treedist_{\cl(U)}$ on $\cl(U)$, with
\[
\treedist_{\cl(U)}(\alpha) ~=~ \frac{\prod_{e \in E(U)} \treedist_e(\alpha(e))}{ \prod_{u \in \cl(U)}
  \inparen{\treedist_{u}(\alpha(u))}^{\deg(u)-1}} \mper
\]
In the above expression, we use $\deg(u)$ to denote the degree of vertex $u$ in tree formed by the
edges in $E(\cl(U))$ \ie $\deg(u) = \abs{\inbraces{e \in E(\cl(U)) ~|~ u \in e}}$. We then define the
distribution $\treedist_U$ as the marginalized distribution $\treedist_{\cl(U) | U}$ \ie
\[
\treedist_U(\alpha) ~=~ \sum_{\beta \in \qary^{\cl(U)} \atop \beta(U) = \alpha}
\treedist_{\cl(U)}(\beta) \mper
\]
Note that the distribution $\treedist_{\cl(U)}$ and hence also the distribution $\treedist_U$ are independent
of the order in which we traverse the edges in $E(\cl(U))$. 
Also, since the above process samples each
edge according to the distribution $\treedist_e$, we have that for any $e \in E(U)$, $\treedist_{\cl(U) | e} =
\treedist_e$. Thus, also for any $e \in E(U)$, $\treedist_{U | e \cap U} = \treedist_{e | e \cap U}$.

Let $U' \subseteq U$ be any set such that $E(\cl(U'))$ forms a subtree of $E(\cl(U))$. Then there
exists a traversal $e_1, \ldots, e_r$, and $i \in [r]$ such that $e_j \in E(\cl(U')) ~\forall j \leq i$ and $e_j
\notin E(\cl(U')) ~\forall j > i$. However, the distribution defined by the partial traversal $e_1,
\ldots, e_i$ is precisely $\treedist_{\cl(U')}$. Thus, we get that $\treedist_{\cl(U) | \cl(U')} =
\treedist_{\cl(U')}$ which implies $\treedist_{U | U'} = \treedist_{U'}$.
\end{proof}
We can now prove the completeness for our construction using consistent decompositions.
\begin{lemma}\label[lemma]{lem:basic-lp-completeness}
Let $\eps > 0$ and let $\Phi$ be a random instance of \maxkcspq($f$) generated by choosing $\gamma
\cdot n$ constraints independently at random as above. Then, there is a $t =
\Omega_{\eps,k,n_0}\inparen{\frac{\log n}{\log \log n}}$, such that
with probability $1 - \eps$ over the choice of $\Phi$, there exist distributions
$\inbraces{\dist_S}_{\abs{S} \leq t}$ satisfying:
\begin{itemize}
\item[-] For all $S \subseteq V$ with $\abs{S} \leq t$, $\dist_S$ is a distribution on $\qary^S$.
\item[-] For all $T \subseteq S \subseteq V$ with $\abs{S} \leq t$, $\dist_{S | T} = \dist_T$.
\item[-] The distributions satisfy
\[
\ExpOp_{C \in \Phi} ~\Ex{\alpha_C \sim \dist_{S_C}}{f(\alpha_C + b_C)} ~\geq~ c - \eps \mper
\]
\end{itemize}
\end{lemma}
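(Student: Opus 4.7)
The plan is to construct each $\dist_S$ in two stages: first sample a partition of $S$ into clusters of bounded $H$-diameter using the consistent partitioning scheme of \cref{lem:partitioning}, and then within each cluster propagate the basic LP distributions of \cref{lem:basic-lp-solution} along the (locally tree-like) hypergraph, using \cref{lem:tree-propagation}. Consistency of the family $\{\dist_S\}$ will come from consistency of the partitioning scheme plus the subtree-marginal property guaranteed by \cref{lem:tree-propagation}, and the objective bound will follow because in expectation only an $O(\eps)$ fraction of constraints are either deleted when passing to $H$ or have their vertices split across clusters.

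To set parameters, first apply \cref{thm:locally-l2} to the random hypergraph $H'$ associated with $\Phi$: with probability at least $1-\eps/2$, we obtain $H \subseteq H'$ with girth $\girth \geq \delta \log n$, at most an $\eps/10$ fraction of edges removed, and such that $\rho^H_\mu$ restricted to any set of size $\leq t \leq n^\theta$ is $\ell_2$-embeddable, whenever $\mu \geq c' \cdot (\log t + \log\log n)/\log n$. I will then choose $t = c_\eps \cdot \log n / \log\log n$ and $\mu$ at the minimum allowed value, so that $10k\sqrt{\mu t} \leq \eps/10$ and so that $\Delta_H = O(1/\mu) < \girth$. Applying \cref{lem:partitioning} to $H$ with this $\mu$ yields an $(\eps/10)$-sparse consistent partitioning scheme $\{\calP_S\}_{|S| \leq t}$ with cluster $H$-diameter at most $\Delta_H$. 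In parallel, \cref{lem:basic-lp-solution} (applied with $\eps$-parameter $\eps/10$) provides basic LP distributions $\{\treedist_e\}_{e \in E(\Phi)}$ and $\{\treedist_i\}$, consistent on shared singletons, with expected objective at least $c-\eps/10$.

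Now for each $S$ with $|S|\leq t$, define $\dist_S$ by sampling $P \sim \calP_S$, and independently for each cluster $U \in P$ sampling from $\treedist_U$, where $\treedist_U$ is the distribution produced by \cref{lem:tree-propagation} applied to $\cl(U)$. This is legitimate: since the $H$-diameter of $U$ is at most $\Delta_H \ll \girth$, the edges $E(\cl(U))$ form a tree, and the hypothesis of \cref{lem:tree-propagation} follows from consistency of the $\{\treedist_e\}$ on singletons. For consistency of the family, fix $T \subseteq S$. By $\calP_{S|T}=\calP_T$, I can couple partitions so that every cluster $U'$ of the sampled $P' \sim \calP_T$ equals $U \cap T$ for some cluster $U$ of $P \sim \calP_S$. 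Since $E(\cl(U\cap T))$ is a subtree of $E(\cl(U))$, the second bullet of \cref{lem:tree-propagation} gives $\treedist_{U|U\cap T} = \treedist_{U\cap T}$; taking products over clusters yields $\dist_{S|T} = \dist_T$.

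For the objective, consider any constraint $C$ with $S_C = e$. If $e \in E(H)$ and the sampled partition leaves $e$ inside a single cluster $U$ (which happens with probability at least $1-\eps/10$ over $\calP_e$), then $e \in E(\cl(U))$ and \cref{lem:tree-propagation} gives $\treedist_{U|e} = \treedist_e$, so the conditional contribution is $\Ex{\alpha \sim \treedist_e}{f(\alpha+b_C)}$. Summing over $C$: at most an $\eps/10$ fraction of constraints are lost to edge removal in $H' \to H$, at most an $\eps/10$ fraction are lost to the partition cutting $e$, and the remaining constraints contribute according to the basic LP distribution whose average is $\geq c - \eps/10$; hence $\Ex{C}{\Ex{\dist_{S_C}}{f(\alpha+b_C)}} \geq c - \eps$. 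The main obstacle is the joint parameter trade-off: the separation probability $\sqrt{\mu t}$ must stay below $\eps/(10k)$ while $\mu$ is forced above $(\log t + \log\log n)/\log n$, and this is precisely what pins $t$ down to $\Theta(\log n / \log\log n)$ levels.
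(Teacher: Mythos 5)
Your overall strategy is the paper's: low-degree/high-girth pruning via \cref{thm:locally-l2}, a consistent $\eps$-sparse partitioning scheme from \cref{lem:partitioning} with $\mu t$ a small constant, tree propagation of the basic LP distributions within clusters, and the same three-way error accounting for the objective. The parameter trade-off you identify is also exactly what pins down $t = \Theta(\log n/\log\log n)$.

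There is, however, a genuine gap in how you build the per-cluster distributions. You apply \cref{lem:tree-propagation} to $\cl(U) = \cl_1(U)$, claiming that $E(\cl(U))$ forms a tree because $U$ has small diameter. But $E(\cl(U))$ consists only of hyperedges containing \emph{two or more} vertices of $U$, and a cluster $U$ of a partition of an arbitrary $t$-subset $S$ typically contains vertices at $H$-distance anywhere up to $\Delta_H \gg 1$; so $E(\cl(U))$ is in general a disconnected forest (often empty), not a tree, and the hypothesis of \cref{lem:tree-propagation} fails. More importantly, this choice breaks consistency, which is the heart of the lemma. Take $S = \{u,w,v\}$ with hyperedges $e_1 \ni u,w$ and $e_2 \ni w,v$ but no hyperedge containing both $u$ and $v$, and $T = \{u,v\}$. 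With your definition, $\dist_S$ correlates $u$ and $v$ through $w$ (via the tree $\{e_1,e_2\}$), while $\dist_T$ makes them independent since $E(\cl(T)) = \emptyset$; hence $\dist_{S|T} \neq \dist_T$, and your appeal to the second bullet of \cref{lem:tree-propagation} is unavailable because $E(\cl(U \cap T))$ is not a subtree of $E(\cl(U))$. This is precisely the forest-versus-tree inconsistency that the construction is designed to eliminate.

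The fix is to augment each cluster before propagating: replace $\cl(U)$ by $\component{U} = \cl_{\Delta_H}(U)$, which adds all vertices on paths of length at most $\Delta_H$ between pairs of vertices of $U$. Because $U$ has $H$-diameter at most $\Delta_H$, the set $\component{U}$ is connected with diameter at most $2\Delta_H$, and the girth bound $\girth > 2\Delta_H + 1$ forces it to be a tree satisfying $\cl(\component{U}) = \component{U}$; now \cref{lem:tree-propagation} applies, and one sets $\dist'_U \defeq \treedist_{\component{U}\,|\,U}$. Consistency is then restored because for $U' = U \cap T$ the set $\component{U'}$ is itself connected and contained in $\component{U}$, hence a genuine subtree, so $\treedist_{\component{U}\,|\,\component{U'}} = \treedist_{\component{U'}}$ and marginalizing to $U'$ gives $\dist'_{U|U'} = \dist'_{U'}$. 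The rest of your argument (consistency of $\calP$, the $\eps/10$ budget for deleted edges, cut edges, and the LP deficit) goes through unchanged once this substitution is made.
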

\begin{proof}
By \cref{thm:locally-l2}, we know that there exists $\delta$ such that with probability
$1-\eps/4$, after removing a set of constraints $C_B$ of size at most $(\eps/4)\cdot m$, we can assume
that the remaining instance has girth at least $\girth = \delta \cdot \log n$. Also, there exists
$\theta, c > 0$ such that for all $t \leq n^{\theta}$, the metric $\rho^H_{\mu}$ restricted to any
set $S$ of size at most $t$ embeds isometrically into the unit sphere in $\ell_2$, for all $\mu \geq
c \cdot \frac{\log t + \log \log n}{\log n}$.



We choose $\mu = 2c \cdot \frac{\log \log n}{\log n}$
and $t = \frac{\eps^2}{400 k^2} \cdot \frac{1}{\mu}$ so that
\[
\mu ~\geq~
c \cdot \frac{\log t + \log \log n}{\log n}
\quad \text{and} \quad
\sqrt{\mu \cdot t} ~\leq~ \frac{\eps}{20k} \mper
\]
Thus, by \cref{lem:partitioning}, $H$ admits an $(\eps/2)$-sparse partitioning scheme of order $t$ with
each cluster in the partition having diameter at most $\Delta_H = O(1/\mu)$. Let
$\inbraces{\calP_S}_{\abs{S} \leq t}$ denote this partitioning scheme.

Given a set $S$, the distribution $\dist_S$ is a convex combination of several distributions $\dist_{S,P}$,
corresponding to different partitions $P$ sampled from $\calP_S$. We describe the distribution
$\dist_S$ by giving the procedure to sample an $\alpha \in \qary^S$. Given the set $S$ with $\abs{S}
\leq t$:
\begin{itemize}
\item[-] Sample a partition $P = (U_1, \ldots, U_r)$ from the distribution $\calP_S$.
\item[-] For each set $U_i$, consider the set $\component{U_i}$ obtained by including the vertices
  contained in all the edges in the shortest path between all $u,v \in U_i$. Note that since $U_i$
  has diameter at most $\Delta_H$ in $H$, $\component{U_i}$ is connected and in fact $\component{U}
  = \cl_{\Delta_H}(U)$. 
Also, since the each vertex in an included path is within distance at most $\Delta_H/2$ of an
end-point, and $U_i$ has diameter at most $\Delta_H$, we know that the diameter of $\component{U_i}$
is at most $2 \cdot \Delta_H$. Hence, $\component{U_i}$ is a tree.
Finally, we must have $\cl(\component{U_i}) ~=~ \component{U_i}$ since any additional path of length
1 would create a cycle of length at most $2\cdot \Delta_H + 1$.
%
%
%

Thus, by \cref{lem:basic-lp-solution} and \cref{lem:tree-propagation} (with probability at least
$1-\eps/4$) there exists a 
distribution $\treedist_{\component{U_i}}$ for each $U_i$, satisfying $\treedist_{\component{U_i} |
  e} = \treedist_{e}$ for all $e \in E\inparen{\component{U_i}}$. Here, $\treedist_{e}$ are the
distributions given by \cref{lem:basic-lp-solution}, which form a solution to the basic LP for
$\Phi$, with value at least $c - \eps/4$. For each $U_i$, define the distribution
\[
\dist'_{U_i} ~\defeq~ \treedist_{\component{U_i} | U_i} \mper
\]
\item[-] Sample $\alpha \in \qary^S$ according to the distribution
\[
\dist_{S,P} ~\defeq~ \dist'_{U_1} \times \cdots \times \dist'_{U_r} \mper
\]
\end{itemize}
Thus, we have
\[
\dist_S ~:=~ \Ex{P = (U_1, \ldots, U_r) \sim \calP_S}{\prod_{i=1}^r \dist'_{U_i}} \mcom
\]
where the distributions $\dist'_{U_i}$ are defined as above. 

We first prove the distributions are consistent on intersections \ie $\dist_{S | T} = \dist_T$ for
any $T \subseteq S$. Note that by \cref{lem:partitioning}, the distributions $\calP_S$ and $\calP_T$
satisfy $\calP_{S|T} = \calP_T$. Each partition $(U_1, \ldots, U_r)$ also produces a partition
$T$. For ease of notation, we assume that the first (say) $r'$ clusters have non-empty intersection
with $S$. Let $V_i = U_i \cap T$ for $1 \leq i \leq r'$ ($V_i = \emptyset$ for $i > r'$). Then, we have
\begin{align*}
\dist_{S|T} 
~=~ 
\Ex{P = (U_1, \ldots, U_r) \sim \calP_S}{\prod_{i=1}^r \dist'_{U_i | V_i}} 
&~=~
\Ex{P = (U_1, \ldots, U_r) \sim \calP_S}{\prod_{i=1}^{r'} \treedist_{\component{U_i} | V_i}} \\
&~=~
\Ex{P = (U_1, \ldots, U_r) \sim \calP_S}{\prod_{i=1}^{r'} \treedist_{\component{V_i} | V_i}} \\
&~=~
\Ex{P' = (V_1, \ldots, V_{r'}) \sim \calP_T}{\prod_{i=1}^{r'} \treedist_{\component{V_i} | V_i}} 
\end{align*}
The second to last equality above uses the fact that $\component{V_i}$ is a subtree of
$\component{U_i}$ and thus $\treedist_{\component{U_i} | \component{V_i}} =
\treedist_{\component{V_i}}$ by \cref{lem:tree-propagation}. The last equality uses the fact that
$\calP_{S|T} = \calP_{T}$ by \cref{lem:partitioning}.

We now argue that the LP solution corresponding to the above distributions
$\inbraces{\dist_S}_{\abs{S} \leq t}$ has value at least $c - \eps$. Recall that the value of the LP
solution is given by
\[
\ExpOp_{C \in \Phi} ~\Ex{\alpha \sim \dist_{S_C}}{f(\alpha + b_C)} \mper
\]
Consider any constraint $C$ in $\Phi$, with the corresponding set of variables $S_C$ and the
corresponding hyperedge $e$. When defining the distribution $\dist_{S_C}$, we will partition $S_C$
according to the distribution $\calP_{S_C}$. By \cref{lem:partitioning} and our choice of parameters
\[
\Prob{P \sim \calP_{S_C}}{P \neq \{S_C\}} ~\leq~ 10k \cdot \sqrt{\mu \cdot t} ~\leq~ \frac{\eps}{2} \mper 
\]
For a constraint set which is not in the deleted set $C_B$, 
if the edge $e$ corresponding to the constraint $C$ is not split by a partition $P$ sampled according to
$\calP_{S_C}$, then by \cref{lem:tree-propagation} $\dist_{S_C, P} = \treedist_{S_C}$. Here,
$\treedist_{S_C}$ is the distribution given by \cref{lem:basic-lp-solution}. Since $f$ is Boolean,
we have that for $C \notin C_B$,
\[
\Ex{\alpha \sim \dist_{S_C}}{f(\alpha + b_C)} 
~\geq~
\Ex{\alpha \sim \treedist_{S_C}}{f(\alpha + b_C)} - \frac{\eps}{2} \mper
\]
Using \cref{lem:basic-lp-solution} again, we get
\begin{align*}
\ExpOp_{C \sim \Phi} ~\Ex{\alpha \sim \dist_{S_C}}{f(\alpha + b_C)} 
&~\geq~
\Ex{C \sim \Phi}{\inparen{1-\indicator{C \in C_B}} \cdot \inparen{\Ex{\alpha \sim \treedist_{S_C}}{f(\alpha + b_C)} - \frac{\eps}{2}}} \\
&~\geq~
\ExpOp_{C \sim \Phi} ~\Ex{\alpha \sim \treedist_{S_C}}{f(\alpha + b_C)} - \frac{\eps}{2} -  \Ex{C
  \sim \Phi}{\indicator{C \in C_B}} \\
&~\geq~
c - \frac{\eps}{4} - \frac{\eps}{2} - \frac{\eps}{4} \\
&~\geq~ 
c - \eps \mcom
\end{align*}
where the penultimate inequality uses the fact that the fraction of constraints in the initially deleted set
$C_B$ is at most $\eps/4$ (for large enough $n$).
\end{proof}

\subsection{Integrality Gaps for resistant predicates}
Let $f:\xspace \B^k\to \B$ be a boolean predicate and
let $\rho(f)=\frac{f^{-1}(1)}{2^k}$ be the fractions of satisfying assignments to $f$. Then
$f$ is approximation resistant if it is hard to distinguish the \maxcsp instances on $f$ between which are
at least $1-\littleoh(1)$ satisfiable vs which are at most $\rho(f)+\littleoh(1)$ satisfiable. 

In \cite{KhotTW14} the authors introduce the notion of {\sf vanishing measure} 
(on a polytope defined by $f$) and use it to characterize a variant of  approximation resistance,
called strong approximation resistance, assuming the Unique Games conjecture. 
They also show gave a \emph{weaker} notion of vanishing measures, which they used to characterize
strong approximation resistance for LP hierarchies. In particular, they proved that when the
condition in their characterization is satisfied, there exists a 
$(1-\littleoh(1),\rho(f)+\littleoh(1))$ integrality gap for $\bigoh(\log\log n)$ levels of
Sherali-Adams hierarchy for predicates $f$. 
Here, we show that using \cref{thm:main}, their result can be simplified and strengthened
\footnote{The LP integrality gap result of Khot \etal is in fact slightly stronger than stated
  above. They show that LP value is at least $1-o(1)$ while there is no integer solution achieving a
value outside the range $[\rho(f)-o(1), \rho(f)+o(1)]$. It is easy to see that the same also holds
for the instance constructed here.} 
to $O\inparen{\frac{\log n}{\log \log n}}$ levels.

Let us first recall some useful notation defined by Khot \etal \cite{KhotTW14} before we define the
notion of vanishing measure:
\begin{definition}
  \label[definition]{def:vanishing-measure-lp-1}
  For a predicate $f:\xspace \B^k\to \B$, let $\calC(f)$ be the convex polytope
  of \emph{first} moments (biases) of distributions supported on satisfying assignments of
  $f$ \ie
\[
\calC(f) ~\defeq~ \inbraces{ \zeta \in \R^k ~\mid~  \forall i \in [k], ~\zeta_i = \Ex{\alpha \sim
    \nu}{(-1)^{\alpha_i}}, ~~\supp(\nu) \subseteq f^{-1}(1) 
} \mper
\]
For a measure $\Lambda$ on $\calC(f)$, $S \subseteq [k]$, $b \in \B^S$ and permutation $\pi: S
\to S$, let $\Lambda_{S,\pi,b}$ denote the induced measure on $\R^S$ by considering vectors
with coordinates $\inbraces{(-1)^{b_{\pi(i)}} \cdot \zeta_{\pi(i)}}_{i \in S}$, where $\zeta \sim
\Lambda$.
\end{definition}
We recall below the definition of vanishing measure for LPs from \cite{KhotTW14} (see
Definition $1.3$) :
\begin{definition}
  \label[definition]{def:vanishing-measure-lp-2}
  A measure $\Lambda$ on $\calC(f)$ is called {\deffont vanishing} (for LPs) if for
  every $1\le t\le k$, the following signed measure
  \[
    \ExpOp_{\abs{S}=t} ~\ExpOp_{\pi:\xspace S\to S} ~\Ex{b\in \B^t}{\inparen{\prod_{i=1}^t (-1)^{b_i}}\cdot
    \hat{f}(S)\cdot \Lambda_{S,\pi,b}}
  \]
  is identically $0$. We say $f$ has a vanishing measure if there exists a vanishing measure $\Lambda$ on $\calC(f)$.
\end{definition}
In particular, they prove the following theorem:
\begin{theorem}
  \label[theorem]{thm:ktw-SA-gap}
  Let $f : \B^k \to \B$ be a $k$-ary boolean predicate that has a vanishing measure. Then for every $\eps>0$,
  there is a constant $c_\eps > 0$ such that for infinitely may $N \in\N$, there exists an
  instance $\Phi$ of $\maxkcsp(f)$ on $N$ variables satisfying the following:
  \begin{itemize}
    \item $\opt(\Phi) ~\leq~ \rho(f) + \eps$.
    \item The optimum for the LP relaxation given by $c_\eps\cdot \log\log N$ levels of Sherali-Adams hierarchy
      has $\sdpopt(\Phi) \geq 1-\bigoh(k\cdot \sqrt \eps)$.
  \end{itemize}
\end{theorem}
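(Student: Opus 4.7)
The idea is to recast the construction of \cite{KhotTW14} in the framework developed for the proof of \cref{thm:main}, using the consistent partitioning scheme of \cref{lem:partitioning} and tree propagation of \cref{lem:tree-propagation} to boost a local LP solution derived from $\Lambda$ to the Sherali-Adams hierarchy. This replaces the noise-based mechanism for approximate consistency used in \cite{KhotTW14}, which limited their argument to $O(\log \log N)$ levels, and in fact should let us achieve the stronger $\Omega(\log N / \log \log N)$ level bound quoted in the introduction.

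First I would generate a random instance $\Phi$ of $\maxkcsp(f)$ on $N$ variables by independently sampling $m = \gamma_\eps \cdot N$ constraints, each picking a uniformly random $k$-tuple of variables together with a uniformly random shift $b_C \in \B^k$. The soundness $\opt(\Phi) \le \rho(f) + \eps$ follows verbatim from the argument of \cref{lem:basic-lp-soundness}: under uniform random shifts every fixed assignment satisfies each constraint with probability exactly $\rho(f)$, so Chernoff plus a union bound over $2^N$ assignments controls the deviation with probability $1 - \exp(-\Omega(N))$.

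For the basic LP solution I would use a KTW-style sampling procedure built from $\Lambda$. For each constraint $C$ with variable set $S_C$ and shift $b_C$, define $\treedist_{S_C}$ on $\B^{S_C}$ by drawing $\zeta \sim \Lambda$, a uniformly random bijection $\pi \colon S_C \to [k]$, and $\alpha \in f^{-1}(1)$ from a distribution $\nu_\zeta$ whose first moments equal $\zeta$, and outputting $(\alpha_{\pi(i)} - b_{C,\pi(i)})_{i \in S_C}$. Set $\treedist_i$ to be uniform on $\B$ for each variable $i$. The $t = 1$ case of \cref{def:vanishing-measure-lp-2}, combined with the averaging over $\pi$ and $b_C$, is precisely what forces every singleton marginal $\treedist_{S_C|i}$ to equal the uniform distribution, so basic LP consistency holds; since every sample lies in $f^{-1}(1)$ modulo the shift $b_C$, the basic LP objective equals exactly $1$. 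The rest of the argument then proceeds identically to \cref{lem:basic-lp-completeness}: the hypergraph of $\Phi$ is drawn from a distribution of the form $\Hyp$, so \cref{thm:locally-l2} applies and \cref{lem:partitioning} provides a consistent $\eps'$-sparse partitioning scheme of order $t = \Omega_\eps(\log N/\log \log N)$ with clusters of diameter $O(1/\mu)$; within each cluster we propagate $\treedist_{S_C}$ via \cref{lem:tree-propagation}, using the uniform singleton marginals to ensure distributions on edges meeting at a single vertex agree trivially there. The only loss in the Sherali-Adams objective comes from edges cut by the partitioning scheme, which is bounded by $O(k\sqrt{\mu t}) = O(k\sqrt{\eps})$.

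The main technical obstacle is the Fourier-analytic verification that the $t = 1$ vanishing condition does force uniform singleton marginals for $\treedist_{S_C}$; this amounts to unpacking \cref{def:vanishing-measure-lp-2} and tracking how the averages over $\pi$ and $b_C$ interact with the weighting by $\hat f(S)$. The simplification over \cite{KhotTW14} is precisely that consistency at higher orders is enforced combinatorially by the partitioning scheme rather than analytically by matching higher moments, so the higher-order vanishing conditions are not needed in the lifting step --- though of course the full hypothesis is what makes the singleton-marginal structure cleanly exploitable in the first place.
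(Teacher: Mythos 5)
There is a genuine gap, and it sits exactly where you flagged it: the claim that the $t=1$ case of \cref{def:vanishing-measure-lp-2} forces the singleton marginals of $\treedist_{S_C}$ to be uniform is false. The $t=1$ condition says that the signed measure $\sum_i \hat f(\{i\})\cdot\tfrac12(\Lambda_i - \tilde\Lambda_i)$ vanishes, where $\Lambda_i$ is the pushforward of $\Lambda$ under $\zeta \mapsto \zeta_i$ and $\tilde\Lambda_i$ its reflection; this is a statement weighted by the degree-one Fourier coefficients of $f$, and it is vacuous whenever $\hat f(\{i\})=0$ for all $i$. It does not imply $\Ex{\zeta\sim\Lambda}{\tfrac1k\sum_j \zeta_j}=0$, which is what your construction needs: after averaging over $\pi$ and $\zeta$, the marginal bias that constraint $C$ imposes on variable $i$ is $(-1)^{b_{C,i}}\cdot\Ex{\zeta,j}{\zeta_j}$, and two constraints with opposite shifts on the same variable then demand contradictory singleton marginals unless this average bias is zero. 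So the basic LP solution you propose is not consistent, and the lifting machinery (\cref{lem:tree-propagation}, \cref{lem:partitioning}) has nothing consistent to lift. More broadly, the vanishing measure is not a completeness tool at all: in \cite{KhotTW14} it is what drives the \emph{soundness} argument, and by decoupling the shifts $b_C$ from $\zeta$ you have discarded the structure it certifies (your soundness becomes trivial, but your completeness becomes unachievable).

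For contrast, the paper does not reprove \cref{thm:ktw-SA-gap} (it is quoted from \cite{KhotTW14}); its contribution is \cref{cor:ktw-SA-gap-improved}, proved by keeping the KTW construction of \cref{fig:ktw-instance} essentially intact. There the negation pattern is determined by $\sgn(\zeta_j)$ and the variables are partitioned into groups by the magnitude $\abs{\zeta_j}$, precisely so that singleton marginals are \emph{biased but identical within each group}; \cref{claim:bias-rounding} then perturbs each $\nu_\zeta$ by $O(k\eps/\delta)$ in $\ell_1$ so that the literal biases land exactly on the group's common value, yielding an exactly consistent basic LP solution of value $1-\delta-O(k\eps/\delta)$. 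Soundness is cited from KTW's Lemma 4.4, which is where the vanishing measure is actually used. If you want to salvage your write-up, the fix is to adopt that construction and the bias-rounding step rather than trying to force uniform marginals out of the vanishing condition.
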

Combining this with our \cref{thm:main} already gives us the following stronger result:
\begin{corollary}
  \label[corollary]{cor:ktw-SA-gap-improved}
  Let $f : \B^k \to \B$ be a $k$-ary boolean predicate that has a vanishing measure. Then for every $\eps>0$,
  there is a constant $c_\eps > 0$ such that for infinitely may $N \in\N$, there exists an
  instance $\Phi$ of $\maxkcsp(f)$ on $N$ variables satisfying the following:
  \begin{itemize}
    \item All integral assignment of $\Phi$ satisfies at most $\rho(f) +  \eps$ fraction of
      constraints.
    \item The LP relaxation given by  $c_\eps\cdot \frac{\log N}{\log\log N}$ levels of
      Sherali-Adams hierarchy has $\sdpopt(\Phi) \geq 1- O(k\sqrt{\eps})$.
  \end{itemize}
\end{corollary}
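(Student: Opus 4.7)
The plan is to compose Theorem \ref{thm:ktw-SA-gap} of \cite{KhotTW14} with our main theorem, Theorem \ref{thm:main}. The key observation is that the basic LP relaxation is weaker than the level-$k$ Sherali-Adams relaxation (as noted in \cref{sec:relaxations}), so any $(c,s)$-integrality gap instance for any number of levels of Sherali-Adams (in particular, super-constant levels) is automatically a $(c,s)$-integrality gap instance for the basic LP.

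First, given $\eps > 0$, I would apply Theorem \ref{thm:ktw-SA-gap} with parameter $\eps/4$ to produce a fixed instance $\Phi_0$ of $\maxkcsp(f)$ on some $n_0 = n_0(\eps)$ variables, satisfying $\opt(\Phi_0) \leq \rho(f) + \eps/4$ and having Sherali-Adams value at least $1 - C k\sqrt{\eps/4}$ at $\Omega(\log\log n_0)$ levels, for an absolute constant $C$. By the observation above, $\Phi_0$ is in particular a $(1 - Ck\sqrt{\eps/4},\, \rho(f) + \eps/4)$ integrality gap instance for the basic LP, of constant (in $N$) size.

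Next, I would feed $\Phi_0$ as the template into Theorem \ref{thm:main} with slack parameter $\eps/4$. This produces, for infinitely many $N \in \N$, an instance $\Phi$ of $\maxkcsp(f)$ on $N$ variables, for which every integral assignment satisfies at most $\rho(f) + \eps/4 + \eps/4 \leq \rho(f) + \eps$ fraction of constraints, and for which the relaxation given by $c_{\eps} \cdot \frac{\log N}{\log\log N}$ levels of Sherali-Adams achieves value at least $1 - Ck\sqrt{\eps/4} - \eps/4$. Since $\eps/4 \le k\sqrt{\eps}$ for all $k\ge 1$ and $\eps\le 1$, the completeness bound can be written as $1 - O(k\sqrt{\eps})$ after adjusting the hidden constant, giving exactly the two properties stated in the corollary.

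There is no real technical obstacle beyond accounting for parameters: the entire content of the corollary is that the hypothesis of Theorem \ref{thm:main} is \emph{weaker} than the conclusion of Theorem \ref{thm:ktw-SA-gap}, so the two results compose transparently. The conceptual improvement over \cite{KhotTW14} comes entirely from the gap-amplification of Theorem \ref{thm:main}, which converts a basic LP gap (automatically furnished by any nontrivial-level Sherali-Adams gap) into a gap for $\Omega(\log N/\log\log N)$ levels.
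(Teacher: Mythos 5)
Your composition is correct: Theorem~\ref{thm:ktw-SA-gap} supplies a $(1-O(k\sqrt{\eps}),\rho(f)+\eps/4)$ gap at super-constant Sherali-Adams levels, this implies a basic LP gap once the number of levels exceeds $k$ (which holds for $n_0$ large enough, and the theorem provides infinitely many sizes to choose from), and Theorem~\ref{thm:main} then amplifies it; the $\eps$-accounting is fine. This is in fact exactly the one-line derivation the paper itself records immediately before stating the corollary (``Combining this with our Theorem~\ref{thm:main} already gives us the following stronger result''). However, the proof the paper actually writes out for Corollary~\ref{cor:ktw-SA-gap-improved} deliberately does \emph{not} go through Theorem~\ref{thm:ktw-SA-gap}. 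Instead it exhibits, directly and from scratch, a basic LP solution of value $1-\delta-O(k\eps/\delta)$ for the Khot~\etal random instance of \cref{fig:ktw-instance}: the singleton biases are set to the interval endpoints, and for each constraint the generating distribution $\nu(C)$ is perturbed via the bias-rounding Claim~\ref{claim:bias-rounding} to a distribution $\nu'(C)$ whose marginals match those biases exactly. The point of that detour is that the basic LP gap is a \emph{much} weaker object than the $O(\log\log n)$-level gap: establishing it needs only an elementary hybrid argument rather than KTW's full approximate-consistency machinery, which is what justifies the paper's claim to have \emph{simplified} (not merely strengthened) the KTW result. Your route is logically valid but inherits the full difficulty of KTW's theorem as a black box; the paper's route is self-contained modulo KTW's soundness lemma and makes the ``weak-to-strong'' message of the paper explicit.
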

%
%

However, note that to apply \cref{thm:main}, one only needs a gap for the basic LP, which is much
weaker requirement than the $O(\log \log N)$-level gap given by \cref{thm:ktw-SA-gap}. 
We observe below that the gap for the basic LP follows very simply from the construction by
Khot \etal \cite{KhotTW14}. One can then directly use this gap for applying \cref{thm:main} instead
of going through \cref{thm:ktw-SA-gap}.

Khot \etal \cite{KhotTW14} use the probabilistic construction given in \cref{fig:ktw-instance}, 
for a given $\eps > 0$.
The construction actually requires $\Lambda$ to be a vanishing measure over the polytope
$\calC_{\delta}(f) \defeq (1-\delta) \cdot \calC(f)$, for $\delta = \sqrt{\eps}$. However, since
$\calC_{\delta}(f)$ is simply a scaling of $\calC(f)$, a vanishing measure over $\calC(f)$
also gives a vanishing measure over $\calC_{\delta}(f)$. Note that each $\zeta_0 \in \calC(f)$
corresponds to a distribution $\nu_0$ supported in $f^{-1}(1)$. For each $\zeta \in \calC_{\delta}$,
let $\zeta_0 = \frac{1}{1-\delta} \cdot \zeta$ be the point in $\calC(f)$ with distribution
$\nu_0$. Then the distribution $\nu = (1-\delta) \cdot \nu_0 + \delta \cdot U_k$ (where $U_k$
denotes the uniform distribution on $\B^k$) satisfies $\forall i \in [k] \Ex{\alpha \sim
  \nu}{(-1)^{\alpha_i}} = \zeta_i$.
\begin{figure}[!ht]
  \hrule
  \vline
  \hspace{10 pt}
  \begin{minipage}[t]{0.95\linewidth}{
    \vspace{10 pt}
    Let $n_0=\lceil \frac{1}{\eps}\rceil$. Partition the interval
    $[0,1]$ into $n_0+1$ disjoint intervals $I_0,I_1,\ldots ,I_{n_0}$ where $I_0=\{0\}$ and $I_i=\left(
    \nfrac{i-1}{n_0},\nfrac{i}{n_0}\right]$ for $1\le i\le n_0$. For each interval $I_i$, let $X_i$
  be a collection of $n$ variables (disjoint from all $X_j$ for $j \neq i$).

\smallskip
Generate $m$ constraints independently according to the following procedure:
    \begin{itemize}
      \item Sample $\zeta\sim \Lambda$.
      \item For each $j\in [k]$, let $i_j$ be the index of the interval which contains
        $\abs{\zeta(j)}$. Sample uniformly a variable $y_j$ from the set $X_j$.
      \item If $\zeta(j)<0$, then negate $y_j$. If $\zeta(j)=0$, then negate $y_j$ w.p.
        $\frac{1}{2}$.
      \item Introduce the constraint $f$ on the sampled $k$ tuple of literals.
    \end{itemize}
    \smallskip
  }
  \end{minipage}
  \hfill
  \vline
  \hrule
  \caption{Sherali-Adams integrality gap instance for vanishing measure}
  \label[figure]{fig:ktw-instance}
\end{figure}

They show for a sufficiently large constant $\gamma$, an instance $\Phi$ with $m = \gamma \cdot n$
constraints satisfies with high probability, that for all assignments $\sigma$,
$\abs{\sat_{\Phi}(\sigma) - \rho(f)} \leq \eps$ (see Lemma $4.4$ in \cite{KhotTW14}). The proof is
similar to that of of \cref{lem:basic-lp-soundness}.

Additionally, we need the following claim from \cite{KhotTW14} (see Claim $4.7$ there), which allows
one to ``round''
coordinates of the vectors $\zeta \in \calC_{\delta}(f)$ to the end-points of the intervals $I_0,
\ldots, I_{n_0}$. This ensures that any two variables in the same collection $X_i$ have the same
bias.  The proof of the claim follows simply from a hybrid argument. We include it in the appendix
for completeness.
\begin{restatable}{repclaim}{biasclaim}
\label{claim:bias-rounding}
Let $\zeta \in \calC_{\delta}(f)$ and let $\nu$ be the corresponding distribution supported in $f^{-1}(1)$
such that for all $i \in [k]$, we have $\zeta_i = \Ex{\alpha \sim \nu}{(-1)^{\alpha_i}}$. Let $t_1, \ldots,
t_k \in [0,1]$ be such that for all $i \in [k]$, $\abs{t_i - \abs{\zeta_i}} \leq \eps$ for
$\eps<\delta/2$. 
Then there exists a distribution $\nu'$ on $\B^k$ such that
\[
\norm{\nu - \nu'}_1 ~=~ O(k \cdot (\eps/\delta))
\qquad \text{and} \qquad
\forall i \in [k], ~~\Ex{\alpha \sim \nu'}{(-1)^{\alpha_i}} ~=~ \sgn(\zeta_i)\cdot t_{i} \mper
\]
%
\end{restatable}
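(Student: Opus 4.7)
The plan is to construct $\nu'$ from $\nu$ by a hybrid argument, adjusting one coordinate's bias at a time via a local mass-transfer operation that leaves every other marginal unchanged. Fix a coordinate $j$ and a distribution $\mu$ on $\B^k$ with current bias $\zeta_j^{\mu}$ on coordinate $j$, and a target $\zeta_j^{\star}$; assume $\zeta_j^{\star} > \zeta_j^{\mu}$ (the opposite case is symmetric). Set $\delta_j = (\zeta_j^{\star} - \zeta_j^{\mu})/2$ and transfer mass from the atoms $(\alpha_{-j},1)$ to $(\alpha_{-j},0)$ proportionally to $\mu(\alpha_{-j} \mid \alpha_j = 1)$: concretely, subtract $\delta_j \cdot \mu(\alpha_{-j},1)/\Pr_\mu[\alpha_j = 1]$ from $\mu(\alpha_{-j},1)$ and add the same amount to $\mu(\alpha_{-j},0)$. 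A direct computation shows that the $\alpha_{-j}$ marginal is preserved (the pointwise changes cancel), so the biases on all $i \neq j$ are preserved, while the bias on coordinate $j$ increases by exactly $2\delta_j = \zeta_j^{\star} - \zeta_j^{\mu}$ as desired. The $L_1$-cost of this single step is also $2\delta_j = |\zeta_j^{\star} - \zeta_j^{\mu}|$.

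Next I would run the hybrid. Set $\nu^{(0)} = \nu$ and, for $j = 1, \ldots, k$, produce $\nu^{(j)}$ from $\nu^{(j-1)}$ by applying the above operation with target $\zeta_j' \defeq \sgn(\zeta_j)\cdot t_j$. Because each step preserves all biases other than the one being adjusted, when we arrive at coordinate $j$ the bias on coordinate $j$ is still $\zeta_j$; so the required shift is $|\zeta_j' - \zeta_j|$, which is $\le \eps$ by the hypothesis $|t_i - |\zeta_i|| \le \eps$ (treating the $\zeta_j = 0$ case with $\sgn(0) = 0$ or any fixed choice, noting $t_j \le \eps$ so the shift is still $\le \eps$). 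By the triangle inequality, $\|\nu - \nu'\|_1 \le \sum_{j=1}^k \|\nu^{(j)} - \nu^{(j-1)}\|_1 \le k\eps = O(k\eps/\delta)$, and by construction $\nu' = \nu^{(k)}$ has all the required biases $\sgn(\zeta_i) \cdot t_i$.

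Finally I would check validity of each step, which is the one place where $\delta$ enters. The operation is only valid if $\delta_j \le \Pr_{\nu^{(j-1)}}[\alpha_j = 1]$ so that the new masses remain nonnegative. Since $\zeta \in \calC_\delta(f) = (1-\delta)\calC(f)$ we have $|\zeta_j| \le 1-\delta$, and since steps $<j$ did not touch coordinate $j$, we get $\Pr_{\nu^{(j-1)}}[\alpha_j = 1] = (1 - \zeta_j)/2 \ge \delta/2$, while $\delta_j \le \eps/2 < \delta/4$. Thus each transfer is well-defined.

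The only nontrivial content is the local calculation that the mass-transfer operation preserves the $\alpha_{-j}$ marginal while producing the exact target shift in the $j$-th bias; once that is in hand, the hybrid and the accounting are routine, and the quantitative hypothesis $\eps < \delta/2$ is used solely to keep every intermediate measure nonnegative.
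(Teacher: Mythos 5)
Your proof is correct, and it follows the same coordinate-by-coordinate hybrid skeleton as the paper's proof (set $r_j = \sgn(\zeta_j)\cdot t_j$ and build $\nu = \nu^{(0)}, \ldots, \nu^{(k)} = \nu'$, fixing one bias at a time, then apply the triangle inequality), but your single-coordinate step is implemented differently. The paper adjusts coordinate $j$ by mixing $\nu^{(j-1)}$ with weight $\tau_j$ of an auxiliary distribution $D_j$ whose $j$-th bit is the constant $\sgn(r_j - \zeta_j)$ and whose remaining bits carry the current biases; solving $(1-\tau_j)\zeta_j + \tau_j\,\sgn(r_j - \zeta_j) = r_j$ forces $\tau_j = O(\eps/\delta)$, so each step costs $O(\eps/\delta)$ in $\ell_1$. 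Your step instead transports mass $\delta_j = \abs{r_j - \zeta_j}/2 \le \eps/2$ between the fibers $\alpha_j = 1$ and $\alpha_j = 0$ proportionally to the conditional distribution, which preserves the entire joint marginal on the remaining coordinates (not merely their biases) and costs exactly $\abs{r_j - \zeta_j} \le \eps$ per step; the polytope shrinkage $\delta$ enters only through the nonnegativity condition $\delta_j \le \min\inparen{\Pr[\alpha_j = 0], \Pr[\alpha_j = 1]}$, which you correctly verify from $\abs{\zeta_j} \le 1-\delta$ and $\eps < \delta/2$. The net effect is a slightly sharper bound, $\norm{\nu - \nu'}_1 \le k\eps$ rather than $O(k\eps/\delta)$, which of course implies the stated $O(k\cdot(\eps/\delta))$; your handling of the degenerate $\zeta_j = 0$ case is also fine.
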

%
%
%
We can now use the above to give a simplified proof of \cref{cor:ktw-SA-gap-improved}.
\begin{proofof}{of \cref{cor:ktw-SA-gap-improved}}
  Here we exhibit a solution of the basic LP \cref{fig:basic-lp} for the instance given in
  \cref{fig:ktw-instance}.
  For each variable $y_{j}$ coming from the set $X_j$ for $j\in\{0,1,\ldots,n_0\}$, we set
  the bias $t_{j}$ of the variable to be the rightmost point of the interval $I_j$ \ie 
  set $\vartwo{y_j}{-1}= \frac12 \cdot \inparen{1-\frac{i}{n_0}}$ and  $\vartwo{y_j}{1}=\frac12 \cdot
  \inparen{1+\frac{i}{n_0}}$. 
  
  For each constraint $C$ of the form $f(y_{i_1} + b_{1}, \ldots, y_{i_k}+b_{k})$, 
  let $\zeta(C) \in \calC_{\delta}(f)$ be the point used to generate it,
  and let $\nu(C)$ denote the corresponding distribution on $\B^k$. By \cref{claim:bias-rounding},
  there exists a distribution $\nu'(C)$ such that $\norm{\nu(C) - \nu'(C)}_1 = O(k\eps/\delta)$ and
  such that the biases of the \emph{literals} satisfy $\Ex{\alpha \sim \nu'(C)}{(-1)^{\alpha_j}} =
  \sgn(\zeta_j) \cdot t_{i_j}$, where $t_{i_j}$ denotes the bias for the interval to which $y_{i_j}$
  belongs. When $t_{i_j} \neq 0$, we negate a variable only when $\sgn(\zeta_j) < 0$. Thus, we have
  $\Ex{\alpha \sim \nu'(C)}{(-1)^{\alpha_j + b_j}} = t_{i_j}$, which is consistent with the bias given by the
  singleton variables $\vartwo{y_{i_j}}{1}$ and $\vartwo{y_{i_j}}{-1}$. We thus define the local
  distribution on the set $S_C$ as $\dist_{S_C}(\alpha) = (\nu'(C))(\alpha+b_C)$.

  For all $C \in \Phi$, since $\zeta(C) \in \calC_{\delta}(f)$, we have that $\Ex{\alpha \sim
    \nu(C)}{f(\alpha)} \geq 1 - \delta$. Also, since $\norm{\nu(C)-\nu'(C)}_1 =
  O(k\eps/\delta)$, we get that $\Ex{\alpha \sim \nu'(C)}{f(\alpha)} \geq 1 - \delta -
  O(k\eps/\delta)$. Thus, we have for all $C \in \Phi$, $\Ex{\alpha \sim \dist_{S_C}}{f(\alpha+b_C)} \geq 1
  - \delta - O(k\eps/\delta)$. Taking $\delta = \sqrt{\eps}$ proves the claim.
\end{proofof}

\subsection{Lower bounds for LP extended formulations}
A connection between LP integrality gaps for the Sheral-Adams hierarchy, and lower bounds on the
size of LP extended formulations, was established by Chan \etal \cite{ChanLRS13}. They proved the
following:
\begin{theorem}[\cite{ChanLRS13}]
Let $k,q \in \N$ and $f: \qary^k \to \B$ be given.
Let $r: \N \to \N$ be a function such that the relaxation obtained by $r(n)$ levels of the
Sherali-Adams hierarchy cannot achieve a $(c,s)$ approximation for instances of \maxkcspq$(f)$ on $n$
variables. Then, for all large enough $n$, no LP extended formulation of size $n^{\inparen{r(n)}^2}$
can achieve a $(c,s)$ approximation on instances of size $N$, where $N \leq n^{10 \cdot r(n)}$
\end{theorem}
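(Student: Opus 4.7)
The plan is to prove this via a two-step reduction: (1) from LP extended formulations to nonnegative factorizations of a suitable slack matrix via Yannakakis' theorem, and (2) from such a factorization to a feasible Sherali-Adams solution via a Fourier-analytic ``junta'' extraction. For step (1), I would set up the CSP slack matrix whose rows are indexed by $(c,s)$-distinguishing pairs: assignments $\sigma$ to $N$ variables on one side and ``expected value'' tests on the other side. The standard Yannakakis framework says that an LP extended formulation of size $R$ approximating $\maxkcspq(f)$ to factor $(c,s)$ corresponds to a rank-$R$ nonnegative factorization of this matrix.

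For step (2), the key move is the conic junta decomposition of Chan, Lee, Raghavendra, and Steurer. Concretely, each row of the nonnegative factorization is a nonnegative function on $\qary^N$, and the claim is that (after appropriate averaging) any rank-$R$ collection of such functions can be expressed as a nonnegative combination of ``$r$-juntas,'' i.e., nonnegative functions depending on only $r$ of the $N$ variables, with $r \approx \log R / \log n$. I would carry this out by random restriction: restrict to a uniformly random subset of $r$ coordinates and analyze the Fourier spectrum of the restricted factors, using that the ``high-degree'' mass must collapse when $R$ is small compared to $n^r$. Once the rows of the slack matrix admit a conic junta representation of depth $r$, one can assemble these juntas into a feasible level-$r$ Sherali-Adams solution for a smaller $n$-variable instance $\Phi'$, by averaging over random embeddings of $\Phi'$ into the size-$N$ instances $\Phi$ that index slack-matrix rows. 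Consistency of the marginals required for Sherali-Adams follows because juntas are inherently low-dimensional, and the objective value is preserved up to $o(1)$ by the embedding.

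The final step is a contrapositive accounting: if $R = n^{r(n)^2}$ and $N \leq n^{10 \cdot r(n)}$, then the junta depth one obtains is at least $\log R / \log N = \Omega(r(n))$, so an LP extended formulation of this size would yield an $\Omega(r(n))$-level Sherali-Adams solution of value $\geq c$ on an $n$-variable instance of soundness $< s$, contradicting the hypothesis. The arithmetic of the tradeoff ($r(n)^2$ in the exponent of $R$ versus $10 \cdot r(n)$ in the exponent of $N$) comes out of balancing the loss in the random-restriction step against the padding loss in embedding small instances into larger ones.

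The main obstacle will be step (2), specifically controlling nonnegativity under random restrictions. Standard Fourier concentration bounds operate on signed $L_2$ functions, whereas here one must preserve both the nonnegativity of each factor and the conic factorization rank simultaneously. This is handled by Chan et al.\ via a careful analysis of the ``pattern matrix'' associated with random CSPs, using symmetry of the instance distribution to argue that the Fourier mass on sets of size larger than $r$ contributes negligibly to the factorization. Getting the exponents just right in this Fourier-tail argument is the delicate technical step that drives the whole reduction.
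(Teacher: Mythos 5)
This statement is imported verbatim from Chan, Lee, Raghavendra and Steurer \cite{ChanLRS13}; the paper you were given contains no proof of it --- it is cited as a black box and used only to pass from \cref{thm:main} to \cref{cor:size-lower-bound}. So there is no in-paper argument to compare your attempt against, and your proposal should be judged as a reconstruction of the cited work.

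As such a reconstruction, your outline is faithful to the architecture of the CLRS argument: Yannakakis' factorization theorem converts a size-$R$ LP extended formulation into a nonnegative factorization of the appropriate slack matrix; the factors are then approximated by conic combinations of juntas; and the juntas are reassembled (by averaging over embeddings of an $n$-variable instance into the $N$-variable universe) into a feasible level-$r$ Sherali-Adams solution, contradicting the hypothesis. Your exponent accounting is also consistent: with $R = n^{r(n)^2}$ and $N \le n^{10\,r(n)}$ one gets junta arity $\log R/\log N = r(n)/10 = \Omega(r(n))$. However, as a proof this is incomplete in exactly the place where all the difficulty lives: the conic-junta approximation theorem for low-nonnegative-rank factorizations is asserted, not proved, and your description of how it would be established (random restriction plus a Fourier-tail collapse) waves at the mechanism without confronting the two real obstacles you yourself flag --- that averaging over restrictions damps the low-degree Fourier coefficients multiplicatively (so the low-degree moments of the junta combination do \emph{not} automatically match those of the original factor and must be corrected using a bound on $\|f\|_\infty/\|f\|_1$), and that any such correction must preserve nonnegativity. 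Until that lemma is stated precisely and proved, the reduction does not close; within the context of this paper, the honest ``proof'' of the statement is the citation.
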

Combining the above with \cref{thm:main} and taking $r(n) = \Omega\inparen{\frac{\log n}{\log \log
    n}}$ yields \cref{cor:size-lower-bound}.
\sizecorollary*


\section*{Acknowledgements}
We thank Chandra Chekuri, Subhash Khot and Yury Makarychev for helpful discussions, and Rishi Saket
for pointers to references.
This research was supported by supported by the National Science Foundation under award number
CCF-1254044.

\bibliographystyle{plain}
\bibliography{macros,madhur}
\appendix
\section{Omitted proofs from \cref{sec:hypergraphs} and
  \cref{sec:sagaps}}\label[appendix]{sec:appendix}

\locallysparse*
\begin{proof}
  As in the proof of \cref{lem:cycle-count}, given a random hypergraph $H$,
  we construct a hypergraph $H'$ ( by contracting all the vertices in
  $[n_0]\times\{j\}$ to $j\in [n]$ ).

  Consider a subset of vertices $S \subseteq \Vtx(H)$ and let $S' \subseteq \Vtx(H')$ be the
  corresponding contracted set in $H'$. Since each edge in $H$ corresponds to an edge in $H'$
  (counting multiplicities), we have
\[
\abs{E(S)} ~\geq~ \frac{\abs{S}}{k-1-\eta} 
\quad  \Rightarrow \quad 
\abs{E(S')} ~\geq~ \frac{\abs{S}}{k-1-\eta} ~\geq~ \frac{\abs{S'}}{k-1-\eta} \mper
\]
Thus, it suffices to show that $H'$ is $(\tau',\eta)$-sparse for $\tau' = \tau \cdot n_0$, since
$\abs{S'} \leq \tau \cdot N = (\tau \cdot n_0) \cdot n$.
Given any multiset in $[n]^k$, the probability that it corresponds to an edge in $H'$ is at most 
$(k!) \cdot (m/n^k) $. Thus, the probability that there exists a set $T$ of size at most $\tau'
\cdot n$, containing at least $\abs{T}/(k-1-\eta)$ edges (counting multiplicities) is at most
\[
\sum_{h=1}^{\tau' \cdot n} \binom{n}{h} \cdot \binom{h^k}{r} \cdot \inparen{\frac{k! \cdot
    m}{n^k}}^r \mcom
\]
where $r = \frac{h}{k-1-\eta}$. Note that  we also need to consider $h=1$ as edges in $H'$
correspond to multisets of size $k$, and so may not have all distinct vertices. Simplifying the
above using $\binom{a}{b} \leq \inparen{\frac{a \cdot e}{b}}^b$ and $k! \leq k^k$ gives
\begin{align*}
\sum_{h=1}^{\tau' \cdot n} \binom{n}{h} \cdot \binom{h^k}{r} \cdot \inparen{\frac{k! \cdot
    m}{n^k}}^r
&~\leq~
\sum_{h=1}^{\tau' \cdot n} \inparen{\frac{n \cdot e}{h}}^h \cdot \inparen{\frac{h^k \cdot e}{r}}^r \cdot \inparen{\frac{k^k \cdot
    m}{n^k}}^r \\
&~=~
\sum_{h=1}^{\tau' \cdot n} \inparen{e^{k-\eta} \cdot (k-1-\eta) \cdot k^k \cdot \gamma \cdot
  \inparen{\frac{h}{n}}^{\eta}}^{h/(k-1-\eta)} \\
&~\leq~
\sum_{h=1}^{\tau' \cdot n} \inparen{k^{3k} \cdot \gamma \cdot
  \inparen{\frac{h}{n}}^{\eta}}^{h/(k-1-\eta)} \\
\end{align*}
Let $\theta = \eta/(2k)$. We divide the above summation in two parts and first consider
\begin{align*}
\sum_{h=n^{\theta}}^{\tau' \cdot n} \inparen{k^{3k} \cdot \gamma \cdot
  \inparen{\frac{h}{n}}^{\eta}}^{h/(k-1-\eta)}
&~\leq~
\sum_{h=n^{\theta}}^{\tau' \cdot n} \inparen{k^{3k} \cdot \gamma \cdot
  \inparen{\tau'}^{\eta}}^{n^{\theta}/(k-1-\eta)} \\
&~\leq~
2 \cdot \exp\inparen{- \frac{n^{\theta}}{k}} \\
&~\leq~
2 \cdot \frac{k}{n^{\theta}}
\mcom
\end{align*}
for $\tau' \leq \inparen{e \cdot k^{3k} \cdot \gamma}^{-1/\eta}$.
Considering the first half of the summation, we get
\begin{align*}
\sum_{h=1}^{n^{\theta}} \inparen{k^{3k} \cdot \gamma \cdot
  \inparen{\frac{h}{n}}^{\eta}}^{h/(k-1-\eta)}
&~\leq~
n^{\theta} \cdot \inparen{\frac{k^{3k} \cdot \gamma}{n^{(1-\theta) \cdot \eta}}}^{1/k} \\
&~\leq~
\inparen{\frac{k^{3k} \cdot \gamma}{n^{\eta/4}}}^{1/k} 
~=~
k^3 \cdot \gamma^{1/k} \cdot n^{-\theta/2} \mper
\end{align*}
Combining the two bounds gives that the probability is at most
~$3k^3 \cdot \gamma^{1/k} \cdot n^{-\theta/2}$, which equals the desired bound.
\end{proof}

\biasclaim*

\begin{proof}
Let $r_j = \sgn(\zeta_j) \cdot t_{j}$ be the desired bias of the $j^{th}$ coordinate. Then,
$\abs{\zeta(j) - r_j} \leq \eps$ for all $j \in [k]$ 
We construct a sequence of distributions $\nu_0,\ldots,\nu_k$ such that $\nu_0 = \nu$ and
$\nu_k = \nu'$. In $\bnu_j$, the biases are $(r_1,\ldots,r_j,\zeta_{j+1},\ldots,\zeta_k)$.

The biases in $\nu_0$ satisfy the above by definition. 
We obtain $\bnu_{j}$ from $\bnu_{j-1}$ as, 
\[
\nu_{j} = (1-\tau_j) \cdot \nu_{j-1} + \tau_j \cdot D_j \mcom
\]
where $D_j$ is the distribution in which all bits, except for the $j^{th}$ one, are set
independently according to their biases in $\bnu_{j-1}$. For the $j^{th}$ bit, we set it to
$\sgn(r_j-\zeta_j)$ (if $r_j-\zeta(j) = 0$, we can simply proceed with $\bnu_j = \bnu_{j-1}$). 
The biases for all except for the $j^{th}$ bit are unchanged. For the $j^{th}$ bit, the bias now
becomes $r_j$ if
\[
r_j = (1-\tau_j) \cdot \zeta_j + \tau_j \cdot \sgn(r_j-\zeta_j)
~\Longrightarrow~
\tau_j \cdot (\sgn(r_j-\zeta_j) - r_j) = (1-\tau_j) \cdot (r_j - \zeta_j) \mper
\]
Since $\zeta \in \calC_{\delta}(f)$, we know that 
$\abs{\sgn(r_j-\zeta(j)) - r_j} \geq \delta/2$. Also, $\abs{r_j - \zeta(j))} \leq \eps$ by
assumption. Thus, we can choose $\tau_j = O(\eps/\delta)$ which gives that $\norm{\bnu_{j} -
  \bnu_{j-1}}_1 = O(\eps/\delta)$. The final bound then follows by triangle inequality. 
\end{proof}


%
\end{document}